\newcommand{\Rmnum}[1]{\expandafter\@slowromancap\romannumeral #1@}
\newcommand{\indep}{\rotatebox[origin=c]{90}{$\models$}}
\newtheorem{prop}{Proposition}
\newtheorem{lem}{Lemma}
\newtheorem{thm}{Theorem}
\newtheorem{corol}{Corollary}
\newtheorem{remk}{Remark}
\theoremstyle{definition}
\newtheorem{assum}{Assumption}
\title{Model-Free Approaches to Discern Non-Stationary Microstructure Noise and Time-Varying Liquidity in High-Frequency Data\footnote{
The authors thank the referees for valuable suggestions and insights which make significant improvements. The authors also benefited much from discussions with Yingying Li, Xinhua Zheng, Yoann Potiron, Markus Bibinger. This research was funded by National Science Foundation Grant DMS 14-07812. All comments are gratefully welcomed.}}
\author{Richard Y. Chen\footnote{Richard Y. Chen is a Ph.D. student in Statistics at the University of Chicago. Email: \href{mailto:yrchen@uchicago.edu}{yrchen@uchicago.edu}.}, Per A. Mykland\footnote{Per A. Mykland is the Robert M. Hutchins Distinguished Service Professor, Department of Statistics, the University of Chicago. Email: \href{mailto:mykland@pascal.uchicago.edu}{mykland@pascal.uchicago.edu}.}}
\date{May, 2017}
\begin{document}
\maketitle

\begin{abstract}
In this paper, we provide non-parametric statistical tools to test stationarity of microstructure noise in general hidden It\^o semimartingales, and discuss how to measure liquidity risk using high-frequency financial data. In particular, we investigate the impact of non-stationary microstructure noise on some volatility estimators, and design three complementary tests by exploiting edge effects, information aggregation of local estimates and high-frequency asymptotic approximation. The asymptotic distributions of these tests are available under both stationary and non-stationary assumptions, thereby enable us to conservatively control type-I errors and meanwhile ensure the proposed tests enjoy the asymptotically optimal statistical power. Besides, it also enables us to empirically measure aggregate liquidity risks by these test statistics. As byproducts, functional dependence and endogenous microstructure noise are briefly discussed. Simulation with a realistic configuration corroborates our theoretical results, and our empirical study indicates the prevalence of non-stationary microstructure noise in New York Stock Exchange.
\bigskip

\noindent\textbf{Keywords}: Microstructure, High-Frequency Tests, Statistical Powers, Stable Central Limit Theorems, Non-Stationarity, Volatility, Liquidity
\vspace{3mm}

\noindent\textbf{JEL classification}: C12, C13, C14, C58
\end{abstract}

\newpage
\section{Introduction}
The introduction of high-tech trading mechanisms into markets, for example, electronic communication networks (ECNs) and other electronic trading platforms, provides an opportunity for speculators and market makers to take advantage of speed in trading and market making, and this technological innovation also brings new regulatory challenges. The subsequent high-frequency trading results in a huge amount of high-frequency transaction and quotation data, which in particular opens two potential gates for research in theoretical and empirical asset pricing: one is estimation methodology using high-frequency data, since practitioners and researchers can get access to the big data and estimate variables of interest with greater accuracy; the other is a ``frog eyes' view'' on market microstructure, since low-latency data offers a valuable chance to investigate trading behaviors with a higher resolution than ever before.

Correspondingly, this paper's contributions to the literature are twofold: \textit{i)} one is stationarity test of microstructure noise, we study the estimation problem when using high-frequency data with non-stationary noises, and then test non-stationarity in microstructure noise via several complementary model-free approaches; \textit{ii)} the other one is on empirical market microstructure, since the microstructure noise can capture some information about market quality and liquidity, we estimate noise levels as measures of time-varying bid-ask spreads, risk aversions of market participants, etc., and detect short-term liquidity variations.

\subsection{Literature review}
The high-frequency finance practice motivates two clearly distinct and closely related researches:

One is more accurate estimation in financial econometrics, to name a few but not all, the estimation of integrated volatilities, quadratic covariances, the activities of jumps, the leverage effects, the volatility of volatility, the lead-lag effect. This stream of research started from \cite{j94,jp98} from the perspective of stochastic calculus, and \cite{fn96,e00,z01,abdl01,bs02} in the context of econometrics. Now, the high-frequency financial econometrics has already developed into a considerably influential research field with numerous prominent scholars and there are already monographs on this area: \cite{js03,jp12} developed probabilistic tools for high-frequency financial data analysis, \cite{aj14} provided an excellent overview in econometric literature, \cite{h12} is a good account from a financial standpoint. There are also academic chapters concisely reviewing high-frequency financial econometrics: \cite{re10,mz12,j12}.

The other one is the study of market microstructure. Low-latency data allows financial practitioners and researchers to look at the financial markets at a higher resolution level, for example, one can know the bid/ask dynamics within each second, one can also know the order flow through the limit order book. The market microstructure theory studies how the latent demand and latent supply of market participants are ultimately translated into prices by studying the specific market structure in detail. The cornerstone papers include \cite{gm85,k85}, both of them are using (pesudo)\footnote{To say it ``pseudo'' because in the model considered in \cite{k85}, the market maker does not aim to maximize their utility, instead his or her objective is only to guarantee market clearing.} game-theoretical argument in information economics. More comprehensive books include \cite{o95,h07}. However, when looking closely at the transaction or quotation prices, one can find that the price is no longer an It\^o semimartingale, not even random walk. For this reason, according to market microstructure theory \citep{o03}, the semimartingale model in classical asset pricing theory \citep{hp81,ds94} is not a photographic depiction of the real prices of financial assets, yet it is still a fairly good approximation to asset prices when the sampling frequency is relatively low, and that is the reason the literature suggests using at most 5-minute subsampling.

Some estimation methods for integrated volatility using noisy high-frequency financial data have already been well established: \textit{i)} \cite{zma05} found the first consistent estimator (two-time scale realized volatility) using subsampling and averaging in the presence of i.i.d. market microstructure noise and \cite{z06} gave a multi-scale version with the optimal rate of convergence $n^{\frac{1}{4}}$, \cite{lm07} discussed the robustness of TSRV to noise assumptions in general, \cite{kl08} generalized the TSRV to the model with endogenous and diurnal noise and put forward a modified version of TSRV which we shall use in this paper. Later, \cite{amz11} generalized the model to allowing correlated noises under stationary and strong-mixing conditions; \textit{ii)} \cite{b08} provided a kernel-based estimator under the model in which the noise process is temporarily dependent and stationary and possibly linearly correlated with the latent It{\^o} process, their inference is also robust to endogenous spacing; \textit{iii)} \cite{j09} designed a generalized version of the pre-averaging approach \citep{pv09}, under a Markovian noise model which allows arbitrary fashion of noise but without correlation between noise and the latent process; \textit{iv)} Motivated by the likelihood method from \cite{amz05}, \cite{x10} established quasi-maximum likelihood method (QMLE) in the estimation of integrated volatility; \textit{v)} \cite{bhmr14} developed the local generalized method of moments to estimate quadratic covariation using noisy high-frequency data.

Many estimators of integrated volatilities using high-frequency noisy data were developed under the assumption that the microstructure noise is stationary.
However, the literature in empirical finance, such as \cite{ap88,h93,ab97,gjf99}, has already shown in 1990s that markets exhibit systematic intra-day patterns. Therefore, allowing heteroskedasticity and non-stationary in microstructure noise in integrated volatility estimation is of particular importance in application. Particularly, \cite{kl08} used a parametric model to describe the diurnal pattern in microstructure noise. \cite{ay09} used the estimates of noise variance in high-frequency data to measure the market liquidity from June 1996 to December 2005. There is other related research in the literature, \cite{acd09} studied the changes in microstructure noise due to sampling frequency, \cite{bry13} derived the optimal sampling frequency in terms of finite-sample forecast mean squared error in linear forecast model with non-stationary microstructure noise.


\subsection{Structure of this paper}
Section \ref{model} describes our model and assumptions; after showing non-stationarity effect on the two scale estimator, complementary statistical tests are designed to detect microstructure noise stationarity based on high-frequency asymptotics, the asymptotic distributions under both null and alternative hypotheses and their implications for testing are shown in section \ref{Test-1}, \ref{test-23}, \ref{discuss}; section \ref{alternative} introduces an aggregate measure of liquidity risk and studies its estimation problem; relation between volatility and variance of microstructure noise, as well as endogenous microstructure noise are discussed in section \ref{modelpp}; section \ref{simul} and \ref{empirical} contain our simulation and empirical analysis; section \ref{conclu} concludes. Some proofs are given in the Appendix.

\section{The model and assumptions}\label{model}
\subsection{Model setup}\label{setup}
Firstly, we have a filtered probability space $\left(\Omega^{(0)},\mathcal{F}^{(0)},\left\{\mathcal{F}^{(0)}_t\right\}_{t\ge 0},\mathbb{P}^{(0)}\right)$ on which a latent It\^o semimartingale $\{X_t\}_{t\ge 0}$ is adapted, and can be described by
\begin{equation}\label{X}
X_t=X_0+\int_0^tb_s\mathrm{d}s+\int_0^t\sigma_s\mathrm{d}W_s+J_t
\end{equation}
where $\{b_t\}_{t\ge0}$ is the drift, $\sigma^2_t$ is the spot volatility in financial terminology (for example, its dynamics can be described by Heston model \citep{he93}); $\{W_t\}_{t\ge 0}$ is a 1-dimensional Wiener process; $J_t$ is a jump process which is described in subsection \ref{assumptions}.

Secondly, we have another filtered probability space $\left(\Omega^{(1)},\mathcal{F}^{(1)},\left\{\mathcal{F}^{(1)}_t\right\}_{t\ge 0},\mathbb{P}^{(1)}\right)$ on which the observable process $\{Y_t\}_{t\ge 0}$ is adapted. Then, we can define the market microstructure noise process $\{e_t\}_{t\ge 0}$\footnote{Although the noise is immaterial outside the observation times, it does not harm to assume there exist such a noise process in continuous time.}, as the difference between the latent and observable processes:
\begin{equation}\label{error}
e_t\equiv Y_t-X_t
\end{equation}
besides we define
\begin{equation}\label{Z}
Z_t\equiv E_{\mathbb{P}^{(1)}}(Y_t|\mathcal{F}^{(0)})=X_t+E_{\mathbb{P}^{(1)}}(e_t|\mathcal{F}^{(0)})
\end{equation}
we call $\{Z_t\}_{t\ge 0}$ the ``\textit{estimable latent process}'' because we can indeed estimate it from the actual observations via, for example, pre-averaging \citep{pv09,j09,jpv10,mz16a}. It is natural to assume the process $\{Z_t\}_{t\ge 0}$ is an It\^o semimartingale, for example, if we assume $Z_t=f(X_t)$ for some $f(\cdot)\in\mathcal{C}^2(\mathbb{R})$ \citep{lm07} then $\{Z_t\}_{t\ge0}$ is an It\^o semimartingale\footnote{The definition (\ref{Z}) suggests the possibility of our inability to recover the latent process $\{X_t\}$  from the noisy observations $\{Y_t\}$, since $Z_t$ does not necessarily equal to $X_t$. More strikingly, as later discussed, this allows correlation between the microstructure noise and the latent process.}. Based upon $\{Z_t\}_{t\ge0}$, we can define a noise process $\{\epsilon_t\}_{t\ge 0}$ of another form, which is not necessarily the difference between the observable process $\{Y_t\}_{t\ge 0}$ and the latent process $\{X_t\}_{t\ge 0}$, instead defined theoretically via
\begin{equation}\label{epsilon}
\epsilon_t\equiv Y_t-Z_t=e_t-E_{\mathbb{P}^{(1)}}(e_t|\mathcal{F}^{(0)})
\end{equation}
we call $\{\epsilon_t\}_{t\ge 0}$ the ``\textit{distinguishable noise}'', which can be disentangled from the estimable latent process $\{Z_t\}_{t\ge 0}$ \citep{br06b}.

Thirdly, we have a Markov kernel to provide a connection between the processes $\{X_t\}_{t\ge 0}$ and $\{Y_t\}_{t\ge 0}$ 
, namely $Q_t(\omega^{(0)},\mathrm{d}y): (\Omega^{(0)},\mathcal{F}^{(0)})\times(\mathbb{R},\mathcal{B}(\mathbb{R})) \longmapsto[0,1]$, i.e., conditional on the whole latent process $X$, there exists a probability measure on the space $(\mathbb{R},\mathcal{B}(\mathbb{R}))$.

Thus, all the relevant process, either latent or observable, 
can be defined on the extended filtered probability space $\left(\Omega,\mathcal{F},\{\mathcal{F}_t\}_{t\ge 0},\mathbb{P}\right)$\footnote{This model combines the features of the models in \cite{lm07} and \cite{jp12} (or \cite{j09,jpv10}), and is endowed with a additional feature that $\epsilon_{t_i}$'s are not defined as the differences between the observations $Y_{t_i}$'s and the latent values $X_{t_i}$'s but the differences between the observations $Y_{t_i}$'s and the values we can actually recover which are $Z_{t_i}$'s.} where
\begin{equation}
\left\{\begin{array}{ll}\Omega\equiv \Omega^{(0)}\times\Omega^{(1)},\text{  } \mathcal{F}\equiv \mathcal{F}^{(0)}\otimes\mathcal{F}^{(1)}\\
\mathcal{F}_t\equiv \bigcap_{s>t}\mathcal{F}_s^{(0)}\otimes\mathcal{F}_s^{(1)}\\
\mathbb{P}(\mathrm{d}\omega^{(0)},\mathrm{d}\omega^{(1)})\equiv \mathbb{P}^{(0)}(\mathrm{d}\omega^{(0)})\cdot \otimes_{t\ge 0}Q_t(\omega^{(0)},\mathrm{d}y_t(\omega^{(1)}))\end{array}\right.
\end{equation}

Moreover, define $$g_t(\omega^{(0)})=\int_{\mathbb{R}}\left|y-Z_t(\omega^{(0)})\right|^2Q_t(\omega^{(0)},\mathrm{d}y)$$
i.e., $g_{t_i}= E(\epsilon^2_{t_i}|\mathcal{F}^{(0)})$. By this definition, $\{g_t\}_{t\ge0}$ is also a stochastic process. Note that $g_t$ could depend on more than one latent random variables, i.e., it is possible that $g_t(\omega^{(0)})=g_t(X_t(\omega^{(0)}),Z_t(\omega^{(0)}),\sigma^2_t(\omega^{(0)}),\cdots)$ for each $t$. 
In Section \ref{test-23} and Section \ref{alternative} regarding some behaviors in the presence of non-stationary microstructure noise, we pose specific restrictions on the process $\{g_t\}_{t\ge 0}$, and let it be an It\^o diffusion, and use asymptotic properties to show asymptotically optimal power and measure liquidity in high-frequency data.

\subsection{Observational notation}\label{ObsNotation}
This subsection can be skipped at the first reading. Please be advised to go back to this subsection when encounter the observational notation in later sections.

Suppose we focus on a finite interval $[0,T]$ on which ultra-high frequency data is recorded. Define $\mathcal{G}$ to be the finest time grid whence all the observations were obtained. Suppose we have $n$ observations after the reference starting point 0, then $\mathcal{G}$ can be written as
\begin{equation}\label{G}
    \mathcal{G}\equiv\{t_0=0,t_1,t_2,\cdots,t_n\}
\end{equation}
We sometimes do sparse sampling, typically start from the $k$-th observation and take one sample from every $K$ observations. Formally, we define sub-grids $\mathcal{G}^{(K,k)}$'s indexed by $k=0,\cdots,K-1$ for each $K\in\mathbb{N}_+$: 
\begin{equation}\label{G_K}
    \mathcal{G}^{(k)}=\mathcal{G}^{(K,k)}\equiv\left\{t_k,t_{k+K},t_{k+2K},\cdots,t_{k+\left(\left\lfloor n/K\right\rfloor-1\right)K}\right\}, \text{ where } k=0,1,\cdots,K-1
\end{equation}
To analyze the edge effect\footnote{The edge effect is a pervasive phenomenon in non-parametric high-frequency econometrics. Verbally stated, edge effect is ``information phasing in and phasing out at the edges of time intervals'', which is caused by inhomogeneous usage of data. Although undesirable, this feature is inevitable in estimation.} and the modified TSRV, we need more observational notation:
\begin{equation}\label{subgrid}
	\left.\begin{array}{ll}
		\mathcal{G}'^{(k)}=\mathcal{G}'^{(K,k)}&\equiv\{\min{\mathcal{G}^{(K,k)}}+1, \min{\mathcal{G}^{(K,k)}}+2, \cdots, \max{\mathcal{G}^{(K,k)}}\}\\
		\mathcal{G}''^{(k)}=\mathcal{G}''^{(K,k)}&\equiv\{\min{\mathcal{G}^{(K,k)}}+1, \min{\mathcal{G}^{(K,k)}}+2, \cdots, \max{\mathcal{G}^{(K,k)}}-1\}\\
		\mathcal{G}^{(\min)}=\mathcal{G}^{(K,\min)}&\equiv\{\min{\mathcal{G}}^{(K,1)},\min{\mathcal{G}}^{(K,2)},\cdots,\min{\mathcal{G}}^{(K,K)}\}\\
		\mathcal{G}^{(\min)}=\mathcal{G}^{(K,\max)}&\equiv\{\max{\mathcal{G}}^{(K,1)},\max{\mathcal{G}}^{(K,2)},\cdots,\max{\mathcal{G}}^{(K,K)}\}
	\end{array}\right\}
\end{equation}
thus, we have $|\mathcal{G}^{(\min)}|=|\mathcal{G}^{(\max)}|=K$ and the following relationships
\begin{equation*}
\bigcup_{k=1}^K\mathcal{G}^{(k)}=\mathcal{G}^{(\min)}\bigcup\left(\bigcup_{k=1}^K\mathcal{G}'^{(k)}\right)=\mathcal{G}^{(\min)}\bigcup\left(\bigcup_{k=1}^K\mathcal{G}''^{(k)}\right)\bigcup\mathcal{G}^{(\max)}
\end{equation*}
sometimes, we will also denote by $\mathcal{H}_i$ the $i$-th time point in a given grid $\mathcal{H}$, for example, $\mathcal{G}^{(\min)}_i=\min{\mathcal{G}}^{(K,i)}=\min{\mathcal{G}}^{(i)}$, $\mathcal{G}'^{(k)}_i=\mathcal{G}''^{(k)}_i$.\footnote{The time grids defined in (\ref{G_K}) and (\ref{subgrid}) depend on the tuning parameter $K$ which should be more properly written as $K_n$, however, the dependence on $n$ will be suppressed in the observational notation, for the sake of readability and notational ease. Nonetheless, it is important to keep this implicit dependence on $n$ in mind.}


In order to define some of our tests in Section \ref{test-23}, we need to introduce some shrinking moving windows and local sampling grids. Later, we will partition the fixed time interval $[0,\mathcal{T}]$ (in application, $\mathcal{T}$ could be 5 business days or some longer periods) into $r_n$ (depends on $n$ and $r_n\to\infty$) sub-intervals $(T_i,T_{i+1}]$'s, such that each $(T_{i-1},T_i]$ contains $K_n$ observations, i.e. $T_i=t_{iK_n}$, $0=T_0\le T_1\le T_2\le \cdots \le T_{r_n}$ and $r_n=\left\lfloor n/K_n\right\rfloor$. We also let $\mathcal{S}_i$ denote the shrinking sampling grid $\{t_{(i-1)K_n},\cdots,t_{iK_n}\}$ over $[T_{i-1},T_i]$, i.e., $|\mathcal{S}_i|=K_n$, $\mathcal{S}_i=\mathcal{G}\cap[T_{i-1},T_i]$, $\bigcup_{i=1}^{r_n}\mathcal{S}_i=\{t_i\in\mathcal{G}:t_i\le T_{r_n}\}$.

\subsection{Assumptions}\label{assumptions}
Beyond the model setup in subsection \ref{setup}, we have to make the following \textit{identification assumption} in order to achieve identifiability and estimability:
\begin{equation}\label{addass}
\mathrm{d}Z_t\equiv \mathrm{d}X_t=b_t\,\mathrm{d}t+\sigma_t\,\mathrm{d}W_t+J_t
\end{equation}
otherwise all the estimation methods will break down \citep{j09}. Note that under the identification assumption (\ref{addass}), we have $\{e_t\}_{t\ge0}$ and $\{\epsilon_t\}_{t\ge0}$ are identical, and there is no correlation between noise and the latent process.

\noindent As a sum-up, the following assumptions will be needed for various results:
\begin{assum}\label{diffusion}
	\textbf{Diffusion part of It\^o semimartingale.} The underlying model is (\ref{X}), $\{b_t\}_{t\ge0}$, $\{\sigma_t\}_{t\ge 0}$ and $\{W_t\}_{t\ge 0}$ are adapted, $\{b_t\}_{t\ge0}$ and $\{\sigma_t\}_{t\ge 0}$ are c\`adl\`ag processes and locally bounded.
\end{assum}
\begin{assum}\label{jump}
	\textbf{Jumps of It\^o semimartingale.} $J_t=\int_0^t\int_{\mathbb{R}}x\mathds{1}_{\{|x|\le1\}}(\mu-\nu)(\mathrm{d}s,\mathrm{d}x)+\int_0^t\int_{\mathbb{R}}x\mathds{1}_{\{|x|>1\}}\mu(\mathrm{d}s,\mathrm{d}x)$ with $\mu$ being a Poisson random measure on $\mathbb{R}_+\times \mathbb{R}$ and $\nu$ being the predictable compensator of $\mu$ in the sense that $(\mu-\nu)((0,t],A)$ is a local martingale for $\forall t>0,\forall A\in \mathcal{B}(\mathbb{R})$. One could write $\nu(\mathrm{d}t,\mathrm{d}x)=\mathrm{d}t\otimes\lambda(\mathrm{d}x)$ where $\lambda$ is a $\sigma$-finite measure on $\mathbb{R}$.
\end{assum}
\begin{assum}\label{finite.jump}
	\textbf{Finite jumps of It\^o semimartingale.} On top of \textbf{Assumption \ref{jump}}, assume $\exists$ a function $\Gamma$ on $\mathbb{R}$ such that $\int_{\mathbb{R}}\Gamma(x)\lambda(\mathrm{d}x)<\infty$ where $\Gamma\ge1$.
\end{assum}
\begin{assum}\label{identification}
	\textbf{Identifiable hidden It\^o semimartingale.} The underlying process is (\ref{X}); and we have the identifiability assumption (\ref{addass}).
\end{assum}
\begin{assum}\label{independence}
	\textbf{Conditional independence}. Conditional on the latent variable(s), the observations $Y_{t_i}$'s at different times are independent, i.e., $Y_{t_i} \indep Y_{t_j}$ for $i\ne j$. This assumption simplifies the proof substantially\footnote{An interpretation of this assumption is that the market microstructure effects occurred at different times are independent if the market participants know the latent efficient prices.}.
\end{assum}
\begin{assum}\label{bounded}
	\textbf{Locally boundedness of microstructure effect.} $\forall l>0$, and $\forall\alpha>0$, $\exists M_{(\alpha,l)}$, such that $E\left(|\epsilon_{t_i}|^{\alpha}|\mathcal{F}^{(0)}\right)\le M_{(\alpha,l)}$, when $X_{t_i}\in[-l,l]$, $\sigma^2_{t_i}\in(0,l]$.
\end{assum}
\begin{assum}\label{hfasym}
	\textbf{Possibly irregular observational grid with shrinking mesh}. The sampling times can be irregular, but independent of the latent process. The Mesh of the grid $\mathcal{G}$ goes to zero, more specifically, $\max_{1\le i\le n}\Delta t_i=O_p\left(\frac{1}{n}\right)$.
\end{assum}

Based on some of these assumptions, we provide results involving various modes of stochastic convergences. It is necessary to clarify our notation for these convergence modes: $\overset{\mathbb{P}}{\longrightarrow}$ means convergence in probability, $\overset{\mathcal{L}}{\longrightarrow}$ means convergence in law (convergence in distribution, weak convergence), $\overset{\mathcal{L}-s}{\longrightarrow}$ means stable convergence in law\footnote{The concept `` stable convergence in law'' may appear unfamiliar for some readers, please refer to \cite{mz12} or chapter 2 in \cite{jp12} for definition.}.





\section{Testing stationarity/non-stationarity: the first test}\label{Test-1}
In this article, we are considering testing the null hypothesis that the market microstructure noise is stationary:
\begin{equation*}
H_0: \{\epsilon_t\}_{t\ge0} \text{ is stationary} \longleftrightarrow H_1: \{\epsilon_t\}_{t\ge0} \text{ is non-stationary}
\end{equation*}
and we concern the following questions:
\begin{itemize}
	\item Could we find any \textit{non-parametric test} to tell the stationarity of microstructure noise?
	\item Is any stationarity test valid in terms of controlling type-I error?
	\item Is it asymptotically optimal in that its statistical power is the largest in asymptotics?
\end{itemize}

\subsection{Prelude: non-stationarity and its remedy in estimation}\label{endeff}
In this subsection, we divert our focus to the estimation of integrated volatility (or continuous quadratic variation in the terminology of stochastic calculus) using high-frequency data contaminated by (possibly non-stationary) market microstructure noise. Our first test statistic was inspired by this.

Two-time scale realized volatility estimator (TSRV) \citep{zma05} is the first consistent estimator of integrated volatility using noisy high frequency financial data. In this article, we define $[Y,Y]_{\mathcal{H}}$ as the realized variance of process $\{Y_t\}$ computed on a given sampling grid $\mathcal{H}$. The TSRV is defined as follows:
\begin{equation}\label{TSRV.def}
\widehat{\langle X,X\rangle}^{(TSRV,K_n)}_T\equiv[Y,Y]^{(avg,K_n)}_T-\frac{n-K_n+1}{nK_n}[Y,Y]_{\mathcal{G}}
\end{equation}
where, according to the notation introduced in subsection \ref{ObsNotation},
\begin{equation*}
\begin{array}{ll}
      [Y,Y]^{(avg,K_n)}&=\frac{1}{K_n}\sum^{K_n-1}_{k=0}[Y,Y]_{\mathcal{G}^{(k)}}\\
      {[Y,Y]}_{\mathcal{G}}&=\sum_{i=1}^n(Y_{t_i}-Y_{t_{i-1}})^2\\
      {[Y,Y]}_{\mathcal{G}^{(k)}}&=\sum_{t_i\in\mathcal{G}'^{(k)}}(Y_{t_i}-Y_{t_{i-K}})^2\text{, for } k=0,\cdots,K-1
\end{array}
\end{equation*}
The optimal choice for the tuning parameter is $K_n=O(n^{\frac{2}{3}})$\footnote{
A caveat in application is to choose $K_n$ such that $n-\lfloor n/K_n\rfloor K$ is sufficiently small, in order to reduce the edge effect.},
which results in the best possible order of TSRV. In the identical fashion, we can define $[\epsilon,\epsilon]_{\mathcal{G}}$, $[\epsilon,\epsilon]^{(avg,K_n)}_T$ and $[Z,Z]_T^{(avg,K_n)}$.

The intuition behind the design of $\widehat{\langle X,X\rangle}_T^{(TSRV,K_n)}$ is sub-sampling and averaging: each $[Y,Y]_{\mathcal{G}^{(k)}}$ is computed on a sparser grid hence mitigate the microstructure effect, hence their average $[Y,Y]^{(avg,K_n)}_T$ should be more closer to $\langle X,X\rangle_T$; the second term $[Y,Y]_{\mathcal{G}}$ is a good proxy to the noise variance, hence it is to offset the bias due to the noise in $[Y,Y]^{(avg,K_n)}_T$.


The TSRV was originally designed under the setting where microstructure noises are stationary; however, under non-stationary microstructure noises, TSRV has a bias term produced by edge effect due to the following lemma: 

\begin{lem}\label{lem_edge}
Under the \textbf{Assumption \ref{diffusion}, \ref{finite.jump}, \ref{identification}, \ref{independence}, \ref{bounded}, \ref{hfasym}}, we have
\begin{equation}\label{edge_avg}
[Y,Y]_T^{(avg,K_n)}-[Z,Z]^{(avg,K_n)}_T=\underbrace{\frac{2}{K_n}\sum^K_{k=1}\sum_{t_i\in\mathcal{G}''^{(k)}}g_{t_i}+\frac{1}{K_n}\sum_{t_i\in\mathcal{G}^{(\min)}}g_{t_i}+\frac{1}{K_n}\sum_{t_i\in\mathcal{G}^{(\max)}}g_{t_i}}_{\text{bias in } [Y,Y]_T^{(avg,K)} \text{ due to non-stationary noise}}+o_p(1)
\end{equation}
\end{lem}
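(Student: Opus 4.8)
The plan is to substitute the identity $Y_t=Z_t+\epsilon_t$ into each subsampled increment and expand the square. Writing $\Delta Z=Z_{t_i}-Z_{t_{i-K}}$ and $\Delta\epsilon=\epsilon_{t_i}-\epsilon_{t_{i-K}}$, we have $(Y_{t_i}-Y_{t_{i-K}})^2=(\Delta Z)^2+2\,\Delta Z\,\Delta\epsilon+(\Delta\epsilon)^2$. Averaging over $k$ and over the grids $\mathcal{G}'^{(k)}$, the first piece reproduces exactly $[Z,Z]_T^{(avg,K_n)}$ and cancels against the left-hand side, so it remains to show that the averaged cross term is $o_p(1)$ and that the averaged $(\Delta\epsilon)^2$ term equals the three $g$-sums on the right up to $o_p(1)$.

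For the cross term I would condition on $\mathcal{F}^{(0)}$. Since $\epsilon_t=e_t-E_{\mathbb{P}^{(1)}}(e_t|\mathcal{F}^{(0)})$ has $E(\epsilon_t|\mathcal{F}^{(0)})=0$ by construction and the $\epsilon_{t_i}$ are conditionally independent (Assumption \ref{independence}), the averaged cross term has conditional mean zero. Collecting the coefficient of each $\epsilon_{t_j}$, that coefficient is a sum of at most two increments of $Z$; under Assumptions \ref{diffusion}--\ref{finite.jump}, and using the shrinking mesh of Assumption \ref{hfasym}, each increment over a span of $K_n$ ticks covers time $O_p(K_n/n)$ and is therefore $O_p(\sqrt{K_n/n})$, so its square is $O_p(K_n/n)$. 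With $\sim n$ indices $j$, each carrying $g_{t_j}=O_p(1)$ (Assumption \ref{bounded}), the conditional variance is $\tfrac{4}{K_n^2}\cdot O_p(n\cdot\tfrac{K_n}{n})=O_p(1/K_n)\to0$, whence the cross term is $o_p(1)$.

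For the noise term I would expand $(\Delta\epsilon)^2=\epsilon_{t_i}^2+\epsilon_{t_{i-K}}^2-2\epsilon_{t_i}\epsilon_{t_{i-K}}$. By conditional independence the products $\epsilon_{t_i}\epsilon_{t_{i-K}}$ are conditionally centered and pairwise uncorrelated, so after dividing by $K_n$ they contribute $o_p(1)$ provided $n/K_n^2\to0$ (satisfied under the optimal scaling $K_n\asymp n^{2/3}$). Next I would replace each $\epsilon_{t_i}^2$ by $g_{t_i}=E(\epsilon_{t_i}^2|\mathcal{F}^{(0)})$; the centered fluctuations $\epsilon_{t_i}^2-g_{t_i}$ are conditionally independent with $E((\epsilon_{t_i}^2-g_{t_i})^2|\mathcal{F}^{(0)})\le E(\epsilon_{t_i}^4|\mathcal{F}^{(0)})=O_p(1)$ (Assumption \ref{bounded} with $\alpha=4$), so their $K_n^{-1}$-scaled sum again has conditional variance $O_p(n/K_n^2)\to0$. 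What survives is $\tfrac{1}{K_n}\sum_k\sum_{t_i\in\mathcal{G}'^{(k)}}(g_{t_i}+g_{t_{i-K}})$, and a bookkeeping of multiplicities finishes the proof: within each sparse grid $\mathcal{G}^{(k)}$ every interior point occurs once as a right endpoint and once as a left endpoint (coefficient $2$, i.e.\ the set $\mathcal{G}''^{(k)}$), while $\min\mathcal{G}^{(k)}$ and $\max\mathcal{G}^{(k)}$ occur only once; summing over $k$ groups the single-count endpoints into $\mathcal{G}^{(\min)}$ and $\mathcal{G}^{(\max)}$, reproducing exactly the three terms in (\ref{edge_avg}).

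The main obstacle is making the two $o_p(1)$ claims rigorous rather than heuristic. The conditional-variance bounds rely on the local moment bound of Assumption \ref{bounded}, which holds only on the event $\{X_{t_i}\in[-l,l],\,\sigma^2_{t_i}\in(0,l]\}$; passing from these local bounds to genuine $o_p(1)$ statements requires the standard localization argument (stopping $X,\sigma^2,b$ at the exit time of a compact set and letting $l\to\infty$), together with care that the finitely many jumps allowed by Assumption \ref{finite.jump} inflate only $O_p(1)$-many summands and hence contribute $O_p(K_n^{-2})$. Apart from localization, the remaining work is the combinatorial endpoint count, which must be aligned carefully with the grid definitions in (\ref{G_K})--(\ref{subgrid}) and with the off-by-one in the index range ($k=0,\dots,K-1$ versus $k=1,\dots,K$).
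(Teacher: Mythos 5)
Your proposal is correct and follows essentially the same route as the paper: the same decomposition of $[Y,Y]_T^{(avg,K_n)}$ into $[Z,Z]_T^{(avg,K_n)}$, a cross term, and a noise term whose conditional mean produces exactly the three $g$-sums via the endpoint-multiplicity count, with the fluctuation parts being (up to scaling) the martingales $M_T^{(1)}$ and $M_T^{(3)}$ of the paper's proof and the cross term being the content of Lemma \ref{lem_RV}. The only difference is cosmetic — you dispose of the martingale fluctuations by direct conditional-variance (Chebyshev) bounds under $n/K_n^2\to 0$, where the paper invokes the stable CLT of Lemma \ref{lem_M} to get the same $O_p(\sqrt{n}/K_n)=o_p(1)$ conclusion — and both arguments rely on the same localization of Assumption \ref{bounded} and the same implicit scaling $K_n\gg\sqrt{n}$.
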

From \textbf{Lemma \ref{lem_edge}}, we can see {the noise in each time point does not contribute equally to the bias in the averaged realized variance $[Y,Y]_T^{(avg,K_n)}$}. In the first and last $K_n$ sample points, the conditional second moments of noises are multiplied by the factor $\frac{1}{K_n}$, in contrast, the conditional second moments of noises in the middle of the sample points are multiplied by the factor $\frac{2}{K_n}$. However, the noise correction term $[Y,Y]_{\mathcal{G}}$ in (\ref{TSRV.def}) acts as if $g_{t_i}$'s all have the same contribution to the noise component in $[Z,Z]^{(avg,K_n)}_T$. The modification to the TSRV and the first two tests are motivated by the inhomogeneity of utilization of information at the two edges of the time interval $[0,T]$.

To the best of our knowledge, \cite{kl08} is the first study which considered the edge effect in TSRV due to the non-stationary microstructure noise, and they redefined the TSRV by $[Y,Y]^{(avg,K_n)}_T-\frac{n-K_n+1}{nK_n}[Y,Y]_T^{\{n\}}$ where $$[Y,Y]_T^{\{n\}}=\frac{1}{2}\left(\sum_{i=1}^{n-K_n}(Y_{t_{i+1}}-Y_{t_i})^2+\sum^{n-1}_{i=K_n}(Y_{t_{i+1}}-Y_{t_i})^2\right)$$
under a parametric model which incorporates the diurnal and endogenous measurement error.\footnote{The model upon which \cite{kl08} was based is
\begin{equation*}
\begin{array}{ll}
      \mathrm{d}X_t&=\mu_t\,\mathrm{d}t+\sigma_t\,\mathrm{d}W_t\\
      Y_{t_i}&=X_{t_i}+\epsilon_{t_i}\\
      \epsilon_{t_i}&=u_{t_i}+v_{t_i}\\
      u_{t_i}&=\delta \gamma_n(W_{t_i}-W_{t_{i-1}})\\
      v_{t_i}&=m(t_i)+n^{-\frac{\alpha}{2}}\omega(t_i)\,e_{t_i}, \alpha\in[0,1/2)
\end{array}
\end{equation*}
where $e\indep X$, i.i.d., $E(e)=0$.} In the following, we used this design to attack the non-stationarity problem under the general hidden It\^o semimartingale model given in Section \ref{model}.

In this paper, we call the new TSRV consisting of the modified version of realized variance in \cite{kl08} as ``sample-weighted TSRV'', which is defined as
\begin{equation*}
\widehat{\langle X,X\rangle}^{(WTSRV,K_n)}_T=[Y,Y]_T^{(avg,K_n)}-\frac{1}{K_n}[Y,Y]^{\{n\}}_T
\end{equation*}
The sample-weighted TSRV enjoys the following asymptotic property under the general model in Section \ref{model}:

\begin{thm}\label{thm1}
Suppose there are $n$ observations in the finite time interval $[0,T]$. When we take $K_n=cn^{2/3}$ for some constant $c$, under the \textbf{Assumption \ref{diffusion}, \ref{finite.jump}, \ref{identification}, \ref{independence}, \ref{bounded}, \ref{hfasym}}, we have
\begin{equation}
n^{1/6}\left(\widehat{\langle X,X\rangle}^{(WTSRV,K_n)}_T-[Z,Z]_T\right)\overset{\mathcal{L}-s}{\longrightarrow}\mathcal{MN}\left(0,\frac{8}{Tc^2}\int_0^Tg_t^2\mathrm{d}t+\frac{4cT}{3}\int_0^T\sigma^4_t\,\mathrm{d}t\right)
\end{equation}
\end{thm}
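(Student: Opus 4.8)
\textbf{Proof proposal for Theorem \ref{thm1}.}

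The plan is to decompose the normalized estimation error into a bias-correction piece and a fluctuation piece, establish that the modified noise-correction term $\frac{1}{K_n}[Y,Y]^{\{n\}}_T$ exactly cancels the edge-effect bias identified in Lemma \ref{lem_edge}, and then derive a stable central limit theorem for the remaining fluctuations. First I would write
\begin{equation*}
\widehat{\langle X,X\rangle}^{(WTSRV,K_n)}_T-[Z,Z]_T=\left([Y,Y]_T^{(avg,K_n)}-[Z,Z]_T^{(avg,K_n)}\right)+\left([Z,Z]_T^{(avg,K_n)}-[Z,Z]_T\right)-\frac{1}{K_n}[Y,Y]^{\{n\}}_T.
\end{equation*}
The first parenthesis is controlled by Lemma \ref{lem_edge}, which expresses it as the edge-weighted sum of the $g_{t_i}$'s plus $o_p(1)$. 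The crux of the bias analysis is to show that $\frac{1}{K_n}[Y,Y]^{\{n\}}_T$ has the same leading behavior: expanding $(Y_{t_{i+1}}-Y_{t_i})^2=(\epsilon_{t_{i+1}}-\epsilon_{t_i})^2+2(\epsilon_{t_{i+1}}-\epsilon_{t_i})(Z_{t_{i+1}}-Z_{t_i})+(Z_{t_{i+1}}-Z_{t_i})^2$ and taking conditional expectations (using Assumption \ref{independence} so the cross terms and the $\epsilon$ off-diagonal terms vanish, and Assumption \ref{bounded} to bound moments), the term $\frac{1}{K_n}[Y,Y]^{\{n\}}_T$ matches the edge-weighted structure $\frac{2}{K_n}\sum_k\sum_{t_i\in\mathcal{G}''^{(k)}}g_{t_i}+\frac{1}{K_n}\sum_{\mathcal{G}^{(\min)}}g_{t_i}+\frac{1}{K_n}\sum_{\mathcal{G}^{(\max)}}g_{t_i}$ up to $o_p(n^{-1/6})$, so the two edge contributions cancel after normalization by $n^{1/6}$.

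With the bias removed, the second step is to identify the two sources of asymptotic variance. The quantity $[Z,Z]_T^{(avg,K_n)}-[Z,Z]_T$ captures the discretization error of the subsampled-and-averaged realized variance of the It\^o semimartingale $Z$; under $K_n=cn^{2/3}$ this is the standard ``signal'' term whose stable limit contributes the $\frac{4cT}{3}\int_0^T\sigma_t^4\,\mathrm{d}t$ component (analogous to the classical TSRV variance decomposition in \cite{zma05,z06}), and I would invoke the stable CLT for realized volatility of semimartimartingales (e.g. the results in \cite{jp12}) to handle it, using Assumption \ref{diffusion} for local boundedness of $\sigma_t$ and Assumption \ref{finite.jump} so that jumps contribute only finitely many terms that are asymptotically negligible at rate $n^{1/6}$. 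The ``noise'' component of the variance, $\frac{8}{Tc^2}\int_0^T g_t^2\,\mathrm{d}t$, arises from the centered fluctuations of the $\epsilon$-squared terms in $[Y,Y]_T^{(avg,K_n)}$ around their conditional means $g_{t_i}$; here I would write these fluctuations as a sum of conditionally independent (by Assumption \ref{independence}), conditionally mean-zero arrays, verify a Lindeberg/Lyapunov condition using the $(\alpha+\delta_0)$-moment bound of Assumption \ref{bounded}, and compute the conditional variance, whose limit is the time integral of $g_t^2$ after accounting for the $\frac{1}{K_n}$ weighting and the overlap structure of the averaged grids.

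The final step is to combine the two pieces into a single stable central limit theorem with the stated mixed-normal limit $\mathcal{MN}$, which requires showing that the signal fluctuations (driven by $W$ on $\Omega^{(0)}$) and the noise fluctuations (driven by the Markov kernel $Q_t$ on $\Omega^{(1)}$) are asymptotically independent and jointly stably convergent, so that their variances add. I expect the main obstacle to be precisely this joint stable convergence together with the exact bias cancellation: one must control the cross terms between noise and signal increments across the overlapping subgrids $\mathcal{G}^{(k)}$ and verify that the rate $n^{1/6}$, induced by the choice $K_n=cn^{2/3}$, balances the $O(K_n^{-1/2})=O(n^{-1/3})$ noise-fluctuation scale against the $O((n/K_n)^{-1/2})=O(n^{-1/6})$ signal-discretization scale so that both terms survive at the same order. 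The conditional-independence structure of the extended space $(\Omega,\mathcal{F},\mathbb{P})$ is what makes the cross terms tractable, reducing the joint convergence to the stable convergence of the signal part (standard) plus a conditional CLT for the noise part given $\mathcal{F}^{(0)}$.
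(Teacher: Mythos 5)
Your proposal follows essentially the same route as the paper: the same decomposition into the edge-bias term controlled by Lemma \ref{lem_edge}, the correction term $\frac{1}{K_n}[Y,Y]^{\{n\}}_T$ whose edge-weighted structure exactly cancels that bias, and the discretization term $[Z,Z]^{(avg,K_n)}_T-[Z,Z]_T$ handled as in \cite{zma05}, with the noise and signal variance components balanced at rate $n^{1/6}$ by the choice $K_n=cn^{2/3}$. One small imprecision worth noting: in the paper's computation the centered-square fluctuations $\sum_i(\epsilon_{t_i}^2-g_{t_i})$ cancel between $[Y,Y]^{(avg,K_n)}_T$ and the correction term, and the martingales that actually carry the $\frac{8}{Tc^2}\int_0^Tg_t^2\,\mathrm{d}t$ variance are the cross-product sums $\sum_i\epsilon_{t_i}\epsilon_{t_{i-1}}$ and $\sum_i\epsilon_{t_i}\epsilon_{t_{i-K_n}}$ (the $M^{(2)}_T$ and $M^{(3)}_T$ of the appendix), not the centered squares -- though your plan of computing the conditional variance of the mean-zero array would uncover this.
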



The theorem tells us the sample-weighted TSRV in non-stationary noise setting enjoys the same asymptotic property as those of traditional TSRV in stationary noise setting \citep{zma05,lm07,amz11}, in that the asymptotic distribution as well as the convergence rate remains unchanged; in other words, the asymptotic property of the sample-weighted TSRV is invariant with respect to non-stationary market microstructure noise.

\subsection{The first test $N(Y,K_n)^n_T$}\label{test-1}
Assuming $H_0$ is true, both of the asymptotic distributions of the original TSRV and the sample-weighted TSRV are mixed normals. So, the asymptotic distribution of the difference between two different versions (after proper scaling) is also a mixed normal. Therefore, we can test the null $H_0$ based on the asymptotic behavior of their difference $D(Y,K_n)^n_T=\sqrt{K_n}\left(\widehat{\langle X,X\rangle}^{(WTSRV,K_n)}_T-\widehat{\langle X,X\rangle}^{(TSRV,K_n)}_T\right)$, note that 
\begin{multline}\label{Di}
	D(Y,K_n)^n_T=\frac{n-2(K_n-1)}{2nK_n^{1/2}}[Y,Y]_{\mathcal{G}^{(\min)}}+\frac{n-2(K_n-1)}{2nK_n^{1/2}}[Y,Y]_{\mathcal{G}^{(\max)}}\\-\frac{K_n-1}{nK_n^{1/2}}[Y,Y]_{\mathcal{G}/(\mathcal{G}^{(\min)}\cup\mathcal{G}^{(\max)})}
\end{multline}

The first test statistic $N(Y,K_n)_T^n$ is designed as follows:
\begin{equation}\label{test1}
N(Y,K_n)^n_T\equiv\left\{\begin{array}{ll}
\frac{D(Y,K_n)^n_T}{\sqrt{\frac{1}{n}[Y;4]_{\mathcal{G}}-\frac{3}{2n^2}[Y,Y]_{\mathcal{G}}^2}}, & \text{ if } [Y;4]_{\mathcal{G}}-\frac{3}{2n}[Y,Y]_{\mathcal{G}}^2\ne0\\
0, & \text{ if } [Y;4]_{\mathcal{G}}-\frac{3}{2n}[Y,Y]_{\mathcal{G}}^2=0
\end{array}\right.
\end{equation}
where $[Y;4]_{\mathcal{G}}=\sum_{i=1}^n(Y_{t_i}-Y_{t_{i-1}})^4$ is the sample quarticity based on the observation $Y_{t_i}$'s.

Our first test statistic has the following asymptotic property:
\begin{thm}\label{thm2}
If the noise process is stationary, under the \textbf{Assumption \ref{diffusion}, \ref{jump}, \ref{independence}, \ref{bounded}, \ref{hfasym}}, as long as $K_n\to\infty$ but $K_n=o(n)$,
\begin{equation}\label{N.null}
N(Y,K_n)^n_T\overset{\mathcal{L}}{\longrightarrow}N(0,1)
\end{equation}
\end{thm}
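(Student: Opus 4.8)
\section*{Proof proposal for Theorem \ref{thm2}}

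The plan is to first exploit the fact that the common averaged realized variance $[Y,Y]^{(avg,K_n)}_T$ cancels in the difference, so that, writing $\Delta_i Y\equiv Y_{t_i}-Y_{t_{i-1}}$,
\[
\widehat{\langle X,X\rangle}^{(WTSRV,K_n)}_T-\widehat{\langle X,X\rangle}^{(TSRV,K_n)}_T=\frac{1}{K_n}\left(\frac{n-K_n+1}{n}[Y,Y]_{\mathcal{G}}-[Y,Y]^{\{n\}}_T\right)=\frac{1}{K_n}\sum_{i=1}^n w_i\,(\Delta_i Y)^2,
\]
where the weights are $w_i\approx\tfrac12$ on the first and last ($\approx K_n$) increments and $w_i=-\tfrac{K_n-1}{n}\approx -\tfrac{K_n}{n}$ on the middle ones. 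The numerator of $N(Y,K_n)^n_T$ is then $\tfrac{1}{\sqrt{K_n}}\sum_i w_i(\Delta_i Y)^2$. First I would reduce this to a pure-noise functional: writing $Y=Z+\epsilon$ and expanding $(\Delta_iY)^2=(\Delta_iZ)^2+2\,\Delta_iZ\,\Delta_i\epsilon+(\Delta_i\epsilon)^2$, Assumptions \ref{diffusion}, \ref{jump} and \ref{hfasym} give that the continuous part of $\Delta_iZ$ is $O_p(n^{-1/2})$ and the jump quadratic variation over the two shrinking edge intervals $[0,t_{K_n}]$, $[t_{n-K_n},T]$ tends to $0$ (here $K_n=o(n)$ is what makes these intervals shrink). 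Hence the weighted signal sum is $O_p(\sqrt{K_n}/n)$ and the weighted cross sum has conditional variance $O_p(1/n)$, so both vanish and the numerator equals $\tfrac{1}{\sqrt{K_n}}\sum_i w_i(\Delta_i\epsilon)^2+o_p(1)$.

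Next I would center by conditional means given $\mathcal{F}^{(0)}$. By the conditional independence of Assumption \ref{independence} and $E(\epsilon_{t_i}\mid\mathcal{F}^{(0)})=0$ one has $E((\Delta_i\epsilon)^2\mid\mathcal{F}^{(0)})=g_{t_i}+g_{t_{i-1}}$. Under $H_0$ the noise is stationary, so $g_{t_i}$ is time-homogeneous; the two noise corrections were constructed precisely so that their leading $O(n)$ contributions cancel, leaving $\tfrac{1}{\sqrt{K_n}}\sum_i w_i E((\Delta_i\epsilon)^2\mid\mathcal{F}^{(0)})=O_p(K_n^{-1/2})\to0$. Thus the numerator is asymptotically equal to the fluctuation $S_n\equiv\tfrac{1}{\sqrt{K_n}}\sum_i w_i\,\xi_i$ with $\xi_i\equiv(\Delta_i\epsilon)^2-E((\Delta_i\epsilon)^2\mid\mathcal{F}^{(0)})$. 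Because $\sum_i w_i^2\approx K_n/2$ is concentrated on the two edge blocks (the middle contributes $O(K_n^2/n)=o(K_n)$ since $K_n=o(n)$), $S_n$ is driven entirely by the first and last $K_n$ increments --- this is the edge effect of Lemma \ref{lem_edge} resurfacing as the source of the limiting variance.

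For the distributional limit I would condition on $\mathcal{F}^{(0)}$: by Assumption \ref{independence} the $\epsilon_{t_i}$ are conditionally independent, so $\{\xi_i\}$ is a conditionally $1$-dependent triangular array and I would invoke an $m$-dependent ($m=1$) central limit theorem, verifying the Lyapunov condition from the uniform $(4+\delta_0)$-moment bound of Assumption \ref{bounded} (applied with $\alpha=4$, which also guarantees finiteness of $\kappa_{t_i}\equiv E(\epsilon_{t_i}^4\mid\mathcal{F}^{(0)})$). Conditionally the long-run variance per edge term is $\tfrac12\mathrm{Var}(\xi_i\mid\mathcal{F}^{(0)})+\mathrm{Cov}(\xi_i,\xi_{i+1}\mid\mathcal{F}^{(0)})$, which evaluates to the $\mathcal{F}^{(0)}$-measurable limit $V$, yielding stable convergence $S_n\overset{\mathcal{L}-s}{\longrightarrow}\mathcal{MN}(0,V)$. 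Finally I would establish, by a law of large numbers on the conditionally $1$-dependent noise increments, that $\tfrac1n[Y;4]_{\mathcal{G}}-\tfrac{3}{2n^2}[Y,Y]_{\mathcal{G}}^2\overset{\mathbb{P}}{\longrightarrow}V$; the constant $\tfrac32$ is calibrated so that the $g^2$ terms in $\tfrac1n[Y;4]_{\mathcal{G}}$ and in $\tfrac{3}{2}(\tfrac1n[Y,Y]_{\mathcal{G}})^2$ cancel, leaving exactly $V$. Studentization (numerator $\to\mathcal{MN}(0,V)$ stably, denominator $\to\sqrt V>0$ in probability, then Slutsky) gives $N(Y,K_n)^n_T\overset{\mathcal{L}}{\longrightarrow}N(0,1)$, the self-normalization removing any randomness in $V$. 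The main obstacle I anticipate is the edge-localized, conditionally $1$-dependent triangular-array CLT together with the exact variance matching --- the algebra showing that the conditional variance of $S_n$ and the self-normalizer share the identical limit $V$ --- plus controlling the jump part of $\Delta_iZ$ at the edges under the general Assumption \ref{jump} rather than the finite-activity Assumption \ref{finite.jump}.
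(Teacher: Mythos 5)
Your proposal is correct and takes essentially the same route as the paper's proof: both reduce the scaled difference of the two estimators to a pure-noise fluctuation supported on the first and last $K_n$ observations, establish a CLT for it with conditional variance $h_0+h_T=2E(\epsilon^4\mid\mathcal{F}^{(0)})$ under stationarity, and match this limit against the self-normalizer $\frac1n[Y;4]_{\mathcal{G}}-\frac{3}{2n^2}[Y,Y]_{\mathcal{G}}^2$ exactly as in Lemma \ref{lem_epsilon4}. The only packaging difference is that the paper decomposes the centered edge sums into the martingales $\underline{m}_T^{(1)},\underline{m}_T^{(2)},\overline{m}_T^{(1)},\overline{m}_T^{(2)}$ and applies a martingale CLT with predictable quadratic (co)variations $h_0-g_0^2$, $g_0^2$, $h_T-g_T^2$, $g_T^2$, whereas you keep them as a conditionally $1$-dependent triangular array; the long-run variance computations coincide.
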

We use this result to test the stationarity of the market microstructure noise in subsection \ref{empitest} (Figure \ref{EmpriTest1}).

The denominator of the test statistic (\ref{N.null}), namely $\frac{1}{n}[Y;4]_{\mathcal{G}}-\frac{3}{2n^2}([Y,Y]_{\mathcal{G}})^2$ is actually an estimator of $2E(\epsilon^4|\mathcal{F}^{(0)})$. This is formally introduced in (\ref{est.epsilon4}), which is not only used in the first test statistic but also used in the second test statistic in subsection \ref{TestV}. It is interesting in its own right, hence we here give the result:
\begin{lem}\label{lem_epsilon4}
	If we define the process $h_t(\omega^{(0)})\equiv E(\epsilon^4_t|\mathcal{F}^{(0)})(\omega^{(0)})$, then under the \textbf{Assumption \ref{diffusion}, \ref{jump}, \ref{independence}, \ref{bounded}, \ref{hfasym}}, we have
	\begin{equation}\label{eps4}
	\frac{1}{n}[Y;4]_{\mathcal{G}}=\frac{2}{T}\int_0^Th_t\,\mathrm{d}t+\frac{6}{T}\int_0^Tg^2_t\,\mathrm{d}t+O_p\left(\frac{1}{\sqrt{n}}\right)
	\end{equation}
\end{lem}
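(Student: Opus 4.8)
The plan is to reduce the fourth-power statistic to the pure-noise contribution and then compute its conditional mean and fluctuation separately. Under the identification assumption (\ref{addass}) the observed increment splits as $\Delta Y_i := Y_{t_i}-Y_{t_{i-1}} = \Delta Z_i + \Delta\epsilon_i$, where $\Delta Z_i := Z_{t_i}-Z_{t_{i-1}}$ is an increment of the latent It\^o semimartingale, of diffusive order $O_p(n^{-1/2})$ by Assumption \ref{hfasym}, and $\Delta\epsilon_i := \epsilon_{t_i}-\epsilon_{t_{i-1}}$ is a noise increment, of order $O_p(1)$ by Assumption \ref{bounded}. Expanding $(\Delta Y_i)^4$ by the binomial theorem, the noise-only term $(\Delta\epsilon_i)^4$ dominates, and every other summand carries at least one factor $\Delta Z_i$. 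First I would show these cross terms are negligible: the term $\tfrac{6}{n}\sum_i(\Delta Z_i)^2(\Delta\epsilon_i)^2$ is $O_p(1/n)$ since $\sum_i(\Delta Z_i)^2\to\langle Z,Z\rangle_T=O_p(1)$ while the $(\Delta\epsilon_i)^2$ factors are bounded in conditional mean; the term linear in $\Delta Z_i$ splits into a conditional-mean piece controlled by Cauchy--Schwarz against $\sum_i(\Delta Z_i)^2$ and the quadratic variation of the conditional third-moment process, plus a conditionally centered martingale transform of variance $O_p(n^{-2})$; the remaining terms are smaller still. All are $o_p(n^{-1/2})$.

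Second, I would compute the conditional first moment of the main term. Because $E(\epsilon_t|\mathcal{F}^{(0)})=0$ by (\ref{epsilon}), and $\epsilon_{t_i}$ and $\epsilon_{t_{i-1}}$ are conditionally independent given $\mathcal{F}^{(0)}$ by Assumption \ref{independence}, expanding $(\epsilon_{t_i}-\epsilon_{t_{i-1}})^4$ and taking $E(\cdot\,|\mathcal{F}^{(0)})$ annihilates the odd cross terms and gives
\begin{equation*}
E\!\left((\Delta\epsilon_i)^4\,\middle|\,\mathcal{F}^{(0)}\right)=h_{t_i}+h_{t_{i-1}}+6\,g_{t_i}g_{t_{i-1}}.
\end{equation*}
Summing over $i$, dividing by $n$, and applying the Riemann-sum approximation afforded by the shrinking mesh $\max_i\Delta t_i=O_p(1/n)$ (Assumption \ref{hfasym}), one obtains $\tfrac1n\sum_i(h_{t_i}+h_{t_{i-1}})\to\tfrac2T\int_0^T h_t\,\mathrm{d}t$ and $\tfrac6n\sum_i g_{t_i}g_{t_{i-1}}\to\tfrac6T\int_0^T g_t^2\,\mathrm{d}t$, the latter after replacing $g_{t_{i-1}}$ by $g_{t_i}$ using path regularity of $g$.

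Third, I would control the fluctuation $\tfrac1n\sum_i\bigl[(\Delta\epsilon_i)^4-E((\Delta\epsilon_i)^4|\mathcal{F}^{(0)})\bigr]$. Conditionally on $\mathcal{F}^{(0)}$ the summands are centered and $1$-dependent, since consecutive terms $(\Delta\epsilon_i)^4$ and $(\Delta\epsilon_{i+1})^4$ share only $\epsilon_{t_i}$; their conditional variances are bounded by conditional eighth moments, which are finite by Assumption \ref{bounded} applied at order $\alpha=8$. Hence the conditional variance of the sum is $O_p(n)$, the normalized fluctuation is $O_p(n^{-1/2})$, and a conditional Chebyshev argument delivers the rate in probability. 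Collecting the three pieces yields (\ref{eps4}).

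The main obstacle is pinning the error at exactly $O_p(n^{-1/2})$ rather than merely establishing convergence. This rests on two points: the finiteness of sufficiently high conditional moments of $\epsilon$, which is precisely why Assumption \ref{bounded} is invoked at order eight; and, more delicately, that the Riemann-sum errors for $\int_0^T h_t\,\mathrm{d}t$ and $\int_0^T g_t^2\,\mathrm{d}t$ are themselves $O_p(n^{-1/2})$. The latter requires H\"older- or It\^o-type path regularity of the processes $g$ and $h$, and is the one step where the ambient assumptions — imposing only c\`adl\`ag paths and local boundedness — must be supplemented by the smoothness that the paper assumes for $\{g_t\}$ in the later sections.
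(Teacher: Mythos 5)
Your proposal is correct and follows essentially the same route as the paper: expand $(\Delta Z_i+\Delta\epsilon_i)^4$, kill the cross terms by conditional Cauchy--Schwarz against $\sum_i(\Delta Z_i)^2$, identify the conditional mean $h_{t_i}+h_{t_{i-1}}+6g_{t_i}g_{t_{i-1}}$ of the pure-noise term, and control the centered remainder at rate $n^{-1/2}$ (the paper does this by writing it as four explicit martingale end-points $L^{(1)},\dots,L^{(4)}$, which is equivalent to your $1$-dependence-plus-Chebyshev argument). Your closing caveat about the Riemann-sum error for $\int_0^T h_t\,\mathrm{d}t$ and $\int_0^T g_t^2\,\mathrm{d}t$ is well taken — the paper's own proof only asserts convergence of these sums without quantifying the rate, so the stated $O_p(n^{-1/2})$ does implicitly lean on the path regularity of $g$ and $h$ that is only formalized in later sections.
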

\begin{remk}\label{remk.est.epsilon4}
	Based on \textbf{Lemma \ref{lem_epsilon4}}, if the noise is stationary, $\frac{1}{2n}[Y,Y,Y,Y]_{\mathcal{G}}|\mathcal{F}^{(0)}\overset{\mathbb{P}}{\longrightarrow}E(\epsilon^4|\mathcal{F}^{(0)})+3\left(E(\epsilon^2|\mathcal{F}^{(0)})\right)^2$,
	so a natural estimate of $E(\epsilon^4|\mathcal{F}^{(0)})$ is
	\begin{eqnarray}\label{est.epsilon4}
		\widehat{E(\epsilon^4|\mathcal{F}^{(0)})}
		&=&\frac{1}{2n}[Y;4]_{\mathcal{G}}-\frac{3}{4n^2}[Y,Y]_{\mathcal{G}}^2
	\end{eqnarray}
\end{remk}
\begin{remk}\label{N.alternative}
	We now investigate the behavior of our first test statistic when microstructure noise is non-stationary. Since $\frac{1}{2n}[Y,Y]_{\mathcal{G}}=\frac{1}{T}\int_0^Tg_t\,\mathrm{d}t+o_p(1)$, hence
	\begin{equation}\label{asym.epsilon4}
		\widehat{E(\epsilon^4|\mathcal{F}^{(0)})}\overset{\mathbb{P}}{\longrightarrow}
		D_T\equiv\left\{\begin{array}{ll}
		   E(\epsilon^4|\mathcal{F}^{(0)}) & \{\epsilon_t\}_{t\ge0} \text{ is stationary}\\
		   \frac{1}{T}\int_0^Th_t\,\mathrm{d}t+\frac{3}{T}\int_0^Tg^2_t\,\mathrm{d}t-\frac{3}{T^2}\left(\int_0^Tg_t\,\mathrm{d}t\right)^2    & \{\epsilon_t\}_{t\ge0} \text{ is non-stationary}
		\end{array}\right.
	\end{equation}
	Since we assume local boundedness of noise variance, $D_T$ is finite almost surely, regardless of noise stationarity. From the proof in subsection \ref{prfthm1}, we know
	\begin{equation}\label{tsdif2}
		N(Y,K_n)^n_T=\sqrt{K_n}\times\frac{\overline{g}^{(\text{start})}+\overline{g}^{(\text{end})}-2\overline{g}^{(\text{middle})}}{D_T}+O_p(1)
	\end{equation}
		where
	\begin{equation}\label{avgeps}
	\begin{array}{ll}
	\overline{g}^{(\text{start})}=&\frac{1}{K_n}\sum_{t_i\in\mathcal{G}^{(\max)}}g_{t_i}\\
	\overline{g}^{(\text{end})}=&\frac{1}{K_n}\sum_{t_i\in\mathcal{G}^{(\max)}}g_{t_i}\\
	\overline{g}^{(\text{middle})}=&\frac{1}{n+1-2K_n}\sum^{K_n}_{k=1}\sum_{t_i\in\mathcal{G}''_k}g_{t_i}
	\end{array}
	\end{equation}
	Since $K_n=O\left(n^{2/3}\right)$ in our setup, $N(Y,K_n)_T$ explodes when the noise is not stationary. Thus, the type-\Rmnum{2} error of this test is asymptotically negligible.
\end{remk}

Followed from \textbf{Theorem \ref{thm2}} and \textbf{Remark \ref{N.alternative}}, we have
\begin{corol}\label{corol.powerN}
	Assume $\{g_t\}_{t\ge0}$ and $\{h_t\}_{t\ge0}$ are c\`adl\`ag processes\footnote{The term ``c\`adl\`ag'' (French acronym of ``continue à droite, limite à gauche'') describes the property of a function that is everywhere right-continuous and has left limits everywhere, for example, a Brownian motion (sample path are continuous almost surly), L\'evy processes (countably many jump discontinuities).} on $[0,T]$ with $0$, $T$ being continuity points almost surly, additional we have the \textbf{Assumption \ref{diffusion}, \ref{jump}, \ref{independence}, \ref{bounded}, \ref{hfasym}}, and let $K_n\to\infty$, $K_n/n\to0$. When the noise process is non-stationary,
	\begin{equation}
		N(Y,K_n)^n_T-\sqrt{K_n}\times\frac{g_0+g_T-\frac{1}{T}\int_0^T\,g_t\mathrm{d}t}{\frac{1}{T}\int_0^Th_t\,\mathrm{d}t+\frac{3}{T}\int_0^Tg^2_t\,\mathrm{d}t-\frac{3}{T^2}\left(\int_0^Tg_t\,\mathrm{d}t\right)^2}\overset{\mathcal{L}}{\longrightarrow}N(0,1)
	\end{equation}
    on the event that $\int_0^Th_t\,\mathrm{d}t+3\int_0^Tg^2_t\,\mathrm{d}t-\frac{3}{T}\left(\int_0^Tg_t\,\mathrm{d}t\right)^2\ne0$.
\end{corol}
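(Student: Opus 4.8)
The plan is to read the corollary as the non-central analogue of \textbf{Theorem \ref{thm2}}: under $H_1$ the test statistic decomposes into a part that is deterministic given $\mathcal{F}^{(0)}$ and diverges at rate $\sqrt{K_n}$, plus a genuinely random part that still obeys a standard central limit theorem. Concretely, I would begin from the representation (\ref{tsdif2}) extracted in the proof of Theorem~\ref{thm1}/\ref{thm2},
\[
N(Y,K_n)^n_T=\sqrt{K_n}\,\frac{\overline{g}^{(\text{start})}+\overline{g}^{(\text{end})}-2\overline{g}^{(\text{middle})}}{D_T}+R_n,
\]
where $R_n=O_p(1)$ collects the centered squared-increment fluctuations and the averages are those of (\ref{avgeps}). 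This reduces the corollary to two essentially separate claims: (i) that $R_n\overset{\mathcal{L}}{\to}N(0,1)$ even when the noise is non-stationary, and (ii) that the averages $\overline{g}^{(\text{start})},\overline{g}^{(\text{end})},\overline{g}^{(\text{middle})}$ may be replaced by the frozen limits $g_0,\ g_T,\ \frac1T\int_0^T g_t\,dt$, with the replacement error vanishing after multiplication by $\sqrt{K_n}$.

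For (ii) I would exploit Assumption \ref{hfasym}: since $\max_i\Delta t_i=O_p(1/n)$ and $K_n=cn^{2/3}$, the first and last $K_n$ observation times collapse onto the endpoints, as $t_{K_n}=O_p(K_n/n)=O_p(n^{-1/3})\to0$ and $T-t_{\,n-K_n}\to0$. Because $\{g_t\}$ is c\`adl\`ag with $0$ and $T$ continuity points almost surely, right-continuity at $0$ and continuity at $T$ give $\overline{g}^{(\text{start})}\overset{\mathbb{P}}{\to}g_0$ and $\overline{g}^{(\text{end})}\overset{\mathbb{P}}{\to}g_T$; moreover the interior grid $\bigcup_k\mathcal{G}''^{(k)}$ becomes dense in $[0,T]$ and carries $\sim n$ points, so $\overline{g}^{(\text{middle})}$ is a Riemann sum of the (Riemann-integrable) c\`adl\`ag path $g$ and converges to $\frac1T\int_0^T g_t\,dt$. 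The same reasoning applied to $\{h_t\}$ and to (\ref{asym.epsilon4}) identifies the denominator's limit $D_T$. The subtle point is the \emph{rate}: since the recentering multiplies the discrepancy by $\sqrt{K_n}$, I must show $\sqrt{K_n}\bigl(\overline{g}^{(\text{start})}-g_0\bigr)$, and its analogues, are $o_p(1)$, which I would control through the endpoint modulus of continuity of $g$ together with the edge-mesh estimate $t_{K_n}=O_p(n^{-1/3})$.

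For (i) the key observation is that the limit in Theorem~\ref{thm2} arises from a Lindeberg/Lyapunov central limit theorem for the sum of the \emph{conditionally independent} centered summands $\Delta_j^2-E[\Delta_j^2\mid\mathcal{F}^{(0)}]$ (Assumption \ref{independence}), and this argument never uses stationarity: only the moment bounds of Assumption \ref{bounded} enter, to verify the Lyapunov condition, and the normalizing conditional variance is precisely the quantity estimated by the denominator of (\ref{test1}). By Lemma~\ref{lem_epsilon4} and (\ref{asym.epsilon4}) that denominator converges to $D_T$ regardless of stationarity, so the same self-normalized stable CLT yields $R_n\overset{\mathcal{L}}{\to}N(0,1)$ under $H_1$ on the event $\{D_T\neq0\}$ singled out in the statement. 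Combining (i) and (ii) and invoking Slutsky's theorem (with the denominator furnishing the normalization) then delivers the claimed limit.

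I expect the main obstacle to be step (i) in the non-stationary regime, namely verifying that the fluctuation part is asymptotically standard normal when the conditional second and fourth moments $g_{t_i}$ and $h_{t_i}$ vary across the sample: one must check a triangular-array Lindeberg condition for non-identically-distributed, nearest-neighbour-overlapping (hence weakly dependent) summands and confirm that their aggregate conditional variance, after normalization by $D_T$, converges to one. The endpoint rate control in step (ii) is the secondary hurdle, and it is exactly there that the hypothesis that $g$ has $0$ and $T$ as continuity points is indispensable.
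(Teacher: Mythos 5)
The paper gives no standalone proof of this corollary: it is asserted to "follow from Theorem \ref{thm2} and Remark \ref{N.alternative}", i.e.\ from exactly the decomposition (\ref{tsdif2}) plus the martingale CLT machinery in the proof of Theorem \ref{thm2}. Your proposal reconstructs precisely that intended route, so in terms of strategy you are aligned with the paper. However, two of your steps would fail as written, and they are worth naming because they are also the places where the corollary itself is on thin ice.

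First, step (i). You assert that ``the normalizing conditional variance is precisely the quantity estimated by the denominator of (\ref{test1})'' and conclude $R_n\overset{\mathcal{L}}{\to}N(0,1)$ under $H_1$. This is exactly where the argument breaks under non-stationarity. From the proof of Theorem \ref{thm2}, the fluctuation part of $\sqrt{K_n}\bigl(\widehat{\langle X,X\rangle}^{(WTSRV)}-\widehat{\langle X,X\rangle}^{(TSRV)}\bigr)$ is $\underline{m}_T^{(1)}-\underline{m}_T^{(2)}+\overline{m}_T^{(1)}-\overline{m}_T^{(2)}+o_p(1)$, whose limiting conditional variance is $h_0+h_T$ --- it is governed by the \emph{endpoint} fourth moments, because $\mathcal{G}^{(\min)}$ and $\mathcal{G}^{(\max)}$ are shrinking neighbourhoods of $0$ and $T$. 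The self-normalizer $\frac{1}{n}[Y;4]_{\mathcal{G}}-\frac{3}{2n^2}[Y,Y]_{\mathcal{G}}^2$, by Lemma \ref{lem_epsilon4}, converges instead to the \emph{time-averaged} quantity $\frac{2}{T}\int_0^Th_t\,\mathrm{d}t+\frac{6}{T}\int_0^Tg_t^2\,\mathrm{d}t-\frac{6}{T^2}\bigl(\int_0^Tg_t\,\mathrm{d}t\bigr)^2=2D_T$. These coincide when the noise is stationary (both equal $2E(\epsilon^4|\mathcal{F}^{(0)})$), which is why Theorem \ref{thm2} goes through, but under $H_1$ the ratio has limiting conditional variance $(h_0+h_T)/(2D_T)\neq 1$ in general. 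So the Lindeberg verification you anticipate as the main obstacle is not the real issue; the issue is that the variance of the limit simply is not $1$ unless an extra identity between endpoint and averaged moments is imposed.

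Second, step (ii). You correctly flag that you must show $\sqrt{K_n}\bigl(\overline{g}^{(\text{start})}-g_0\bigr)=o_p(1)$, but the hypothesis that $g$ is c\`adl\`ag with $0$ and $T$ as continuity points gives convergence with \emph{no rate}, so this cannot be closed from the stated assumptions. Concretely, if $g$ is an It\^o diffusion as in Assumption \ref{Ito.variance}, the average of $g$ over the first $K_n$ observations (a window of length $O_p(K_n/n)=O_p(n^{-1/3})$) deviates from $g_0$ by $O_p(n^{-1/6})$, and $\sqrt{K_n}\cdot n^{-1/6}=n^{1/6}\to\infty$; even a Lipschitz path only yields $O_p(1)$, not $o_p(1)$. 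A smaller point: carrying out your own computation from (\ref{tsdif2}) produces the numerator $g_0+g_T-\frac{2}{T}\int_0^Tg_t\,\mathrm{d}t$ (matching $-2\overline{g}^{(\text{middle})}$), not the $-\frac{1}{T}\int_0^Tg_t\,\mathrm{d}t$ printed in the corollary, so you should at least reconcile that discrepancy rather than target the printed centering term.
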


\begin{remk}\label{rv.in.test}
    
    
    The test statistic $N(Y,K_n)^n_T$ can disclose potential non-stationarity in the market microstructure noise via two edges of the mesh $\mathcal{G}^{(\min)}$, $\mathcal{G}^{(\max)}$ and the middle of the mesh $\mathcal{G}/(\mathcal{G}^{(\min)}\cup \mathcal{G}^{(\max)})$. We can show there are, in latter subsections, schemes which are able not only to reflect the heterogeneity between two edges and the middle, but also to capture almost all of the information about the non-stationarity in the data, however, inevitably with more computational cost. We will discuss these schemes in Section \ref{test-23}.	
\end{remk}


\section{The second and third tests}\label{test-23}
\subsection{The general idea}\label{SecT2}
The second and third tests are designed as an attempt to effectively utilize all the information relevant to noise stationarity contained in the data, in contrast to the first test $N(Y,K_n)^n$ (see \textbf{Remark \ref{rv.in.test}}). The basic idea of the second and third tests is to conduct local tests on sub-intervals and then combine evidences from all the local tests.

To straighten the idea, recall the observational notation in subsection \ref{ObsNotation}, we partition the fixed time interval $[0,\mathcal{T}]$ into $r_n$ sub-intervals $(T_i,T_{i+1}]$'s, such that each $(T_{i-1},T_i]$ contains $K_n$ observations. Similar to the definition of the first test statistic, but instead of the whole interval $[0,\mathcal{T}]$, the second test uses local test statistic defined on a moving window of the form $(T_{i-1},T_{i-1+s_n}]\subset[0,\mathcal{T}]$ with a suitable window length $s_n$ (in terms of the number of subintervals $(T_i,T_{i+1}]$'s): 
\begin{equation}
D(Y,K_n,s_n)^n_i\equiv\frac{(s_n-2)K_n+2}{2s_nK_n^{3/2}}\left([Y,Y]_{\mathcal{G}_i}+[Y,Y]_{\mathcal{G}_{i+s_n}}\right)-\frac{K_n-1}{s_nK_n^{3/2}}[Y,Y]_{\cup_{k=2}^{s_n-1}\mathcal{G}_{i+k}}
\end{equation}

Then, we use the overlapping window to calculate the quantity $U(Y,K_n,u)^n_{\mathcal{T}}$, which depends on the process $Y$, the stage of statistical experiment $n$, tuning parameter $K_n$ and $s_n<\left\lfloor\frac{n}{K_n}\right\rfloor$, and a number $u>0$:
\begin{equation}\label{gene.U1}
    U(Y,K_n,s_n,u)^n_{\mathcal{T}}\equiv \frac{1}{\lfloor n/K_n\rfloor-s_n+1}\sum^{\lfloor n/K_n\rfloor-s_n+1}_{i=1}\left|D(Y,K_n,s_n)^n_i\right|^u
\end{equation}
Similarly, we also define a quantity based on non-overlapping windows:
\begin{equation}\label{gene.U2}
	U'(Y,K_n,s_n,u)_\mathcal{T}^n\equiv\frac{1}{\lfloor n/(s_nK_n)\rfloor}\sum^{\lfloor n/(s_nK_n)\rfloor}_{i=1}\left|D(Y,K_n)_{(i-1)s_n+1}\right|^u
\end{equation}

\subsection{The second test $V(Y,K_n,s_n,2)^n_{\mathcal{T}}$}\label{TestV}
We designed our second test statistic by
\begin{equation}\label{test2}
V(Y,K_n,s_n,2)^n_{\mathcal{T}}\equiv\left\{\begin{array}{ll}
	\sqrt{n/K_n}\left(U(Y,K_n,s_n,2)^n_{\mathcal{T}}-\frac{1}{n}[Y;4]_{\mathcal{G}}-\frac{3}{2n^2}[Y,Y]_{\mathcal{G}}^2\right)/\widehat{\eta}, & \text{ if } \widehat{\eta}\ne0\\
	0, & \text{ if } \widehat{\eta}=0
	\end{array}\right.
\end{equation}
where
\begin{equation}\label{est.eta}
	\widehat{\eta}^2
	=\frac{6}{n^2}[Y;4]_{\mathcal{G}}^2-\frac{21}{n^3}[Y;4]_{\mathcal{G}}\cdot[Y,Y]_{\mathcal{G}}^2+\frac{39}{2n^4}[Y,Y]_{\mathcal{G}}^4
\end{equation}
We have the following result in regard to the asymptotic property of $V(Y,K_n,2)^n_{\mathcal{T}}$:
\begin{thm}\label{thm3}\textbf{\textit{($V(Y,K_n,s_n,2)_{\mathcal{T}}^n$ under the null)}}
Under the \textbf{Assumption \ref{diffusion}, \ref{jump}, \ref{independence}, \ref{bounded}, \ref{hfasym}}, assume the noise process is stationary, and choose the tuning parameters such that $K_n/n\to0$, $K_n/n^{1/3}\to\infty$, $s_n\to\infty$, $s_n$. 
Then the test statistic $V(Y,K_n,2)^n_{\mathcal{T}}$ has the following asymptotic property:
\begin{equation}\label{asym.V}
V(Y,K_n,s_n,2)_{\mathcal{T}}^n\overset{\mathcal{L}}{\longrightarrow}N(0,1)
\end{equation}
on the event that $E(\epsilon^4|\mathcal{F})^2-E(\epsilon^4|\mathcal{F})E(\epsilon^2|\mathcal{F})^2+E(\epsilon^2|\mathcal{F})^4\ne0$.
\end{thm}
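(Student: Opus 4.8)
The plan is to reduce $V(Y,K_n,s_n,2)^n_{\mathcal{T}}$ to a self-normalized, suitably centered sum of the local squared statistics $|D(Y,K_n)_i|^2$ and then to establish a conditional CLT for that sum. First I would record, from the proof of Theorem \ref{thm2} applied window-by-window, the conditional structure of each local statistic: conditionally on $\mathcal{F}^{(0)}$, $D(Y,K_n)_i$ is, up to an $o_p$ error, a linear combination of the centered squared noise increments $(\Delta\epsilon_{t_j})^2-E[(\Delta\epsilon_{t_j})^2\mid\mathcal{F}^{(0)}]$ supported on the window $(T_{i-1},T_{i-1+s_n}]$ with the TSRV/WTSRV edge weights. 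Since Assumption \ref{independence} makes these increments conditionally independent across disjoint index blocks, $D(Y,K_n)_i$ is asymptotically (conditionally) Gaussian as $K_n\to\infty$, and under the null $E[|D_i|^2\mid\mathcal{F}^{(0)}]\to 2E(\epsilon^4\mid\mathcal{F}^{(0)})$. By Lemma \ref{lem_epsilon4} and Remark \ref{remk.est.epsilon4} this identifies the centering term of $V$ as a consistent estimator of the common conditional mean $\mu:=2E(\epsilon^4\mid\mathcal{F}^{(0)})$, so the numerator of $V$ is, up to negligible terms, $\sqrt{n/K_n}\,(U-\mu)$.

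Next I would show the centering-estimation error is asymptotically negligible. Since the centering equals $\mu+O_p(n^{-1/2})$ by Lemma \ref{lem_epsilon4}, multiplying by $\sqrt{n/K_n}$ produces an error of order $(n/K_n)^{1/2}n^{-1/2}=K_n^{-1/2}\to0$; the condition $K_n/n^{1/3}\to\infty$ is what guarantees $n/K_n=o(n^{2/3})$ so that this scaling works and also that $M:=\lfloor n/K_n\rfloor-s_n+1\sim n/K_n$ once $s_n=o(n/K_n)$. It therefore suffices to prove that $\tfrac{1}{\sqrt{M}}\sum_{i=1}^M(|D_i|^2-\mu)\overset{\mathcal{L}}{\longrightarrow}N(0,\eta^2)$ conditionally on $\mathcal{F}^{(0)}$, together with $\widehat{\eta}^2\overset{\mathbb{P}}{\longrightarrow}\eta^2$, after which Slutsky's theorem gives $V\overset{\mathcal{L}}{\longrightarrow}N(0,1)$.

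The heart of the argument is the variance computation. Because consecutive windows overlap in $s_n-1$ of their $s_n$ sub-intervals, the array $\{|D_i|^2\}_i$ is conditionally $s_n$-dependent, and $\mathrm{Cov}(|D_i|^2,|D_j|^2\mid\mathcal{F}^{(0)})$ is nonzero whenever the windows or their edge blocks share increments. I would expand $|D_i|^2$ as an explicit quartic form in the noise increments, use conditional independence to isolate the contributions of the increments shared between windows $i$ and $j$, and sum the resulting lag-covariances to obtain $\eta^2=\lim \tfrac{1}{M}\mathrm{Var}\big(\sum_i|D_i|^2\mid\mathcal{F}^{(0)}\big)$. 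Matching this limit to the estimator (\ref{est.eta}) is the key identity: invoking $\tfrac{1}{n}[Y;4]_{\mathcal{G}}\to 2E(\epsilon^4\mid\mathcal{F}^{(0)})+6E(\epsilon^2\mid\mathcal{F}^{(0)})^2$ and $\tfrac{1}{2n}[Y,Y]_{\mathcal{G}}\to E(\epsilon^2\mid\mathcal{F}^{(0)})$ shows that under stationarity $\widehat{\eta}^2$ converges to $24\big(E(\epsilon^4\mid\mathcal{F})^2-E(\epsilon^4\mid\mathcal{F})E(\epsilon^2\mid\mathcal{F})^2+E(\epsilon^2\mid\mathcal{F})^4\big)$, so the nondegeneracy condition in the statement is exactly what makes $\eta^2>0$ and the division legitimate.

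Finally, I would prove the CLT for the conditionally $s_n$-dependent triangular array by a big-block/small-block decomposition: group the windows into big blocks of length $L_n$ with $s_n\ll L_n\ll M$, separated by small blocks of length $s_n$ whose total variance is asymptotically negligible; after discarding the small blocks the big-block sums are conditionally independent, so a conditional Lindeberg/Lyapunov CLT given $\mathcal{F}^{(0)}$ applies, and self-normalization by $\widehat{\eta}\to\eta$ absorbs the (possibly random) $\mathcal{F}$-measurable variance to yield $N(0,1)$ in the same spirit as the stable limit of Theorem \ref{thm1}. The main obstacle is the variance bookkeeping in the overlapping-window step: one must correctly combine the within-window edge weights with the across-window sharing of noise and verify that the delicate mixture of fourth- and second-moment terms is precisely the combination estimated by $\widehat{\eta}^2$; the growing dependence range $s_n\to\infty$ further forces care in the blocking, requiring $s_n$ to grow slowly enough relative to $n/K_n$.
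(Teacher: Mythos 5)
Your proposal is correct in outline and arrives at the right variance, but it takes a genuinely different route to the CLT than the paper. The paper's key structural observation is that the middle sub-intervals of each window cancel exactly between the WTSRV and TSRV pieces, so that (up to negligible terms) $D(Y,K_n)_i= m_i+m_{i+s_n-1}$ where $m_i=m_i^{(1)}-m_i^{(2)}$ is built from the noise on the single block $\mathcal{S}_i$ and the $m_i$'s are conditionally independent across $i$. Consequently $\sum_i\bigl(|D_i|^2-2E(\epsilon^4|\mathcal{F}^{(0)})\bigr)$ splits exactly into two orthogonal discrete martingales in the block index, $H^{(1)}=\sum_i (m_i^2-E(m_i^2|\mathcal{F}^{(0)}))$ and $H^{(2)}=\sum_i m_i m_{i+s_n-1}$, whose predictable quadratic variations are computed from Lemma \ref{lemma4} ($\langle H^{(1)}\rangle\to 5\theta^2-6\theta\gamma^2+6\gamma^4$, $\langle H^{(2)}\rangle\to\theta^2$, $\langle H^{(1)},H^{(2)}\rangle\to0$ with $\theta=E(\epsilon^4|\mathcal{F}^{(0)})$, $\gamma^2=E(\epsilon^2|\mathcal{F}^{(0)})^2$), giving $\eta^2=24[\theta^2-\theta\gamma^2+\gamma^4]$ directly by the martingale CLT. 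You instead treat $\{|D_i|^2\}$ as a conditionally $s_n$-dependent array and propose lag-covariance summation plus Bernstein big-block/small-block blocking. That works, and if you carry out the covariance bookkeeping you will find that only lags $0$ and $s_n-1$ contribute (precisely because of the middle-cancellation above), recovering $M[\,\mathrm{Var}(|D_i|^2)+2\,\mathrm{Cov}(|D_i|^2,|D_{i+s_n-1}|^2)\,]\to M\cdot24[\theta^2-\theta\gamma^2+\gamma^4]$; but your method pays for not exploiting this structure up front, since the blocking requires discarding $O(s_nM/L_n)$ terms and verifying their negligibility, whereas the martingale decomposition needs no truncation. One small correction: the condition $K_n/n^{1/3}\to\infty$ is not needed for the centering error (that is $O_p(n^{-1/2})$ and only requires $K_n\to\infty$ after the $\sqrt{n/K_n}$ scaling); it is needed to kill the $O_p(\sqrt{r_n}/K_n)$ remainder coming from the $O_p(1/K_n)$ within-window residuals of each $|D_i|^2$, i.e.\ $r_n=o(K_n^2)$.
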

We use this result to test the stationarity of the market microstructure noise in subsection \ref{empitest} (Figure \ref{EmpriTest2}).

We can also define another quantity $V^{\prime}(Y,K_n,s_n,2)_\mathcal{T}^n$ based on non-overlapping intervals
\begin{equation*}
V'(Y,K_n,s_n,2)^n_{\mathcal{T}}\equiv\left\{\begin{array}{ll}
\sqrt{n/(s_nK_n)}\left(U'(Y,K_n,s_n,2)^n_{\mathcal{T}}-\frac{1}{n}[Y;4]_{\mathcal{G}}-\frac{3}{2n^2}[Y,Y]_{\mathcal{G}}^2\right)/\widehat{\eta}, & \text{ if } \widehat{\eta}\ne0\\
0, & \text{ if } \widehat{\eta}=0
\end{array}\right.
\end{equation*}
Following from \textbf{Theorem \ref{thm3}}, the asymptotic property of $V'(Y,K_n,s_n,2)_\mathcal{T}^n$ can be derived.
\begin{corol}\label{corollary1}
Under the same conditions as in \textbf{Theorem \ref{thm3}}, assume the noise is stationary:
\begin{equation}
V'(Y,K_n,s_n,2)_{\mathcal{T}}^n\overset{\mathcal{L}}{\longrightarrow}N(0,1)
\end{equation}
on the event that $E(\epsilon^4|\mathcal{F})^2-E(\epsilon^4|\mathcal{F})E(\epsilon^2|\mathcal{F})^2+E(\epsilon^2|\mathcal{F})^4\ne0$.
\end{corol}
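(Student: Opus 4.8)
The plan is to run the argument behind \textbf{Theorem \ref{thm3}} on the non-overlapping design, where replacing overlapping by disjoint windows turns the delicate dependence bookkeeping of the theorem into an (asymptotic) independence that makes a classical central limit theorem applicable almost verbatim. Write $N'=\lfloor n/(s_nK_n)\rfloor$, $j(i)=(i-1)s_n+1$, and $W_i\equiv\left|D(Y,K_n)_{j(i)}\right|^2$, so that $U'(Y,K_n,s_n,2)^n_{\mathcal{T}}=N'^{-1}\sum_{i=1}^{N'}W_i$ and $V'=\sqrt{N'}\,(U'-c_n)/\widehat\eta$ with $c_n$ the common centering term appearing in (\ref{test2}). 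First I would import from the proof of \textbf{Theorem \ref{thm3}} the local analysis of $D(Y,K_n)_\ell$: under the stationary null $D(Y,K_n)_\ell$ is governed by the two edges of its window, its conditional mean is asymptotically $0$, and $E(W_i\mid\mathcal{F}^{(0)})\to c_\infty$, $\mathrm{Var}(W_i\mid\mathcal{F}^{(0)})\to\eta^2$ uniformly in $i$, where $c_\infty=\lim c_n$ and $\eta^2=\lim\widehat\eta^2$ is the quantity in (\ref{est.eta}), consistently estimated by $\widehat\eta^2$ via \textbf{Lemma \ref{lem_epsilon4}} and \textbf{Remark \ref{remk.est.epsilon4}} on the stated non-degeneracy event.

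The second and decisive step is to record that the windows indexed by $j(1)<j(2)<\cdots<j(N')$ occupy the disjoint blocks $(T_{(i-1)s_n},T_{is_n}]$, $i=1,\dots,N'$, which meet only at single endpoints. By the conditional independence of observations across distinct times (\textbf{Assumption \ref{independence}}), the family $\{W_i\}_{i=1}^{N'}$ is, conditionally on $\mathcal{F}^{(0)}$, independent up to the contribution of the shared endpoints; since each shared endpoint enters only one increment of each neighbouring window, this contribution is absorbed into an $o_p(1)$ term exactly as the negligible boundary increments were handled in the proof of \textbf{Theorem \ref{thm3}}. This is precisely where the corollary is easier than the theorem: in the overlapping statistic $U$ neighbouring local statistics share an edge block and one must track lag-$(s_n-1)$ covariances, whereas here those covariances are simply absent.

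Third, I would apply a triangular-array central limit theorem to the centered, conditionally independent summands $W_i-E(W_i\mid\mathcal{F}^{(0)})$, working under the regular conditional law given $\mathcal{F}^{(0)}$ (equivalently, a martingale central limit theorem for the block-by-block partial sums). The Lyapunov/Lindeberg condition follows from a uniform $(2+\delta)$-moment bound on $W_i$, which comes from the local boundedness of the higher conditional moments of $\epsilon$ in \textbf{Assumption \ref{bounded}} together with the shrinking-mesh control of \textbf{Assumption \ref{hfasym}}. With $N'$ summands normalized by $\sqrt{N'}$ this yields $\sqrt{N'}\,(U'-c_n)\overset{\mathcal{L}}{\longrightarrow}\mathcal{MN}(0,\eta^2)$ conditionally on $\mathcal{F}^{(0)}$; dividing by $\widehat\eta\overset{\mathbb{P}}{\longrightarrow}\eta>0$ on the event $E(\epsilon^4\mid\mathcal{F})^2-E(\epsilon^4\mid\mathcal{F})E(\epsilon^2\mid\mathcal{F})^2+E(\epsilon^2\mid\mathcal{F})^4\ne0$ and invoking Slutsky's theorem upgrades this to $V'\overset{\mathcal{L}}{\longrightarrow}N(0,1)$.

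I expect the main obstacle to be the variance calibration: one must verify that the per-window conditional variance, combined with the non-overlapping normalization $\sqrt{n/(s_nK_n)}$, reproduces exactly the same limit $\eta^2=\lim\widehat\eta^2$ that is used (and estimated by the same $\widehat\eta$) in \textbf{Theorem \ref{thm3}}. Concretely this requires checking that the second-moment contribution of $D(Y,K_n)_{j(i)}$ is insensitive to the window length $s_n$ once $s_n\to\infty$, so that the interior (non-edge) part of each window, down-weighted by a factor of order $s_n^{-1}$ in the statistic, drops out of the limiting variance, and that the count of independent windows $N'=\lfloor n/(s_nK_n)\rfloor$ is exactly matched against the normalization. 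Once this matching is established, the remaining ingredients — negligibility of the shared endpoints and the Lindeberg verification — are routine given \textbf{Assumption \ref{independence}} and \textbf{Assumption \ref{bounded}}.
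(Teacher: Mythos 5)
Your overall plan---localize the Theorem \ref{thm3} analysis to the disjoint windows, exploit conditional independence across blocks, and apply a triangular-array CLT under the conditional law given $\mathcal{F}^{(0)}$---is exactly the route the paper intends (it gives no separate proof, asserting only that the corollary follows from Theorem \ref{thm3}), and your treatment of the shared endpoints and the Lindeberg condition is fine. The problem is the step you yourself flag as ``the main obstacle'' and then leave unverified: the variance calibration does not come out to $\eta^2$. Writing $m_i=m_i^{(1)}-m_i^{(2)}$ as in (\ref{m_i}), the proof of Theorem \ref{thm3} shows $D(Y,K_n)_\ell\approx m_\ell+m_{\ell+s_n-1}$, so your $W_i\approx\left(m_{(i-1)s_n+1}+m_{is_n}\right)^2$ with all $2N'$ block indices distinct. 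By \textbf{Lemma \ref{lemma4}}, each centered summand $W_i-2E(\epsilon^4|\mathcal{F}^{(0)})=\left(m_{a_i}^2-E(\epsilon^4|\mathcal{F}^{(0)})\right)+\left(m_{b_i}^2-E(\epsilon^4|\mathcal{F}^{(0)})\right)+2m_{a_i}m_{b_i}$ is a sum of three conditionally uncorrelated pieces with total conditional variance $2\zeta^2+4E(\epsilon^4|\mathcal{F}^{(0)})^2$, where $\zeta^2=5E(\epsilon^4|\mathcal{F}^{(0)})^2-6E(\epsilon^4|\mathcal{F}^{(0)})E(\epsilon^2|\mathcal{F}^{(0)})^2+6E(\epsilon^2|\mathcal{F}^{(0)})^4$ is the conditional variance of $m_i^2$. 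In the overlapping statistic, by contrast, every interior block enters two windows---once as a leading edge and once as a trailing edge---so the $\sum_j\left(m_j^2-E(m_j^2|\mathcal{F}^{(0)})\right)$ contribution carries a factor $2$ and the limit variance is $4\zeta^2+4E(\epsilon^4|\mathcal{F}^{(0)})^2=\eta^2$. The two constants agree only when $\zeta^2=0$, a degenerate case; for Gaussian noise the ratio is $102/168$.

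Consequently, executing your plan yields $\sqrt{N'}\left(U'-c_n\right)\overset{\mathcal{L}-s}{\longrightarrow}\mathcal{MN}\left(0,\,2\zeta^2+4E(\epsilon^4|\mathcal{F}^{(0)})^2\right)$, and dividing by $\widehat\eta\overset{\mathbb{P}}{\longrightarrow}\eta$ gives a normal limit with variance $\left(14E(\epsilon^4|\mathcal{F}^{(0)})^2-12E(\epsilon^4|\mathcal{F}^{(0)})E(\epsilon^2|\mathcal{F}^{(0)})^2+12E(\epsilon^2|\mathcal{F}^{(0)})^4\right)/\eta^2\ne1$, not $N(0,1)$. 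To close the gap you must either replace $\widehat\eta^2$ by a consistent estimator of $2\zeta^2+4E(\epsilon^4|\mathcal{F}^{(0)})^2$ for the non-overlapping design, or exhibit a cancellation restoring the missing $2\zeta^2$; none is visible in the disjoint-window sum. This difficulty is arguably inherited from the paper itself, whose remark following the corollary already calls the claimed coincidence of the two asymptotic variances ``a little bit surprising''---but your writeup, by deferring exactly this check, does not resolve it.
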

\begin{remk}
	It is a little bit surprising when we compare \textbf{Corollary \ref{corollary1}} with \textbf{Theorem \ref{thm3}}, since the limiting mixed normals of $U(Y,K_n,s_n,2)^n_{\mathcal{T}}$ and $U'(Y,K_n,s_n,2)^n_{\mathcal{T}}$ have the same asymptotic variance which can be consistently estimated by $\widehat{\eta}$, though the convergence rate of the former is lower. However, the results only demonstrate the limiting behaviors. $V'(Y,K_n,s_n,2)^n_\mathcal{T}$ required less computation, while $V(Y,K_n,s_n,2)^n_\mathcal{T}$ is more accurate in terms of asymptotic approximation because of its higher rate of convergence.
\end{remk}

\subsection{The third test $\overline{V}(Y,K_n,2)^n$}\label{equi.test}
There is a moderate edge effect in the second test statistic (\ref{test2}) (coming from the first $s_nK_n$ and the last $s_nK_n$ observations). Motivated by \textbf{Remark \ref{rv.in.test}} regarding the first test statistic (\ref{test1}), we can design a complementary test statistic $\overline{V}(Y,K_n,2)^n$ (defined by (\ref{test3})) with the same asymptotic property with $V(Y,K_n,s_n,2)^n$ when the noise is stationary, yet has a smaller edge effect in finite sample. However, we should keep $V(Y,K_n,s_n,2)^n$ in our toolbox - although $\overline{V}(Y,K_n,2)^n$ offers better approximation when noise is stationary, we will see $V(Y,K_n,s_n,2)^n$ has more statistical power as indicated in Figure \ref{compare_NVV}.

The key component of the third test statistic is
\begin{equation}\label{gene.U3}
   \overline{U}(Y,K_n,2)^n_\mathcal{T}\equiv\frac{1}{4n}\sum^{\lfloor n/K_n\rfloor-1}_{i=1}\left|[Y,Y]_{\mathcal{S}_{i+1}}-[Y,Y]_{\mathcal{S}_i}\right|^2
\end{equation}
where each $\mathcal{S}_i$ denotes the shrinking sampling grid $\{t_{(i-1)K_n},\cdots,t_{iK_n}\}$ over $[T_{i-1},T_i]$ (recall the observational notation in subsection \ref{ObsNotation}), and $[Y,Y]_{\mathcal{S}_i}$ is the realized variance of process $Y$ on the grid $\mathcal{S}_i$. Our third test statistic is defined as
\begin{equation}\label{test3}
	\overline{V}(Y,K_n,2)_\mathcal{T}^n\equiv\left\{\begin{array}{ll}
	\sqrt{n/K_n}\left(\overline{U}(Y,K_n,2)^n_{\mathcal{T}}-\frac{1}{n}[Y;4]_{\mathcal{G}}-\frac{3}{2n^2}[Y,Y]_{\mathcal{G}}^2\right)/\widehat{\eta}, & \text{ if } \widehat{\eta}\ne0\\
	0, & \text{ if } \widehat{\eta}=0
	\end{array}\right.
\end{equation}
where $\widehat{\eta}$ was defined in (\ref{est.eta}).

\begin{thm}\label{thm4} \textbf{\textit{($\overline{V}(Y,K_n,2)_\mathcal{T}^n$ under the null)}}
Under the \textbf{Assumption \ref{diffusion}, \ref{jump}, \ref{independence}, \ref{bounded}, \ref{hfasym}}, assume the noise process is stationary, suppose $K_n\to\infty$, $K_n/n\to0$ 
then the test statistic $\overline{V}(Y,K_n,2)^n_\mathcal{T}$ has the following asymptotic property:
\begin{equation}
\overline{V}(Y,K_n,2)_\mathcal{T}^n\overset{\mathcal{L}}{\longrightarrow}\mathcal{N}(0,1)
\end{equation}
on the event that $E(\epsilon^4|\mathcal{F})^2-E(\epsilon^4|\mathcal{F})E(\epsilon^2|\mathcal{F})^2+E(\epsilon^2|\mathcal{F})^4\ne0$.
\end{thm}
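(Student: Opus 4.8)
The plan is to treat $\overline V(Y,K_n,2)^n_\mathcal{T}$ as a self-normalized sum of block statistics and to run a central limit theorem for a one-dependent array, working conditionally on $\mathcal{F}^{(0)}$. Before anything else I would read (\ref{test3}) as intended to be built from $\overline U(Y,K_n,2)^n_\mathcal{T}$ rather than the overlapping-window quantity $U$ (the displayed formula is a verbatim copy of (\ref{test2})), i.e. the numerator is $\sqrt{n/K_n}\bigl(\overline U(Y,K_n,2)^n_\mathcal{T}-\tfrac1n[Y;4]_{\mathcal G}-\tfrac{3}{2n^2}[Y,Y]_{\mathcal G}^2\bigr)$. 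The engine of the argument is the decomposition of each block realized variance $[Y,Y]_{\mathcal S_i}=\sum(Y_{t_j}-Y_{t_{j-1}})^2$ into a pure-noise part $\sum(\epsilon_{t_j}-\epsilon_{t_{j-1}})^2$, a signal part $\sum(\Delta X)^2$, and a signal--noise cross part. Under Assumptions \ref{diffusion}, \ref{bounded} and \ref{hfasym} the signal part is $O_p(K_n/n)$ and the cross part $O_p(\sqrt{K_n/n})$ per block, so after the global $\tfrac1{4n}$ normalization they are negligible at the $\sqrt{n/K_n}$ scale. Thus $A_{i+1}-A_i:=[Y,Y]_{\mathcal S_{i+1}}-[Y,Y]_{\mathcal S_i}$ reduces to a difference of adjacent noise energies which, because neighbouring blocks share only one observation, is a conditionally centered sum of $\approx 2K_n$ squared noise increments with conditional variance of order $K_n$.

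Second, I would establish the law-of-large-numbers centring. A direct moment computation under the null gives $E\bigl((A_{i+1}-A_i)^2\mid\mathcal F^{(0)}\bigr)=8K_nE(\epsilon^4|\mathcal F^{(0)})\,(1+o(1))$, whence $\overline U\overset{\mathbb{P}}{\longrightarrow}2E(\epsilon^4|\mathcal F^{(0)})$. By \textbf{Lemma \ref{lem_epsilon4}} and \textbf{Remark \ref{remk.est.epsilon4}} the subtracted term $\tfrac1n[Y;4]_{\mathcal G}-\tfrac3{2n^2}[Y,Y]_{\mathcal G}^2$ converges to the \emph{same} limit $2E(\epsilon^4|\mathcal F^{(0)})$ with an $O_p(n^{-1/2})$ error (the $6E(\epsilon^2|\mathcal F^{(0)})^2$ terms cancel), so $\sqrt{n/K_n}$ times that error is $O_p(K_n^{-1/2})=o_p(1)$ and recentring around the deterministic constant $2E(\epsilon^4|\mathcal F^{(0)})$ is harmless. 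The numerator of $\overline V$ is therefore asymptotically $\tfrac1{4\sqrt{nK_n}}\sum_i\bigl\{(A_{i+1}-A_i)^2-E[(A_{i+1}-A_i)^2\mid\mathcal F^{(0)}]\bigr\}$ plus a bias $\tfrac{1}{4}\sqrt{n/K_n}\bigl(\tfrac1n\sum_iE[(A_{i+1}-A_i)^2\mid\mathcal F^{(0)}]-2E(\epsilon^4|\mathcal F^{(0)})\bigr)$ that I would bound to be $o_p(1)$.

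Third comes the CLT. Writing $D_{n,i}=(A_{i+1}-A_i)^2-E[(A_{i+1}-A_i)^2\mid\mathcal F^{(0)}]$, the row-array $\{D_{n,i}\}_{i=1}^{r_n-1}$ is, conditionally on $\mathcal F^{(0)}$, centered and one-dependent: only $D_{n,i}$ and $D_{n,i+1}$ share the block $\mathcal S_{i+1}$, while $|i-j|\ge 2$ gives conditional independence. I would verify a conditional Lyapunov condition from the uniform moment bound in Assumption \ref{bounded}, then invoke an $m$-dependent (here $m=1$) central limit theorem — equivalently a big-block/small-block reduction to independent summands — to get conditional asymptotic normality with variance $\lim\tfrac1{16nK_n}\operatorname{Var}\bigl(\sum_iD_{n,i}\mid\mathcal F^{(0)}\bigr)$, which combines the per-term term $\operatorname{Var}((A_{i+1}-A_i)^2\mid\mathcal F^{(0)})$ and the one-step covariance $\operatorname{Cov}(D_{n,i},D_{n,i+1}\mid\mathcal F^{(0)})$. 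One then shows this limit coincides with $\operatorname{plim}\widehat\eta^2$ and that $\widehat\eta^2$ defined in (\ref{est.eta}) consistently estimates it, using the same LLN tools (\textbf{Lemma \ref{lem_epsilon4}} together with $\tfrac1n[Y,Y]_{\mathcal G}\to2E(\epsilon^2|\mathcal F^{(0)})$). Self-normalization converts the $\mathcal F^{(0)}$-mixed-normal limit into $N(0,1)$ on the event $E(\epsilon^4|\mathcal F)^2-E(\epsilon^4|\mathcal F)E(\epsilon^2|\mathcal F)^2+E(\epsilon^2|\mathcal F)^4\ne0$; the $\mathcal F^{(0)}$-measurability of the limiting variance is exactly what lifts the conditional CLT to the stated (stable, hence in-law) convergence.

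The main obstacle I anticipate is the fourth-moment step: evaluating $\operatorname{Var}(\sum_iD_{n,i}\mid\mathcal F^{(0)})$ in closed form and proving it reproduces the precise coefficients in (\ref{est.eta}). This demands carrying the full fourth- and cross-block moments of squared noise increments, and in particular distinguishing the $O(K_n^2)$ Gaussian-type contributions (governed by $\operatorname{Var}(A_{i+1}-A_i\mid\mathcal F^{(0)})$ and its one-step covariance) from the $O(K_n)$ fourth-cumulant contributions that must be shown negligible, and then confirming that the surviving combination matches $\operatorname{plim}\widehat\eta^2$ so that the normalization is exactly calibrated. Careful bookkeeping of the boundary increments where neighbouring blocks meet, together with the bias bound flagged in the second step, is the delicate part; the reduction-to-noise and the $m$-dependent stable CLT are otherwise the same machinery already deployed for \textbf{Theorem \ref{thm3}}. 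An alternative I would keep in reserve is to prove $\overline U$ asymptotically equivalent to the overlapping-window quantity $U$ of Theorem \ref{thm3} at rate $o_p(\sqrt{K_n/n})$ and quote that theorem directly, but since the two are structurally different (non-overlapping block realized variances versus overlapping $\mathrm{WTSRV}-\mathrm{TSRV}$ differences) I expect the direct CLT above to be the cleaner route.
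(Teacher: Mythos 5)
Your proposal is correct and follows essentially the route the paper intends: the paper gives no separate proof of \textbf{Theorem \ref{thm4}}, relying on the \textbf{Theorem \ref{thm3}} machinery (reduction of block realized variances to the noise statistics $m_i$ of (\ref{m_i}), the conditional moment computations of \textbf{Lemma \ref{lemma4}}, and a CLT for the resulting nearly independent array with limiting variance $\eta^2=24[E(\epsilon^4|\mathcal{F}^{(0)})^2-E(\epsilon^4|\mathcal{F}^{(0)})E(\epsilon^2|\mathcal{F}^{(0)})^2+E(\epsilon^2|\mathcal{F}^{(0)})^4]$ consistently estimated by $\widehat{\eta}^2$), which is exactly your plan. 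Your only cosmetic deviation is invoking a one-dependent-array CLT where the paper decomposes into the two orthogonal martingales $H^{(1)}_T$, $H^{(2)}_T$ (here at lag one); you also correctly identify the typo in (\ref{test3}) and the calibration $\operatorname{plim}\widehat{\eta}^2=\eta^2$.
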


\subsection{Optimal statistical powers}\label{VV.alternative}
How $V(Y,K_n,s_n,2)^n$ and $\overline{V}(Y,K_n,2)^n$ behave when the noise is non-stationary determine their statistical powers. If the test statistics tend to be large when the microstructure noise is non-stationary, they can easily detect non-stationarity.

The behaviors of $U(Y,K_n,s_n,2)^n$ and $\overline{U}(Y,K_n,2)^n$ largely indicate the behaviors of $V(Y,K_n,s_n,2)^n$ and $\overline{V}(Y,K_n,2)^n$. We investigate in this subsection the asymptotic behaviors of $U(Y,K_n,s_n,2)^n$ and $\overline{U}(Y,K_n,2)^n$ when microstructure noise is non-stationary under a slightly strengthened setting, we need 2 more assumptions on top of those assumptions in subsection \ref{assumptions}:
\begin{assum}\label{regular}
	\textbf{Regular sampling.} The sample grid is equi-distant over the interval $[0,\mathcal{T}]$.
\end{assum}
\begin{assum}\label{Ito.variance}
	\textbf{Noise variance process is It\^o.} $\{g_t\}_{t\ge0}$ is an It\^o diffusion (in time):
	\begin{equation}\label{g.Ito}
	g_t=\int_0^t\mu^{(g)}_s\,\mathrm{d}s+\int_0^t\sigma^{(g)}_s\,\mathrm{d}B_s
	\end{equation}
	where $\{\mu^{(g)}_t\}_{t\ge0}$ is locally bounded, optional and c\`adl\`ag, $\{B_t\}_{t\ge0}$ is a standard Brownian motion, $\{\sigma^{(g)}_t\}_{t\ge0}$ is a locally bounded It\^o diffusion.
\end{assum}
As described in subsection \ref{ObsNotation}, we partition the whole time interval into $r_n$ disjoint sub-intervals $(T_{i-1},T_i]$ for $i=1,2,\cdots,r_n$ such that in each sub-interval we have $K_n$ observations, particularly we have $T_0=0$ and $\mathcal{T}-T_{r_n}=o(1)$, $r_nK_n/n\to1$. Since \textbf{Assumption \ref{regular}} is assumed, we let $\Delta T=T_i-T_{i-1},\,\forall i=1,2,\cdots,r_n$.

\begin{thm}\label{thm5.0}\textbf{\textit{($U(Y,K_n,s_n,2)^n$ under the alternative)}}
	Assume \textbf{Assumption \ref{diffusion}, \ref{jump}, \ref{independence}, \ref{bounded}, \ref{hfasym}} as well as \textbf{Assumption \ref{regular}, \ref{Ito.variance}}. Let 
	$\frac{K_n}{n^{1/2}}\to\infty$, $s_n\to\infty$ and $\frac{s_nK_n}{n}\to0$ but $\frac{s_n^{2/5}K_n}{n^{3/5}}\to\infty$. Then, we have
	\begin{equation}\label{UH0}
	\sqrt{\frac{n}{K_n}}\left(\frac{n}{s_nK_n^2}U(Y,K_n,s_n,2)^n_{\mathcal{T}}-E^{(1)}_n-E^{(2)}_n-E^{(3)}_n\right)\overset{\mathcal{L}-s}{\longrightarrow}\mathcal{MN}\left(0,\frac{2{\mathcal{T}}}{9}\int_0^{\mathcal{T}}(\sigma^{(g)}_t)^4\,\mathrm{d}t\right)
	\end{equation}
	on the event that $\{\sigma^{(g)}_t\}_{t\in[0,\mathcal{T}]}$ is non-vanishing, where
	\begin{eqnarray*}
		E^{(1)}_n&=&\frac{(s_n-2)^2}{3s_n^2}\int_0^\mathcal{T}(\sigma^{(g)}_t)^2\,\mathrm{d}t\\
		E^{(2)}_n&=&-\frac{s_nK_n}{6n}\left[(\sigma^{(g)}_0)^2+(\sigma^{(g)}_\mathcal{T})^2\right]\mathcal{T}\\
		E^{(3)}_n&=&\frac{2n}{s_nK_n^2{\mathcal{T}}}\int_0^{\mathcal{T}}h_t\,\mathrm{d}t, \hspace{5mm} h_t(\omega^{(0)})\equiv E(\epsilon^4_t|\mathcal{F}^{(0)})(\omega^{(0)})
	\end{eqnarray*}
\end{thm}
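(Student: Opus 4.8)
The plan is to read $\frac{n}{s_nK_n^2}U(Y,K_n,s_n,2)^n_{\mathcal{T}}$ as a realized measure of the quadratic variation of the noise-variance process $\{g_t\}$ (a ``volatility-of-noise-volatility'' estimator), so that $E^{(1)}_n$ is its leading Riemann-sum limit $\tfrac13\int_0^{\mathcal{T}}(\sigma^{(g)}_t)^2\,\mathrm{d}t$, $E^{(2)}_n$ is a boundary correction, $E^{(3)}_n$ is the bias injected by the noise quarticity, and the fluctuation is governed by $(\sigma^{(g)})^4$. First I would localize in the usual way, replacing the locally bounded coefficients $b,\sigma,\mu^{(g)},\sigma^{(g)}$ and the moment bound in \textbf{Assumption \ref{bounded}} by bounded ones; under \textbf{Assumption \ref{regular}} each window $(T_{i-1},T_{i-1+s_n}]$ has vanishing time-length $\delta_n=s_nK_n\mathcal{T}/n\to0$.

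\textbf{Decomposition.} The core step is to localize the expansion (\ref{tsdif2}) from the proof of \textbf{Theorem \ref{thm2}} to the window $(T_{i-1},T_{i-1+s_n}]$, obtaining
\[
D(Y,K_n)_i=\sqrt{K_n}\,\Lambda_i+R_i,\qquad \Lambda_i=\overline{g}^{(\mathrm{start})}_i+\overline{g}^{(\mathrm{end})}_i-2\overline{g}^{(\mathrm{middle})}_i,
\]
where $\Lambda_i$ is the edge-minus-middle contrast of $g$ on the window (as in \textbf{Lemma \ref{lem_edge}} and \textbf{Remark \ref{N.alternative}}) and $R_i$ is the realized-variance CLT remainder, conditionally centered with conditional second moment governed by the local quarticity $h$ (it is exactly the term that makes $N(Y,K_n)^n_T$ standard normal under the null). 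Squaring, $|D_i|^2=K_n\Lambda_i^2+2\sqrt{K_n}\Lambda_iR_i+R_i^2$; under the stated rates the three pieces separate, with $K_n\Lambda_i^2$ carrying the $\sigma^{(g)}$-signal and its fluctuation, $R_i^2$ contributing only to the centering (via \textbf{Lemma \ref{lem_epsilon4}}), and the cross term negligible.

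\textbf{It\^o approximation and the law of large numbers.} Using \textbf{Assumption \ref{Ito.variance}}, on each window $g_t=g_{T_{i-1}}+\int_{T_{i-1}}^t\mu^{(g)}\,\mathrm{d}s+\int_{T_{i-1}}^t\sigma^{(g)}\,\mathrm{d}B$; the drift and the discretization of the subinterval averages are of smaller order and $\sigma^{(g)}$ is locally constant, so $\Lambda_i$ is asymptotically $\sigma^{(g)}_{T_{i-1}}$ times a centered Gaussian second-difference (bridge) functional of Brownian motion with conditional variance proportional to $\delta_n$; the proportionality constant, together with the discrete count of middle subintervals, produces the factor $(s_n-2)^2/(3s_n^2)$ in $E^{(1)}_n$. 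Averaging $K_n\Lambda_i^2$ over the $\approx n/K_n$ windows and invoking $\frac1{r_n}\sum_i(\sigma^{(g)}_{T_{i-1}})^2\to\frac1{\mathcal{T}}\int_0^{\mathcal{T}}(\sigma^{(g)}_t)^2\,\mathrm{d}t$ gives $E^{(1)}_n$; windows abutting the global endpoints $0,\mathcal{T}$ enter with reduced multiplicity, which is precisely $E^{(2)}_n$; and $\frac1{r_n}\sum_iR_i^2\to\frac{2}{\mathcal{T}}\int_0^{\mathcal{T}}h_t\,\mathrm{d}t$ yields $E^{(3)}_n$ after the $\tfrac{n}{s_nK_n^2}$ scaling.

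\textbf{Stable CLT and the main obstacle.} For the limit law I would isolate the dominant martingale term, the conditionally centered sum of $K_n\Lambda_i^2$, write it as a triangular array adapted to the filtration generated by $W$ and $B$, and apply the stable CLT for such arrays from \cite{jp12}, verifying the conditional Lindeberg condition, the nesting/orthogonality conditions that force the limit to be $\mathcal{F}$-conditionally Gaussian (hence mixed normal, since $\sigma^{(g)}$ is itself random), and convergence of the conditional variance to $\frac{2\mathcal{T}}{9}\int_0^{\mathcal{T}}(\sigma^{(g)}_t)^4\,\mathrm{d}t$. The hard part will be this conditional-variance computation: because $U$ is built from \emph{overlapping} windows, the summands $K_n\Lambda_i^2$ are $s_n$-dependent, so I must evaluate $\mathrm{Cov}(\Lambda_i^2,\Lambda_{i+m}^2)=2\,\mathrm{Cov}(\Lambda_i,\Lambda_{i+m})^2$ across the whole band $0\le m<s_n$ through the Gaussian (Wick) structure of the $g$-increments and sum them to the stated constant, determining how the overlap enters both the rate and the limiting variance. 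Simultaneously I must show, under the precise rates $\frac{K_n}{n^{1/2}}\to\infty$, $\frac{s_nK_n}{n}\to0$ and in particular $\frac{s_n^{2/5}K_n}{n^{3/5}}\to\infty$, that the cross term $\sqrt{K_n}\Lambda_iR_i$, the remainder $R_i^2$, the neglected drift $\mu^{(g)}$, and the It\^o-discretization error are each $o_p(\sqrt{K_n/n})$ relative to the signal, so that only $\sigma^{(g)}$ survives in the limiting variance while $\sigma$ and $h$ remain confined to the centering.
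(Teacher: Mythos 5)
Your plan follows essentially the same architecture as the paper's proof: expand the local squared statistic into a quadratic functional of the increments of $g$ plus a noise-quarticity remainder, read off the Riemann-sum centerings $E^{(1)}_n$, $E^{(2)}_n$, $E^{(3)}_n$ from the diagonal and edge terms, and obtain the stable CLT from the conditionally centered martingale part via its predictable quadratic variation and orthogonality with the driving Brownian motions. The paper carries this out by proving the non-overlapping analogue (\textbf{Theorem \ref{thm5}}) in full and then decomposing the windowed error into a noise term $A^n_{\mathcal{T}}$, a centering term $B^n_{\mathcal{T}}$, and four martingale cross-terms $(I)$--$(IV)$, of which only the between-block term $(IV)$ survives with the $\frac{2\mathcal{T}}{9}\int(\sigma^{(g)}_t)^4\,\mathrm{d}t$ variance while the within-block terms are $O_p(s_n^{-1/2})$ or smaller; this is exactly the overlapping-window covariance computation you flag as the hard part, so you have correctly located where the work is. One caution on your proposed shortcut: the identity $\mathrm{Cov}(\Lambda_i^2,\Lambda_{i+m}^2)=2\,\mathrm{Cov}(\Lambda_i,\Lambda_{i+m})^2$ requires joint Gaussianity, which the increments of $g$ do not have since $\sigma^{(g)}$ is itself a stochastic It\^o process; it is only valid after the frozen-coefficient approximation, and the error of that approximation must be controlled uniformly over the $O(n/K_n)$ windows (the paper does this with Burkholder--Davis--Gundy bounds on the local-constancy error of $(\sigma^{(g)})^2$), so that step is a substantive estimate rather than bookkeeping.
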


\begin{thm}\label{thm5}\textbf{\textit{($\overline{U}(Y,K_n,2)^n$ under the alternative)}}
	Assume \textbf{Assumption \ref{diffusion}, \ref{jump}, \ref{independence}, \ref{bounded}, \ref{hfasym}} as well as \textbf{Assumption \ref{regular}, \ref{Ito.variance}}.
	Let $\frac{K_n}{n^{1/2}}\to\infty$ and $\frac{K_n}{n^{2/3}}\to0$. 
	Then, we have
	\begin{equation}\label{UH}
	\sqrt{\frac{n}{K_n}}\left(\frac{n}{K_n^2}\overline{U}(Y,K_n,2)_{\mathcal{T}}^n-\overline{E}^{(1)}-\overline{E}^{(2)}_n\right)\overset{\mathcal{L}-s}{\longrightarrow}\mathcal{MN}\left(0,\frac{2{\mathcal{T}}}{3}\int_0^{\mathcal{T}}(\sigma^{(g)}_t)^4\,\mathrm{d}t\right)
	\end{equation}
	on the event that $\{\sigma^{(g)}_t\}_{t\in[0,\mathcal{T}]}$ is non-vanishing, where
	\begin{eqnarray*}
		\overline{E}^{(1)}  &=&\frac{2}{3}\int_0^\mathcal{T}(\sigma^{(g)}_t)^2\,\mathrm{d}t\\
		\overline{E}^{(2)}_n&=&\frac{2n}{K_n^2{\mathcal{T}}}\int_0^{\mathcal{T}}h_t\,\mathrm{d}t, \hspace{5mm} h_t(\omega^{(0)})\equiv E(\epsilon^4_t|\mathcal{F}^{(0)})(\omega^{(0)})
	\end{eqnarray*}
\end{thm}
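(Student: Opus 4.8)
The plan is to reduce every block realized variance to the realized variance of the noise, split each into a conditional mean governed by $g$ and a conditionally centred fluctuation governed by $h=E(\epsilon^4_t\mid\mathcal F^{(0)})$, expand the squared block-to-block difference, and then push a stable martingale CLT through the surviving (``signal'') piece. First I would write $Y=X+\epsilon$ under the identification Assumption \ref{identification} and expand $[Y,Y]_{\mathcal S_i}=[\epsilon,\epsilon]_{\mathcal S_i}+2[X,\epsilon]_{\mathcal S_i}+[X,X]_{\mathcal S_i}$. Since each block carries $K_n\to\infty$ points while $\Delta T=T_i-T_{i-1}=O(K_n/n)\to0$, the noise term $[\epsilon,\epsilon]_{\mathcal S_i}=O_p(K_n)$ dominates the continuous and cross contributions $O_p(\Delta T)$, $O_p(\sqrt{\Delta T})$, and the finitely many jumps (Assumption \ref{finite.jump}) touch only $O(1)$ blocks. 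Using Assumptions \ref{diffusion} and \ref{hfasym} together with a truncation argument, I would show these $X$-terms contribute $o_p(1)$ to $\sqrt{n/K_n}\bigl(\tfrac{n}{K_n^2}\overline U-\cdots\bigr)$, so that $\tfrac{n}{K_n^2}\overline U$ may be replaced by $\tfrac1{4K_n^2}\sum_i([\epsilon,\epsilon]_{\mathcal S_{i+1}}-[\epsilon,\epsilon]_{\mathcal S_i})^2$; the lower rate $K_n/n^{1/2}\to\infty$ is exactly what makes the $g$-signal dominate here.

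Next I would set $A_i=E([\epsilon,\epsilon]_{\mathcal S_i}\mid\mathcal F^{(0)})=\sum_{t_j\in\mathcal S_i}(g_{t_j}+g_{t_{j-1}})$ and $M_i=[\epsilon,\epsilon]_{\mathcal S_i}-A_i$; by conditional independence (Assumption \ref{independence}) and mean-zero noise these are the conditional-mean and conditionally-centred parts. The expansion $([\epsilon,\epsilon]_{\mathcal S_{i+1}}-[\epsilon,\epsilon]_{\mathcal S_i})^2=(A_{i+1}-A_i)^2+2(A_{i+1}-A_i)(M_{i+1}-M_i)+(M_{i+1}-M_i)^2$ then organises the proof: the first term yields $\overline E^{(1)}$ and the limiting fluctuation, the third yields $\overline E^{(2)}_n$, and the middle term is conditionally centred. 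Under Assumption \ref{Ito.variance} and regular sampling (Assumption \ref{regular}, mesh $\delta=\mathcal T/n$), $A_{i+1}-A_i$ is a triangularly weighted increment of $g$ supported on $[T_{i-1},T_{i+1}]$; equivalently $A_{i+1}-A_i\approx 2K_n\Delta_i$ with $\Delta_i:=\bar g_{i+1}-\bar g_i$ the difference of block averages and weight $w_i$ the unit triangle centred at $T_i$. The key It\^o computation $E(\Delta_i^2\mid\mathcal F_{T_{i-1}})\approx\int w_i^2(\sigma^{(g)})^2\approx\tfrac23(\sigma^{(g)}_{T_i})^2\Delta T$ gives, on summation, $\tfrac{n}{K_n^2}\cdot\tfrac14\sum_i(A_{i+1}-A_i)^2\to\overline E^{(1)}=\tfrac23\int_0^{\mathcal T}(\sigma^{(g)})^2$, while a parallel fourth-moment computation for $\sum_i(M_{i+1}-M_i)^2$, using Lemma \ref{lem_epsilon4}, produces $\overline E^{(2)}_n=\tfrac{2n}{K_n^2\mathcal T}\int_0^{\mathcal T}h$.

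I would then dispose of the remaining terms and justify the two rate conditions. Because consecutive blocks share no increment, $M_i$ and $M_{i'}$ are conditionally uncorrelated for $|i-i'|\ge2$, and a second-moment bound gives $\sqrt{n/K_n}\,\tfrac1{K_n^2}\sum_i[(M_{i+1}-M_i)^2-E(\cdot\mid\mathcal F^{(0)})]=O_p(n/K_n^2)$ and cross term $=O_p(\sqrt n/K_n)$, both $o_p(1)$ precisely under $K_n/n^{1/2}\to\infty$. The upper rate $K_n/n^{2/3}\to0$ I would use to keep the drift/discretisation corrections to $\overline E^{(1)}$ and $\overline E^{(2)}_n$ below the scale $\sqrt{K_n/n}$, so that only the diffusive fluctuation of the signal survives the centring. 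After these reductions the statistic equals $\sqrt{n/K_n}\sum_i\bigl(\Delta_i^2-E(\Delta_i^2\mid\mathcal F_{T_{i-1}})\bigr)+o_p(1)$.

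The hard part is the stable CLT for this last sum, whose obstacle is that $\{\Delta_i^2\}$ is \emph{not} a martingale-difference array: the triangles $w_i,w_{i+1}$ overlap on $[T_i,T_{i+1}]$, so the array is $1$-dependent with a nonzero nearest-neighbour conditional covariance of order $\Delta T(\sigma^{(g)}_{T_i})^2$. I would resolve this by a big-block/small-block grouping (collecting consecutive $\Delta_i^2$ into asymptotically independent super-blocks, the small separating blocks being negligible), or equivalently by applying It\^o's formula to $\Delta_i^2$ to express the centred sum as a stochastic integral against the Brownian motion $B$ driving $g$; either route reduces the problem to an array of conditionally centred, asymptotically independent summands to which Jacod's stable martingale CLT (as used for Theorem \ref{thm5.0}) applies. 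The three conditions to check are: (i) the sum of conditional variances converges in probability to the $\mathcal F$-measurable limit $\tfrac{2\mathcal T}{3}\int_0^{\mathcal T}(\sigma^{(g)}_t)^4\,\mathrm dt$ --- this is where the overlapping-triangle bookkeeping, combining the diagonal variance of $\Delta_i^2$ with the nearest-neighbour covariance terms, must be carried out exactly to pin down the constant; (ii) a conditional Lyapunov bound, following from Assumption \ref{bounded} and the local boundedness of $\sigma^{(g)}$ in Assumption \ref{Ito.variance}; and (iii) asymptotic orthogonality of the summands to $W$ and to every bounded $(\mathcal F_t)$-martingale orthogonal to $B$, which upgrades the convergence to \emph{stable} and yields the mixed-normal (rather than normal) limit. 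Step (i) is the principal obstacle; the rest is a careful but routine adaptation of the machinery already developed for Theorem \ref{thm5.0}.
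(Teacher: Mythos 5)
Your proposal is correct and follows essentially the same route as the paper: reduce each block realized variance to $[\epsilon,\epsilon]_{\mathcal S_i}$, split into the $\mathcal F^{(0)}$-conditional mean (a triangularly weighted increment of $g$, whose squared sum yields $\overline E^{(1)}=\tfrac23\int(\sigma^{(g)})^2$) plus a centred noise fluctuation (yielding $\overline E^{(2)}_n$ via the fourth-moment/Theorem~\ref{thm3} machinery), kill the cross term and the remainders under the stated rate conditions, and close with a stable martingale CLT giving $\tfrac{2\mathcal T}{3}\int(\sigma^{(g)})^4$. The one place where the paper is more concrete is the $1$-dependence you flag in step (i): rather than big-block/small-block grouping, it reorders the double sum of $g$-increments into two explicit discrete martingales $N^{(1)}$ (within-block) and $N^{(2)}$ (adjacent-block), computes $\langle N^{(1)},N^{(1)}\rangle\to\tfrac{5T}{9}\int(\sigma^{(g)})^4$ and $\langle N^{(2)},N^{(2)}\rangle\to\tfrac{T}{9}\int(\sigma^{(g)})^4$ with vanishing covariation, and verifies orthogonality to the generating Brownian motions --- which is precisely the It\^o-formula/predictable-weight variant you mention as your second option.
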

	\textbf{Theorem \ref{thm3}} and \textbf{\ref{thm4}} provide us the asymptotic distributions of $V(Y,K_n,s_n,2)^n$ and $\overline{V}(Y,K_n,2)^n$ under the stationarity hypothesis, which aid us to control the type-\Rmnum{1} error. On the other hand, \textbf{Theorem \ref{thm5.0}} and \textbf{\ref{thm5}} reveal asymptotic behaviors of $V(Y,K_n,s_n,2)^n$ and $\overline{V}(Y,K_n,2)^n$ under the alternative hypothesis by respectively analyzing $U(Y,K_n,s_n,2)^n$ and $\overline{U}(Y,K_n,2)^n$.
Since the moments of noise are locally bounded, $\frac{1}{n}[Y;4]_{\mathcal{G}}-\frac{3}{2n^2}[Y,Y]_{\mathcal{G}}^2$ and $\widehat{\eta}$ are always finite. Following \textbf{Theorem \ref{thm5.0}}, \textbf{\ref{thm5}}, we have the following corollary:
\begin{corol}\label{corol.powerVVbar}
	Assume \textbf{Assumption \ref{diffusion}, \ref{jump}, \ref{independence}, \ref{bounded}, \ref{hfasym}} as well as \textbf{Assumption \ref{regular}, \ref{Ito.variance}}. Adopt the choice of tuning parameters in \textbf{Theorem \ref{thm5.0}}, \textbf{\ref{thm5}}, we have
	\begin{eqnarray*}
		V(Y,K_n,s_n,2)^n       &=&O_p\left(s_n\cdot\frac{K_n}{n^{1/2}}\cdot K_n^{1/2}\right)\\
		\overline{V}(Y,K_n,2)^n&=&O_p\left(\frac{K_n}{n^{1/2}}\cdot K_n^{1/2}\right)
	\end{eqnarray*}
	on the event $\widehat{\eta}\ne0$. Besides, we have
	$$N(Y,K_n)^n=O_p(K_n^{1/2})$$
	on the event $[Y;4]_{\mathcal{G}}-\frac{3}{2n}[Y,Y]_{\mathcal{G}}^2\ne0$.
\end{corol}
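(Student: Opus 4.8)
The plan is to read off the leading stochastic orders directly from the alternative-regime limit theorems already established, since each of the three statistics is a ratio whose numerator grows at a known deterministic rate while its denominator is $O_p(1)$ and bounded away from $0$ on the stated event. I begin with $N(Y,K_n)^n$, which is the easiest. By the decomposition (\ref{tsdif2}) in Remark \ref{N.alternative}, $N(Y,K_n)^n=\sqrt{K_n}\,(\overline g^{(\mathrm{start})}+\overline g^{(\mathrm{end})}-2\overline g^{(\mathrm{middle})})/D_T+O_p(1)$. The three averages $\overline g^{(\cdot)}$ are Ces\`aro averages of the values $g_{t_i}$, which are locally bounded under Assumption \ref{bounded}, so each is $O_p(1)$; and $D_T$ is finite with, on the event $[Y;4]_{\mathcal G}-\tfrac{3}{2n}[Y,Y]_{\mathcal G}^2\ne0$, $D_T^{-1}=O_p(1)$. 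Hence the numerator is $O_p(\sqrt{K_n})$ and $N(Y,K_n)^n=O_p(K_n^{1/2})$.

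For $\overline V$ and $V$ I would exploit the fact that under the alternative $\overline U$ and $U$ are dominated by a bias that grows, whereas the centering term subtracted inside the statistic stays $O_p(1)$. Write $C_n\equiv\tfrac1n[Y;4]_{\mathcal G}+\tfrac{3}{2n^2}[Y,Y]_{\mathcal G}^2$ for that centering. By Lemma \ref{lem_epsilon4}, $\tfrac1n[Y;4]_{\mathcal G}=\tfrac{2}{\mathcal T}\int_0^{\mathcal T}h_t\,dt+\tfrac{6}{\mathcal T}\int_0^{\mathcal T}g_t^2\,dt+O_p(n^{-1/2})$, and $\tfrac1n[Y,Y]_{\mathcal G}\to\tfrac{2}{\mathcal T}\int_0^{\mathcal T}g_t\,dt$, so $C_n=O_p(1)$. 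Now Theorem \ref{thm5} gives $\tfrac{n}{K_n^2}\overline U(Y,K_n,2)^n_{\mathcal T}=\overline E^{(1)}+\overline E^{(2)}_n+O_p((K_n/n)^{1/2})$, and since $K_n/n^{1/2}\to\infty$ forces $\overline E^{(2)}_n=\tfrac{2n}{K_n^2\mathcal T}\int_0^{\mathcal T}h_t\,dt=o_p(1)$, the ratio $\tfrac{n}{K_n^2}\overline U$ converges in probability to the finite positive limit $\overline E^{(1)}=\tfrac23\int_0^{\mathcal T}(\sigma^{(g)}_t)^2\,dt$ on the event that $\{\sigma^{(g)}\}$ is non-vanishing. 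Therefore $\overline U=O_p(K_n^2/n)$, and because $K_n^2/n\to\infty$ dominates the $O_p(1)$ term $C_n$, the numerator satisfies $\overline U-C_n=O_p(K_n^2/n)$. Multiplying by $\sqrt{n/K_n}$ yields $\sqrt{n/K_n}\,(\overline U-C_n)=O_p(K_n^{3/2}/n^{1/2})=O_p\!\big(\tfrac{K_n}{n^{1/2}}K_n^{1/2}\big)$, the claimed order once division by $\widehat\eta$ is handled.

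The argument for $V$ is identical, now invoking Theorem \ref{thm5.0}: there $\tfrac{n}{s_nK_n^2}U=E^{(1)}_n+E^{(2)}_n+E^{(3)}_n+O_p((K_n/n)^{1/2})$, with $E^{(1)}_n\to\tfrac13\int_0^{\mathcal T}(\sigma^{(g)}_t)^2\,dt$ as $s_n\to\infty$, while $E^{(2)}_n=o_p(1)$ (since $s_nK_n/n\to0$) and $E^{(3)}_n=o_p(1)$ (since $s_nK_n^2/n\to\infty$). Hence $U=O_p(s_nK_n^2/n)$, which again swamps $C_n=O_p(1)$, so $U-C_n=O_p(s_nK_n^2/n)$ and $\sqrt{n/K_n}\,(U-C_n)=O_p(s_nK_n^{3/2}/n^{1/2})=O_p\!\big(s_n\tfrac{K_n}{n^{1/2}}K_n^{1/2}\big)$. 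In both cases the stable-CLT fluctuation is negligible for the order: it multiplies the leading rate by $(K_n/n)^{1/2}\to0$.

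The one step requiring genuine care, and the main obstacle, is the denominator $\widehat\eta$. I would first show $\widehat\eta=O_p(1)$: every summand of $\widehat\eta^2$ in (\ref{est.eta}) is a product of factors of the form $\tfrac1n[Y;4]_{\mathcal G}$ and $\tfrac1n[Y,Y]_{\mathcal G}$, each $O_p(1)$ by the limits above, so $\widehat\eta^2=O_p(1)$. The delicate part is the lower bound: to convert ``numerator $=O_p(\text{rate})$'' into ``statistic $=O_p(\text{rate})$'' I need $\widehat\eta^{-1}=O_p(1)$, i.e. $\widehat\eta$ bounded away from $0$ in probability. Since $\widehat\eta^2$ converges in probability to a finite limit determined by the limits of $\tfrac1n[Y;4]_{\mathcal G}$ and $\tfrac1n[Y,Y]_{\mathcal G}$, restricting to the event $\{\widehat\eta\ne0\}$ (on whose complement both statistics are defined to be $0$) and to outcomes where this limit is strictly positive gives $\widehat\eta^{-1}=O_p(1)$; this is precisely the role of the qualifier ``on the event $\widehat\eta\ne0$'' in the statement. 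Combining this with the numerator orders established above completes the proof.
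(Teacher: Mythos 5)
Your proposal is correct and follows essentially the same route as the paper, which states the corollary as an immediate consequence of Theorems \ref{thm5.0} and \ref{thm5} (for the divergence rates of $U$ and $\overline{U}$ under the alternative), Remark \ref{N.alternative} (for $N$), and the observation that the centering term and $\widehat{\eta}$ are finite, with the qualifier ``on the event $\widehat{\eta}\ne0$'' playing exactly the role you assign to it. Your write-up merely makes explicit the bookkeeping ($C_n=O_p(1)$, $\widehat{\eta}^{-1}=O_p(1)$, and the multiplication by $\sqrt{n/K_n}$) that the paper leaves implicit.
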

Recall the choices of tuning parameter, $K_n\to\infty$ for $N(Y,K_n)^n$, $K_n/n^{1/2}\to\infty$ for $V(Y,K_n,s_n,2)^2$ and $\overline{V}(Y,K_n,2)^2$, their asymptotic powers attain the optimal. As in finite samples, $\overline{V}(Y,K_n,2)^n$ has more statistical power than $N(Y,K_n)^n$ by a factor of magnitude $K_n/n^{1/2}$; $V(Y,K_n,s_n,2)^n$ is more powerful than $\overline{V}(Y,K_n,2)^n$ by a factor of magnitude $s_n$.

\section{A user's guide of stationarity tests}\label{discuss}
We currently have 3 complementary tests, namely $N(Y,K_n)^n$, $V(Y,K_n,s_n,2)^n$ and $\overline{V}(Y,K_n,2)^n$, each test has its own advantages as well as disadvantages. In this subsection, we are going to discuss their strength and weakness, and how to choose the optimal test for different circumstances.
\begin{enumerate}
	\item[(1)] The first test $N(Y,K_n)^n$ divides the sample into 3 periods and compares the noise level in the middle with those on the edges. If we are interested in possible daily diurnal noise patterns, for example, let us test whether the noise level is higher in opening and closing trading hours, the best choice is to apply $N(Y,K)^n$ on 1-day high-frequency data. However, $N(Y,K_n)^n$ is not sensitive to local changes, for example, in case of a periodic change and the data sample covers several cycles, $N(Y,K_n)^n$ will likely misjudge the non-stationarity fact;
	\item[(2)] The second test uses moving local windows each containing $s_nK_n$ observation, and compares noise levels in the edges and the middle of each local window; the third test also uses local windows but compares the noise level in one local widow with those in neighboring windows. Because they conduct test locally and aggregate local evidences, $V(Y,K_n,s_n,2)^n$ and $\overline{V}(Y,K_n,2)^n$ are more powerful in detecting local noise changes which $N(Y,K_n)^n$ could probably ignore. However, if the noise transition goes very smoothly but there is a systematic paradigm shift on a global scale, $V(Y,K_n,s_n,2)^n$ and $\overline{V}(Y,K_n,2)^n$ might lead to false stationarity conclusion;
	\item[(3)] As said in subsection \ref{equi.test}, $V(Y,K_n,s_n,2)^n$ has a smaller edge effect than $\overline{V}(Y,K_n,2)^n$ hence $\overline{V}(Y,K_n,2)^n$ is more a accurate test under the null hypothesis; whereas $V(Y,K_n,2)^n$ enjoys larger statistical power (the lower right panel in Figure \ref{compare_NVV}). The intuition is that by construction the focus of $\overline{V}(Y,K_n,2)^n$ is too local although it results in the smaller edge effect, which turns into its disadvantage when the noise is non-stationary. 
\end{enumerate}
As a simulation comparison, Figure \ref{compare_NVV} shows averaged p-values computed from simulated 1-day/multi-day data with stationary/non-stationary noises. Figure \ref{ROC} shows their ROC curves. The simulation configuration is described in subsection \ref{simul.config}, and each p-values shown is the average of 3000 Monte Carlo samples.
\begin{figure}[!]
	\centering
    \caption{Averaged p-values of the 3 tests proposed.}
	\includegraphics[width=1.1\textwidth]{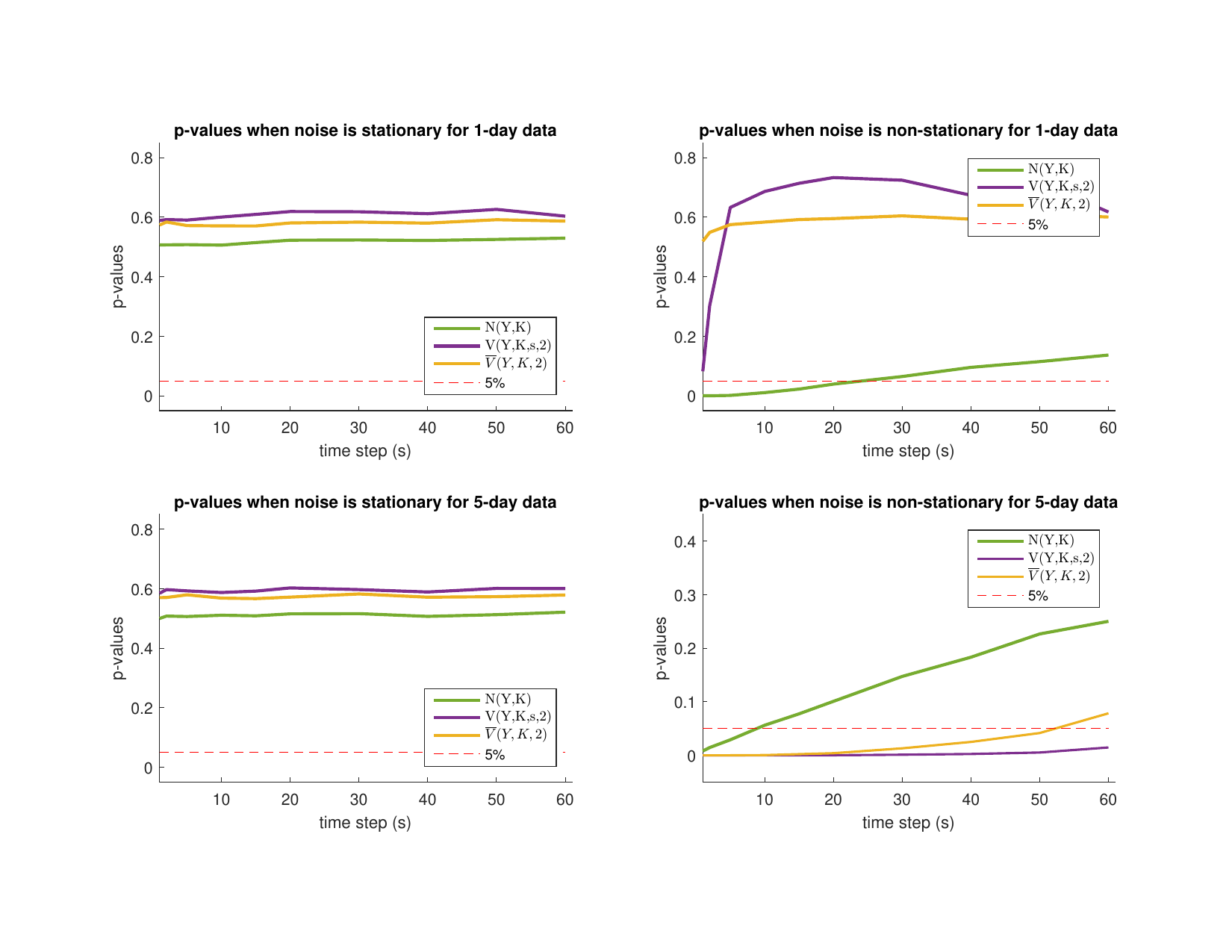}
	\label{compare_NVV}
	\floatfoot{These plots show some properties of the tests we proposed:\\
		1. For type-I error, all the tests can control their type-I errors pretty well regardless of the time span of the data, in that $N(Y,K)^n$ can very accurately control its type-I error in finite samples, $V(Y,K,s,2)^n$ and $\overline{V}(Y,K,2)^n$ are more conservative in the sense that their effective type-I error is smaller than specified;\\
		2. For type-II error or statistical power, only $N(Y,K)^n$ performs satisfactorily on 1-day data, meanwhile, $V(Y,K,s,2)$ and $\overline{V}(Y,K,2)^n$ are much better when applied to multi-day data in terms of their larger statistical powers and faster convergence rates. Consistent with \textbf{Corollary \ref{corol.powerVVbar}}, $V(Y,K_n,s_n,2)^n$ has a better statistical power in finite sample.}
\end{figure}
\begin{figure}
	\caption{ROC curves}
	\includegraphics[width=.53\textwidth]{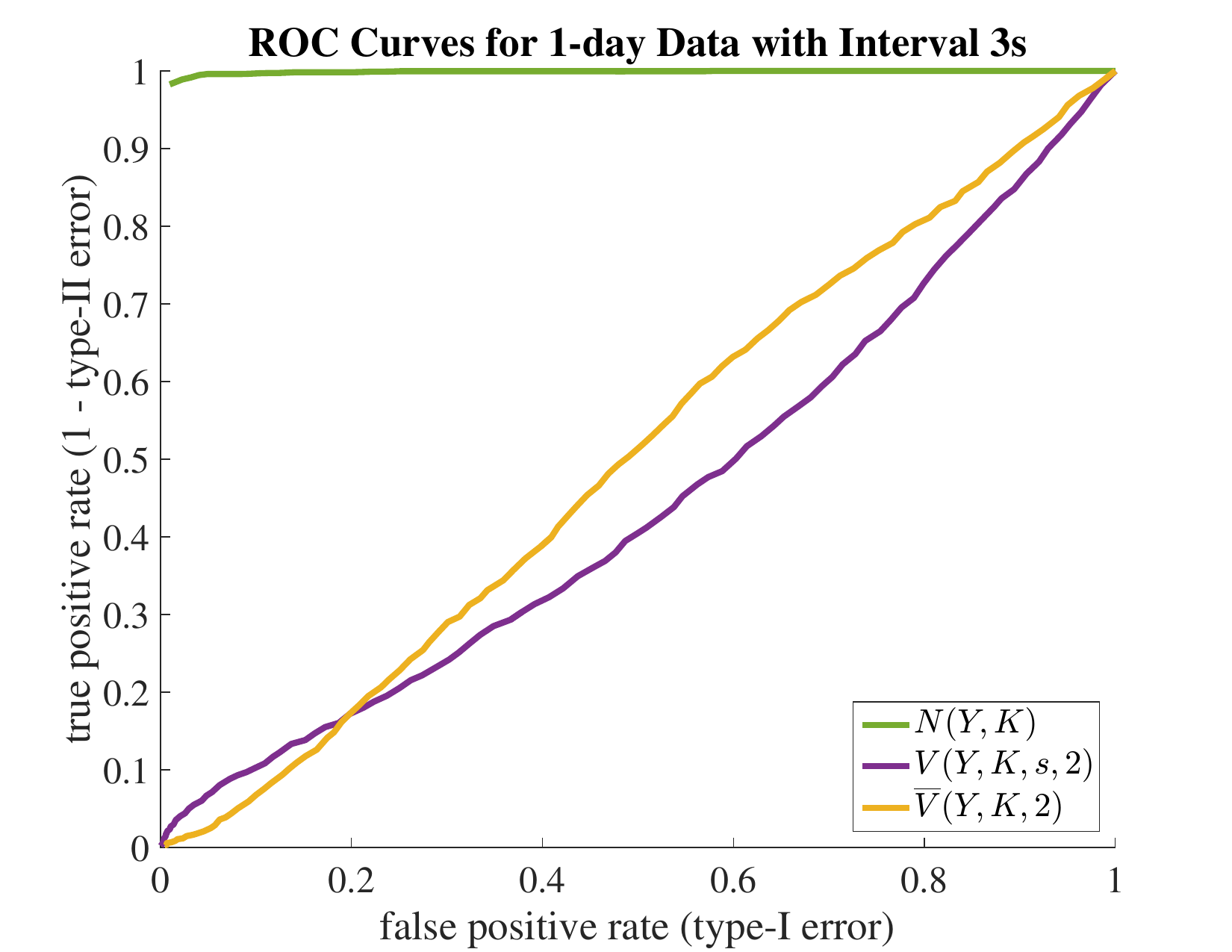}\includegraphics[width=.53\textwidth]{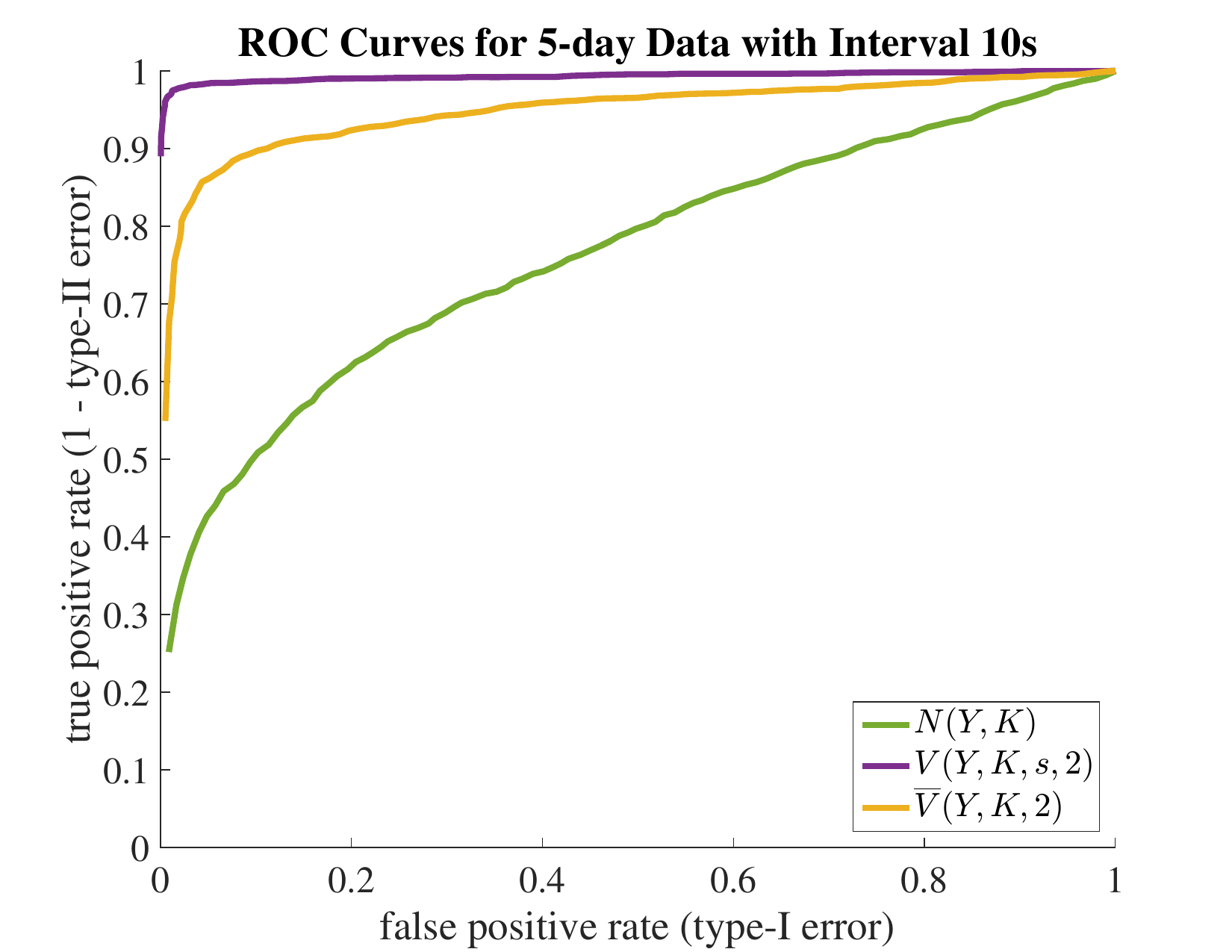}
	\label{ROC}
	\floatfoot{ROC curves show how type-II changes as type-I error varies. Once again, the ROC curves indicates that when microstructure noise exhibits a daily diurnal pattern, $N(Y,K_n)^n$ is optimal for 1-day data, $V(Y,K_n,s_n,2)^n$ is optimal for multi-day data.}
\end{figure}

As a summary, we list different considerations about the optimal choice of these tests in Table \ref{table.swt}. We suggest some choices of the tuning parameters ($K_n$, $s_n$) to balance various errors in the high-frequency approximation. Table \ref{table.rpt} shows the convergence rates and statistical powers of our tests under the suggested tuning parameters.

\begin{table}[!]
	\centering
	\begin{tabular}{l|lll}
		Test Statistics & $N(Y,K_n)^n_T$   & $V(Y,K_n,s_n,2)^n_T$ & $\overline{V}(Y,K_n,2)^n_T$\\
		\hline
		type-\Rmnum{1} error control & most accurate & least accurate    & moderately accurate\\
		Strength in detection\footnote{Evaluated in terms of statistical power.}& global change & local change   & local change (suboptimal)\\
		Length requirement& 1/multi-day data & multi-day data & multi-day data\\
		Frequency requirement\footnote{The minimal thresholds are expressed in terms of averaged time gap between consecutive observations. They are estimated from our simulation whose configuration is fairly realistic (subsection \ref{simul.config}).} & $\le$ 20s/$\le$10s & $\le$ 60s & $\le$ 50s\\
		Computational cost& relative small   & relative large       & relative large
	\end{tabular}
	\caption{Strengths and weakness of the tests}\label{table.swt}
\end{table}

\begin{table}[!]
	\centering
	\begin{tabular}{c|ccc}
		      &  $N(Y,K_n)^n$    &    $V(Y,K_n,s_n,2)^n$   &   $\overline{V}(Y,K_n,2)^n$\\
		\hline
		$K_n$ &   $n^{1/2}$      &    $n^{2/3}$            &   $n^{7/12}$\\
		$s_n$ &   ---            &    $n^{1/6}$            &   ---\\
		\hline
		rates under $H_0$       & $n^{1/4}$  & $n^{1/6}$ & $n^{5/24}$\\
		magnitudes under $H_1$  & $n^{1/4}$  & $n^{2/3}$ & $n^{3/8}$
	\end{tabular}
    \caption{Rates of convergence and statistical powers of the tests}\label{table.rpt}
\end{table}


\section{Measuring aggregate liquidity risks}\label{alternative}
\subsection{A notion of ``aggregate liquidity risk''}\label{QVg}
On one hand, $[g,g]$ as the quadratic variation of $\{g_t\}_{t\ge 0}$ over $(0,\mathcal{T}]$ is a reasonable measure of the ``aggregate'' variation of the process $\{g_t\}_{t\ge 0}$. On the other hand, microstructure noise variance $g_{t_i}$ is a measure of market quality \citep{h93}, or more specifically, market liquidity \citep{ay09}. Hence, it should not be utterly unreasonable to interpret $[g,g]_{\mathcal{T}}$ as ``\textit{aggregate liquidity risks}''. In this section, we are going to define a notion of ``\textit{aggregate liquidity risks}'' and provide an estimator with an associated CLT.


\subsection{Estimating aggregated liquidity risk $[g,g]$}
Recall \textbf{Theorem \ref{thm5}} and note that $\frac{n}{K_n^2}\int_0^{\mathcal{T}}h_t\,\mathrm{d}t\rightarrow0$, $\frac{3n}{2K_n^2}\overline{U}(Y,K_n,2)^n_{\mathcal{T}}$ is a consistent estimator of $[g,g]_{\mathcal{T}}$, i.e.
\begin{equation}\label{[g,g].est}
	\frac{3n}{2K_n^2}\overline{U}(Y,K_n,2)^n_\mathcal{T}\overset{\mathbb{P}}{\longrightarrow}[g,g]_\mathcal{T}
\end{equation}

However, We can rewrite (\ref{UH}) as
\begin{equation*}
\sqrt{\frac{n}{K_n}}\left(\frac{3n}{2K_n^2}\overline{U}(Y,K_n,2)_{\mathcal{T}}^n-[ g,g]_{\mathcal{T}}\right)-\frac{3n^{3/2}}{K_n^{5/2}{\mathcal{T}}}\int_0^{\mathcal{T}}h_t\,\mathrm{d}t\overset{\mathcal{L}-s}{\longrightarrow}\mathcal{MN}\left(0,\frac{3{\mathcal{T}}}{2}\int_0^{\mathcal{T}}(\sigma^{(g)}_t)^4\,\mathrm{d}t\right)
\end{equation*}
depending the relation between the number of blocks and the number of observations within each block, we have different second-order properties. If we let $\frac{K_n}{n^{3/5}}\to\infty$, 
we have an unbiased central limit theorem for estimating $[g,g]$. Otherwise, in case $K_n\asymp n^{3/5}$ (or $K_n/n^{3/5}\to0$), we have a CLT with a finite (or diverging) bias. 

\begin{corol}\label{corollary2}
	\footnote{\textbf{Corollary \ref{corollary2}} in some sense is an extension of the ``integral-to-spot device'' in \cite{mz16a}: for a semimartingale \{$\theta_t$\} on $[0,\mathcal{T}]$, let $\Theta_{(T_i,T_{i+q}]}=\int_{T_i}^{T_{i+q}}\theta_t\,\mathrm{d}t$, and $\mathrm{QV}_q(\Theta)=\frac{1}{q}\sum^{r_n-q}_{i=q}\left(\theta_{(T_i,T_{i+q}]}-\theta_{(T_{i-q},T_i]}\right)^2$, then under some regularity conditions (to guarantee standard stable convergence plus additional restriction on edge effects), as $q\to\infty$ and $q\Delta T\to 0$, $$\frac{1}{(q\Delta T)^2}\mathrm{QV}_q(\Theta)\overset{\mathbb{P}}{\rightarrow}\frac{2}{3}[\theta,\theta]_{\mathcal{T}-}$$
	Define $G_i\equiv \int^{T_i}_{T_{i-1}}g_t\,\mathrm{d}t$ and $\widehat{G}_i\equiv \frac{\Delta T}{2K_n}[Y,Y]_{\mathcal{S}_i}$. Under some regularity conditions, according to the ``integral-to-spot device'' in \cite{mz16a}
	\begin{equation*}
	\frac{3}{2(\Delta T)^2}\sum^{\lfloor n/K_n\rfloor}_{i=1}\left(G_i-G_{i-1}\right)^2\overset{\mathbb{P}}{\longrightarrow}[g,g]_{\mathcal{T}}
	\end{equation*}
	However, we do not know the true values of $G_i$'s in application, after swapping $G_i$ for $\widehat{G}_i$,
	\begin{equation}\label{aggrisk}
	\frac{3}{2(\Delta T)^2}\sum^{\lfloor n/K_n\rfloor}_{i=1}\left(\widehat{G}_i-\widehat{G}_{i-1}\right)^2\overset{\mathbb{P}}{\longrightarrow}[g,g]_{\mathcal{T}-}+\text{(possibly additional terms)}
	\end{equation}
	Note that $\frac{n}{K_n^2}\overline{U}(Y,K_n,2)_{\mathcal{T}}^n=\frac{1}{(\Delta T)^2}\sum^{\lfloor n/K_n\rfloor-1}_{i=1}\left(\widehat{G}_i-\widehat{G}_{i-1}\right)^2$, \textbf{Corollary \ref{corollary2}} reveals the possible additional terms and provides the central limit theorem associated with (\ref{aggrisk}). Upon choosing $K_n$ appropriately, the additional terms on the right side of (\ref{aggrisk}) is zero and we have an unbiased central limit theorem.
}
	Assume \textbf{Assumption \ref{diffusion}, \ref{jump}, \ref{independence}, \ref{bounded}, \ref{hfasym}} as well as \textbf{Assumption \ref{regular}, \ref{Ito.variance}}. Let $\frac{K_n}{n^{3/5}}\to\infty$, $\frac{K_n}{n^{2/3}}\to0$, then we have
	\begin{equation}\label{UH2}
	\sqrt{\frac{n}{K_n}}\left(\frac{3n}{2K_n^2}\overline{U}(Y,K_n,2)_{\mathcal{T}}^n-[g,g]_{\mathcal{T}}\right)\overset{\mathcal{L}-s}{\longrightarrow}\mathcal{MN}\left(0,\frac{3{\mathcal{T}}}{2}\int_0^{\mathcal{T}}(\sigma^{(g)}_t)^4\,\mathrm{d}t\right)
	\end{equation}
\end{corol}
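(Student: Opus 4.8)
The plan is to obtain this corollary directly from \textbf{Theorem \ref{thm5}} by a rescaling-plus-Slutsky argument, so no new stochastic-limit machinery is needed; the entire content is bookkeeping that identifies the leading centering term with $[g,g]_{\mathcal{T}}$ and shows that the remaining bias is annihilated precisely by the strengthened rate condition $K_n/n^{3/5}\to\infty$. First I would record the deterministic identity that makes the identification work: since $\{g_t\}_{t\ge0}$ is the It\^o diffusion in \textbf{Assumption \ref{Ito.variance}}, its drift contributes nothing to the quadratic variation, so $[g,g]_{\mathcal{T}}=\int_0^{\mathcal{T}}(\sigma^{(g)}_t)^2\,\mathrm{d}t$ and hence $\overline{E}^{(1)}=\tfrac{2}{3}[g,g]_{\mathcal{T}}$.

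Next I would check that the hypotheses of \textbf{Theorem \ref{thm5}} are in force. The corollary assumes $K_n/n^{3/5}\to\infty$ and $K_n/n^{2/3}\to0$; because $n^{3/5}>n^{1/2}$ for large $n$, the first condition implies $K_n/n^{1/2}\to\infty$, so both rate requirements of \textbf{Theorem \ref{thm5}} hold and its stable CLT applies verbatim. I would then multiply the conclusion (\ref{UH}) through by the scalar $\tfrac{3}{2}$. On the left this turns $\tfrac{n}{K_n^2}\overline{U}(Y,K_n,2)^n_{\mathcal{T}}$ into $\tfrac{3n}{2K_n^2}\overline{U}(Y,K_n,2)^n_{\mathcal{T}}$ and converts $\tfrac{3}{2}\overline{E}^{(1)}$ into exactly $[g,g]_{\mathcal{T}}$; on the right the mixed-normal variance is multiplied by $(3/2)^2=9/4$, carrying $\tfrac{2\mathcal{T}}{3}\int_0^{\mathcal{T}}(\sigma^{(g)}_t)^4\,\mathrm{d}t$ into the target $\tfrac{3\mathcal{T}}{2}\int_0^{\mathcal{T}}(\sigma^{(g)}_t)^4\,\mathrm{d}t$. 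This leaves a single residual term $\tfrac{3}{2}\overline{E}^{(2)}_n=\tfrac{3n}{K_n^2\mathcal{T}}\int_0^{\mathcal{T}}h_t\,\mathrm{d}t$ still attached to the centering.

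The decisive step is to show this residual vanishes after the CLT normalization. Multiplying $\tfrac{3}{2}\overline{E}^{(2)}_n$ by $\sqrt{n/K_n}$ gives a term of exact order $n^{3/2}K_n^{-5/2}\cdot\tfrac{1}{\mathcal{T}}\int_0^{\mathcal{T}}h_t\,\mathrm{d}t$. By \textbf{Assumption \ref{bounded}} the conditional fourth moments are locally bounded, so $\int_0^{\mathcal{T}}h_t\,\mathrm{d}t=O_p(1)$, and the prefactor $n^{3/2}K_n^{-5/2}\to0$ if and only if $K_n/n^{3/5}\to\infty$ --- which is exactly the additional hypothesis imposed in the corollary. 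Thus the residual is $o_p(1)$, and by Slutsky's theorem (stable convergence being preserved under addition of an $o_p(1)$ shift) I can absorb it into the centering, yielding (\ref{UH2}). I do not expect a genuine obstacle here: the only point demanding care is the exponent arithmetic showing that $K_n/n^{3/5}\to\infty$ is the sharp threshold for an \emph{unbiased} limit, which also explains the role of the two-sided sandwich $n^{3/5}\ll K_n\ll n^{2/3}$ and dovetails with the consistency claim (\ref{[g,g].est}), where only the weaker $n/K_n^2\to0$ (equivalently $K_n/n^{1/2}\to\infty$) is needed.
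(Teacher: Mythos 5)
Your proposal is correct and follows essentially the same route as the paper: the text preceding the corollary rewrites (\ref{UH}) in exactly this way, identifying $\tfrac{3}{2}\overline{E}^{(1)}$ with $[g,g]_{\mathcal{T}}$ via $[g,g]_{\mathcal{T}}=\int_0^{\mathcal{T}}(\sigma^{(g)}_t)^2\,\mathrm{d}t$ and isolating the bias $\tfrac{3n^{3/2}}{K_n^{5/2}\mathcal{T}}\int_0^{\mathcal{T}}h_t\,\mathrm{d}t$, which vanishes precisely when $K_n/n^{3/5}\to\infty$. Your exponent arithmetic, the verification that $K_n/n^{3/5}\to\infty$ implies the weaker hypothesis $K_n/n^{1/2}\to\infty$ of \textbf{Theorem \ref{thm5}}, and the variance rescaling by $(3/2)^2$ all match the paper's derivation.
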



\begin{remk}
	Toward a better finite-sample performance, for example, to get a more accurate confidence interval for the aggregate liquidity risk, we suggest to use the estimate of \begin{equation*}
	\left(\underbrace{\frac{3{\mathcal{T}}}{2}\int_0^{\mathcal{T}}(\sigma^{(g)}_t)^4\mathrm{d}t}_{\text{due to discretization (non-vanishing)}}+\underbrace{\frac{54n^2}{K_n^4\mathcal{T}}\int_0^{\mathcal{T}}\left[h^2_t-h_tg^2_t+g^4_t\right]\mathrm{d}t}_{\text{due to noise (vanishing)}}\right)\times\frac{K_n}{n}
	\end{equation*}
	as the approximation to the finite-sample variance, in order to avoid the situation in which we underestimate the finite-sample variance and become overoptimistic about the accuracy of our estimate. 
	The 95\% confidence interval for our measure ``\textit{aggregated liquidity risk}'' is
	\begin{equation}
		\left[\frac{3n}{2K_n^2}\overline{U}(Y,K_n,2)_{\mathcal{T}}^n-1.96\times\widehat{\Gamma},\hspace{2mm}\frac{3n}{2K_n^2}\overline{U}(Y,K_n,2)_{\mathcal{T}}^n+1.96\times\widehat{\Gamma}\right]
	\end{equation}
	where
	\begin{multline*}
		\widehat{\Gamma}^2=\frac{27}{128\,l_n^2K_n^4}\sum_{i=1}^{\lfloor n/K_n\rfloor-l_n}\left[\sum_{j=1}^{l_n}\left([Y,Y]_{\mathcal{S}_{i+j}}-[Y,Y]_{\mathcal{S}_{i+j-1}}\right)^2\right]^2\\+\frac{27}{8\,K_n^2}\sum_{i=1}^{\lfloor n/K_n\rfloor}\left(\frac{4}{K_n^2}[Y;4]_{\mathcal{S}_i}^2-\frac{14}{K_n^3}[Y;4]_{\mathcal{S}_i}[Y,Y]_{\mathcal{S}_i}^2+\frac{13}{K_n^4}[Y,Y]_{\mathcal{S}_i}^4\right)
	\end{multline*}
	and $l_n\asymp\sqrt{n/K_n}$.
\end{remk}


\section{Noise functional dependency and model extension}\label{modelpp}


The law of microstructure noise is represented via a Markov kernel $Q_t(\omega^{(0)},\mathrm{d}y)$ for each time $t$, through which the second moment of the noise evolves according to a random function in time $g_t(\omega^{(0)})=E(\epsilon_t^2|\mathcal{F}^{(0)})(\omega^{(0)})=\int_{\mathbb{R}}|y-Z_t(\omega^{(0)})|^2\,Q_t(\omega^{(0)},\mathrm{d}y)$ on the probability space $(\Omega^{(0)},\mathcal{F}^{(0)},\{\mathcal{F}^{(0)}_t
\}_{t\ge0},\mathbb{P}^{(0)})$. The random function $g_t(\omega^{(0)})$ could depend on various latent variables, and more generally the form of $Q_t(\omega^{(0)},\mathrm{d}\omega^{(1)})$ allows a wide range of correlation structures between the efficient price process and the microstructure noise. In this section, we discuss an elementary empirical evidence about the dependence of $g_t(\omega^{(0)})$ on $\sigma_t^2(\omega^{(0)})$ and the implication of the violation of our \textit{identification assumption} i.e. \textbf{Assumption \ref{identification}}.

\subsection{Regression: market microstructure noises and spot volatilities}\label{regress}
In this subsection, we go beyond the recognition that the second moment of microstructure noise is evolving over time, document the dependence of $g_t$ on volatility $\sigma^2_t$.
In doing so, we conduct time series linear regression of $g_{t_i}$'s on the latent variables $\sigma_{t_i}^2$'s\footnote{\cite{as16} provide a theoretical underpinning for the negative correlation between volatility and liquidity, equivalently, positive correlation between $\sigma^2_t$ and $g_t$: a higher volatility indicates a higher risk that arbitrageurs might take advantage of market makers' previous orders to act against market makers; hence, it reduces the activities of liquidity provision.}. We assume that the latent market microstructure noise variance and the latent volatility are correlated:
\begin{assum}\label{g-sigma}
	With probability 1, we have $\forall t\ge0$,
	\begin{equation}\label{lin1}
	g_t=\beta\sigma^2_t +\alpha+\zeta_t
	\end{equation}
	where $\zeta_t\indep\sigma_t^2$.
\end{assum}

Since both $g_t$ and $\sigma^2_t$ are unobservable, we need some preliminary estimates for both variables. Here, we use scaled sample-weighted TSRV and realized variance calculated from local samples to estimate spot volatilities $\sigma^2_{t_i}$'s and local noise levels, respectively, i.e., $\widehat{\sigma}^2_{\tau_{i-1}}=\frac{1}{\tau_i-\tau_{i-1}}\widehat{\langle X,X\rangle}^{(WTSRV)}_{(\tau_{i-1},\tau_i]}$ and $\widehat{g}_{\tau_{i-1}}=\frac{1}{2|\mathcal{H}_i|}[Y,Y]_{\mathcal{H}_i}$, where $\{\tau_0,\tau_1,\cdots\}\subset\mathcal{G}$, $\mathcal{H}_i=\mathcal{G}\cap(\tau_{i-1},\tau_i]$, $|\mathcal{H}_i|$ is the cardinality of $\mathcal{H}_i$. Then, we can conduct linear regression on these pairs of volatility-noise estimates $(\widehat{g}_{\tau_i}, \widehat{\sigma}^2_{\tau_i})$'s by ordinary least squares:
\begin{equation}\label{lin2}
\widehat{g}_{\tau_i}=\widehat{\beta}_m\widehat{\sigma}^2_{\tau_i} +\widehat{\alpha}_m+\eta^{(m)}_{\tau_i}
\end{equation}
where $m$ is the number of observation in the small time interval $(\tau_i,\tau_{i+1}]$, and $\eta^{(m)}_t$ denotes a component in the noise variance not captured by the volatility estimator $\widehat{\sigma}^2_{\tau_i}$. Besides, we use $m$ in the subscripts of estimators $\hat{\alpha}_m$ and $\hat{\beta}_m$ to emphasize that the values of the estimators in (\ref{lin2}) depend on the sample size $n$, and the distribution of $\eta^{(m)}_t$ also depends on $m$.

\begin{lem}\label{lemmareg}
Suppose \textbf{Assumption \ref{diffusion}, \ref{finite.jump}, \ref{identification}, \ref{independence}, \ref{bounded}, \ref{hfasym}} ae well as \textbf{Assumption \ref{g-sigma}} hold, let $\min{|H_i|}\to\infty$ and $\max{|H_i|}/n\to0$, then
\begin{eqnarray*}
	\widehat{\beta}_m &\overset{\mathbb{P}}{\longrightarrow}&\beta\\
	\widehat{\alpha}_m&\overset{\mathbb{P}}{\longrightarrow}&\alpha
\end{eqnarray*}
\end{lem}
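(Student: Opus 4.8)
The plan is to read (\ref{lin2}) as an errors-in-variables regression: the preliminary estimators $\widehat{\sigma}^2_{\tau_i}$ and $\widehat{g}_{\tau_i}$ recover the latent spot quantities $\sigma^2_{\tau_i}$ and $g_{\tau_i}$ up to measurement errors that, under the window conditions $\min|\mathcal{H}_i|\to\infty$ and $\max|\mathcal{H}_i|/n\to0$, become asymptotically negligible; I would then show that the ordinary-least-squares moments converge to the population moments which, through the orthogonality built into Assumption \ref{g-sigma}, identify $\beta$ and $\alpha$.

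First I would pin down the two preliminary estimators. For the noise proxy, a localized version of the identity $\frac{1}{2n}[Y,Y]_\mathcal{G}=\frac{1}{T}\int_0^T g_t\,\mathrm{d}t+o_p(1)$ from Remark \ref{N.alternative}, applied on $\mathcal{H}_i$, gives $\widehat{g}_{\tau_{i-1}}=\frac{1}{\tau_i-\tau_{i-1}}\int_{\tau_{i-1}}^{\tau_i}g_t\,\mathrm{d}t+O_p(|\mathcal{H}_i|^{-1/2})$; since $\{g_t\}$ is c\`adl\`ag and the window shrinks, the time average collapses to the spot value, so $\widehat{g}_{\tau_{i-1}}=g_{\tau_{i-1}}+w_i$ with $w_i=o_p(1)$. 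For the volatility proxy, Theorem \ref{thm1} yields that the sample-weighted TSRV on $(\tau_{i-1},\tau_i]$ consistently estimates the local integrated volatility $\int_{\tau_{i-1}}^{\tau_i}\sigma_t^2\,\mathrm{d}t$; dividing by the window length and letting the window shrink produces $\widehat{\sigma}^2_{\tau_{i-1}}=\sigma^2_{\tau_{i-1}}+u_i$, where $u_i=o_p(1)$ because $\min|\mathcal{H}_i|\to\infty$ suppresses the TSRV estimation error.

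Next I would substitute these decompositions into Assumption \ref{g-sigma} to obtain $\widehat{g}_{\tau_i}=\beta\,\widehat{\sigma}^2_{\tau_i}+\alpha+\xi_i$ with composite disturbance $\xi_i=\zeta_{\tau_i}+w_i-\beta u_i$, and analyze the slope $\widehat{\beta}_m=\widehat{\mathrm{Cov}}(\widehat{\sigma}^2,\widehat{g})/\widehat{\mathrm{Var}}(\widehat{\sigma}^2)$. The numerator splits as $\beta\,\widehat{\mathrm{Var}}(\widehat{\sigma}^2)+\widehat{\mathrm{Cov}}(\widehat{\sigma}^2,\xi)$, so the task reduces to controlling the sample covariance between regressor and disturbance. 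The piece $\widehat{\mathrm{Cov}}(\sigma^2,\zeta)$ vanishes via the orthogonality $\zeta_t\indep\sigma_t^2$ together with a law-of-large-numbers (Riemann-sum) argument as the number of windows $\sim n/|\mathcal{H}_i|$ diverges, while the terms involving $w_i$ are $o_p(1)$ by the rates above. The only term that survives on its own is the attenuation piece $-\beta\,\widehat{\mathrm{Var}}(u)$, which enters numerator and denominator together, giving $\mathrm{plim}\,\widehat{\beta}_m=\beta\cdot\mathrm{Var}(\sigma^2)/\big(\mathrm{Var}(\sigma^2)+\overline{\mathrm{Var}(u)}\big)$; since the averaged measurement-error variance $\overline{\mathrm{Var}(u)}\to0$ while the cross-window dispersion of $\sigma^2$ stays bounded away from zero, the attenuation factor tends to one and $\widehat{\beta}_m\overset{\mathbb{P}}{\longrightarrow}\beta$. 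Consistency of the intercept is then routine: once $\widehat{\beta}_m\to\beta$ and the sample means satisfy $\overline{\widehat{g}}\to E(g)$, $\overline{\widehat{\sigma}^2}\to E(\sigma^2)$, we get $\widehat{\alpha}_m=\overline{\widehat{g}}-\widehat{\beta}_m\,\overline{\widehat{\sigma}^2}\to E(g)-\beta E(\sigma^2)=\alpha$.

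The main obstacle will be the errors-in-variables attenuation: I must verify that the averaged measurement-error variance of the spot-volatility proxy vanishes strictly faster than the cross-window dispersion of the true $\sigma^2_t$, so that the attenuation factor converges to one rather than to a genuine contraction. This is exactly where both rate conditions are consumed — $\min|\mathcal{H}_i|\to\infty$ to drive the TSRV estimation noise to zero, and $\max|\mathcal{H}_i|/n\to0$ to keep the integrated-to-spot approximation of both $g$ and $\sigma^2$ valid — and tracking their interplay carefully is the delicate part. A secondary technical point is that $w_i$ and $u_i$ are formed from the same window of data and could be correlated; I would dispatch this by noting that $\widehat{g}$ is driven by the squared-noise fluctuations whereas the noise-corrected $\widehat{\sigma}^2$ is driven by the signal increments, so their leading error terms are asymptotically orthogonal and any residual cross-covariance is absorbed into the $o_p(1)$ remainders.
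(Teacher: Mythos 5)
Your argument is correct, and it reaches the conclusion by the same basic route as the paper --- substitute the consistent preliminary estimates $\widehat{g}_{\tau_i}$ and $\widehat{\sigma}^2_{\tau_i}$ into the linear relation (\ref{lin1}) and deduce consistency of the OLS coefficients --- but you carry the argument considerably further than the paper does. The paper's proof stops at the identity $\widehat{g}_t=\beta\widehat{\sigma}_t^2+\alpha+\zeta_t+o_p(1)$ and simply asserts that regression on the pairs $(\widehat{\sigma}^2_{\tau_i},\widehat{g}_{\tau_i})$ is therefore consistent; it never examines the OLS moment ratios, never invokes the orthogonality $\zeta_t\indep\sigma_t^2$ through a law of large numbers over the $\sim n/|\mathcal{H}_i|$ windows, and in particular never confronts the errors-in-variables attenuation coming from the measurement error $u_i$ in the regressor. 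Your decomposition $\widehat{\beta}_m=\beta+\widehat{\mathrm{Cov}}(\widehat{\sigma}^2,\xi)/\widehat{\mathrm{Var}}(\widehat{\sigma}^2)$ with $\xi_i=\zeta_{\tau_i}+w_i-\beta u_i$, and the observation that the attenuation factor $\mathrm{Var}(\sigma^2)/\big(\mathrm{Var}(\sigma^2)+\overline{\mathrm{Var}(u)}\big)$ tends to one precisely because $\min|\mathcal{H}_i|\to\infty$ kills $\overline{\mathrm{Var}(u)}$ while $\max|\mathcal{H}_i|/n\to0$ keeps the spot approximation valid, is exactly the step the paper leaves implicit; it also explains why both rate conditions appear in the statement, which the paper's proof never uses explicitly. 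The one caveat is that your plan still needs (as does the paper's) a non-degeneracy condition ensuring the cross-window dispersion of $\sigma^2_t$ stays bounded away from zero, and a uniform (not merely pointwise) control of the $u_i$ and $w_i$ so that their averaged squares vanish; neither is stated as an assumption, so you would have to either impose it or extract it from the local boundedness assumptions. In short, your proposal is a rigorous completion of the paper's sketch rather than a different proof.
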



By \textbf{lemma \ref{lemmareg}}, if there is a linear relationship between the noise variance and the spot volatility, the regression (\ref{lin2}) provides consistent estimates of linear coefficients. Figure \ref{reg} shows the least square regression plots for high-frequency transaction data in April, 2013 of 6 stocks: International Business Machines (IBM), Goldman Sachs (GS), Johnson \& Johnson (JNJ), Nike, Inc. (NKE), Chevron Corporation (CVX), McDonald's (MCD).

The time series regression and empirical analysis here are preliminary. One can investigate the statistical properties of this type of linear regression in more detail. Perhaps, there are non-linear relations. These issues will be addressed in our future research.

\begin{figure}
	  \centering
	  \includegraphics[width=.52\textwidth]{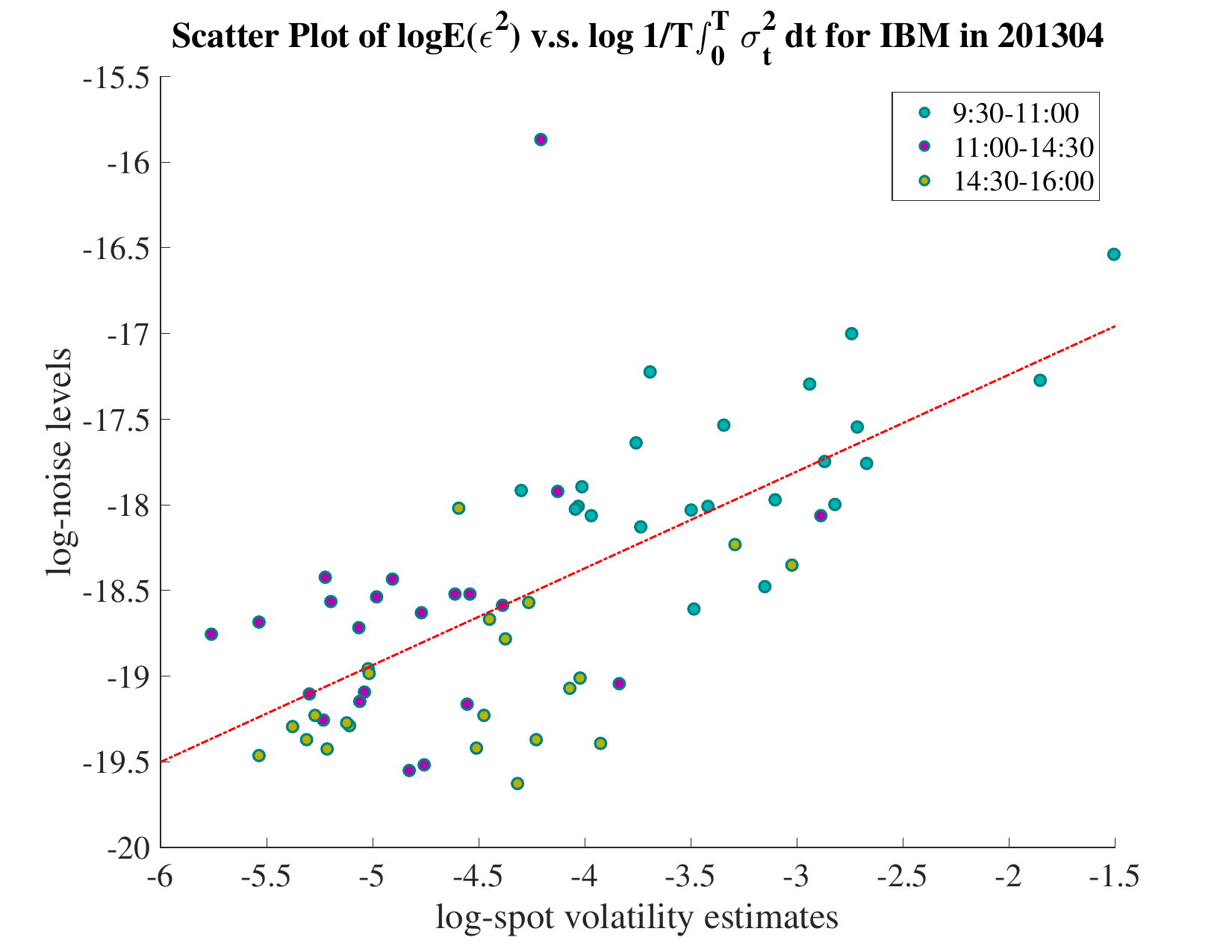}\includegraphics[width=.52\textwidth]{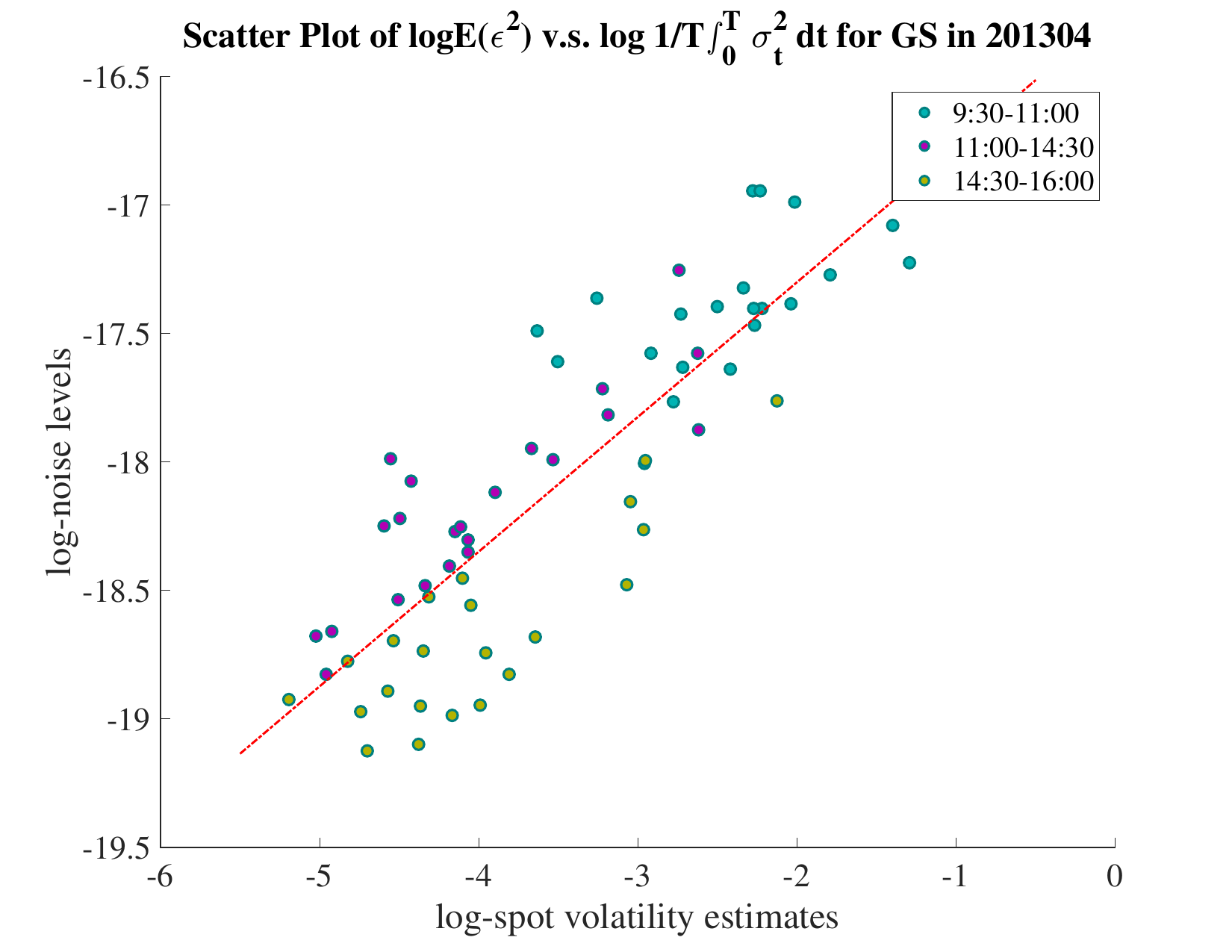}\\
	  \includegraphics[width=.52\textwidth]{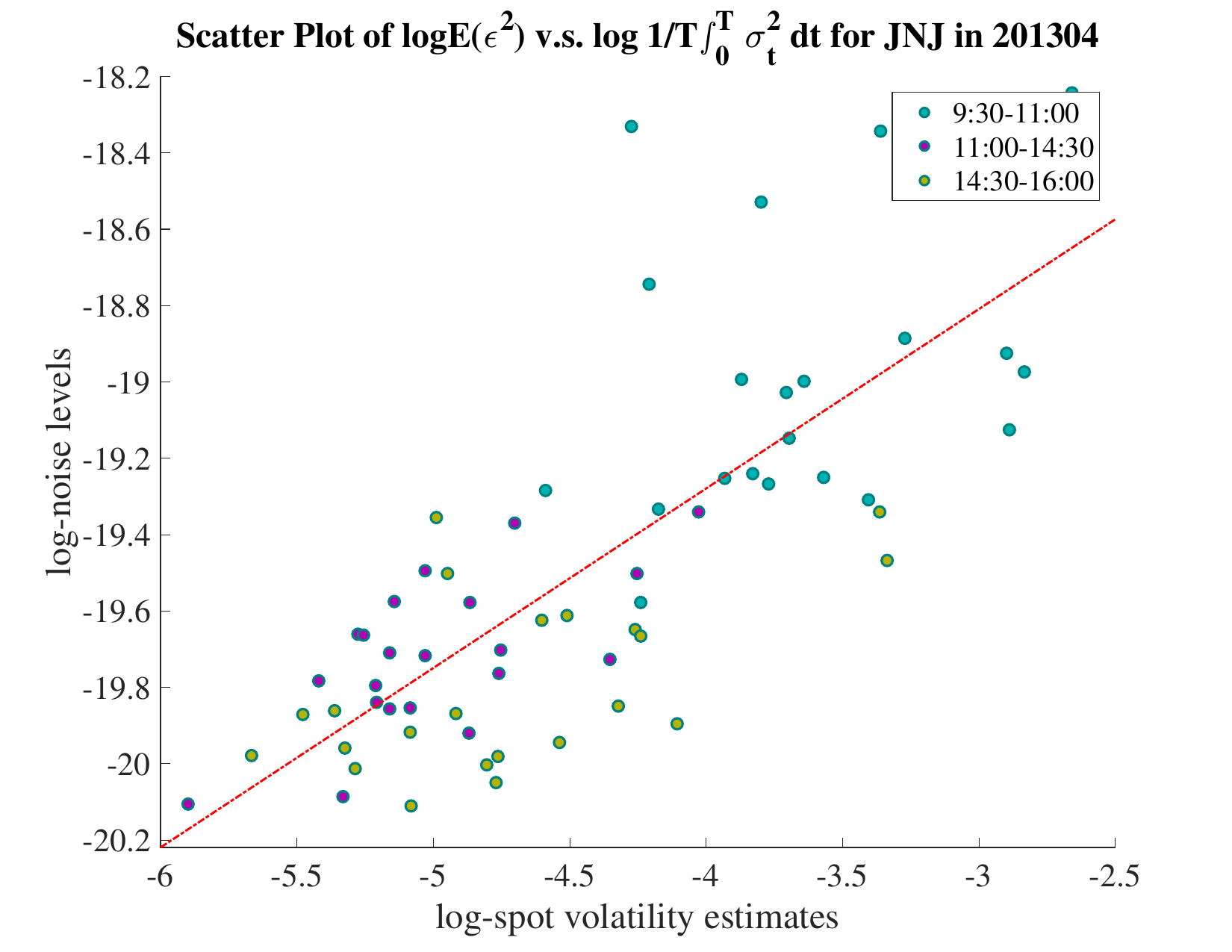}\includegraphics[width=.52\textwidth]{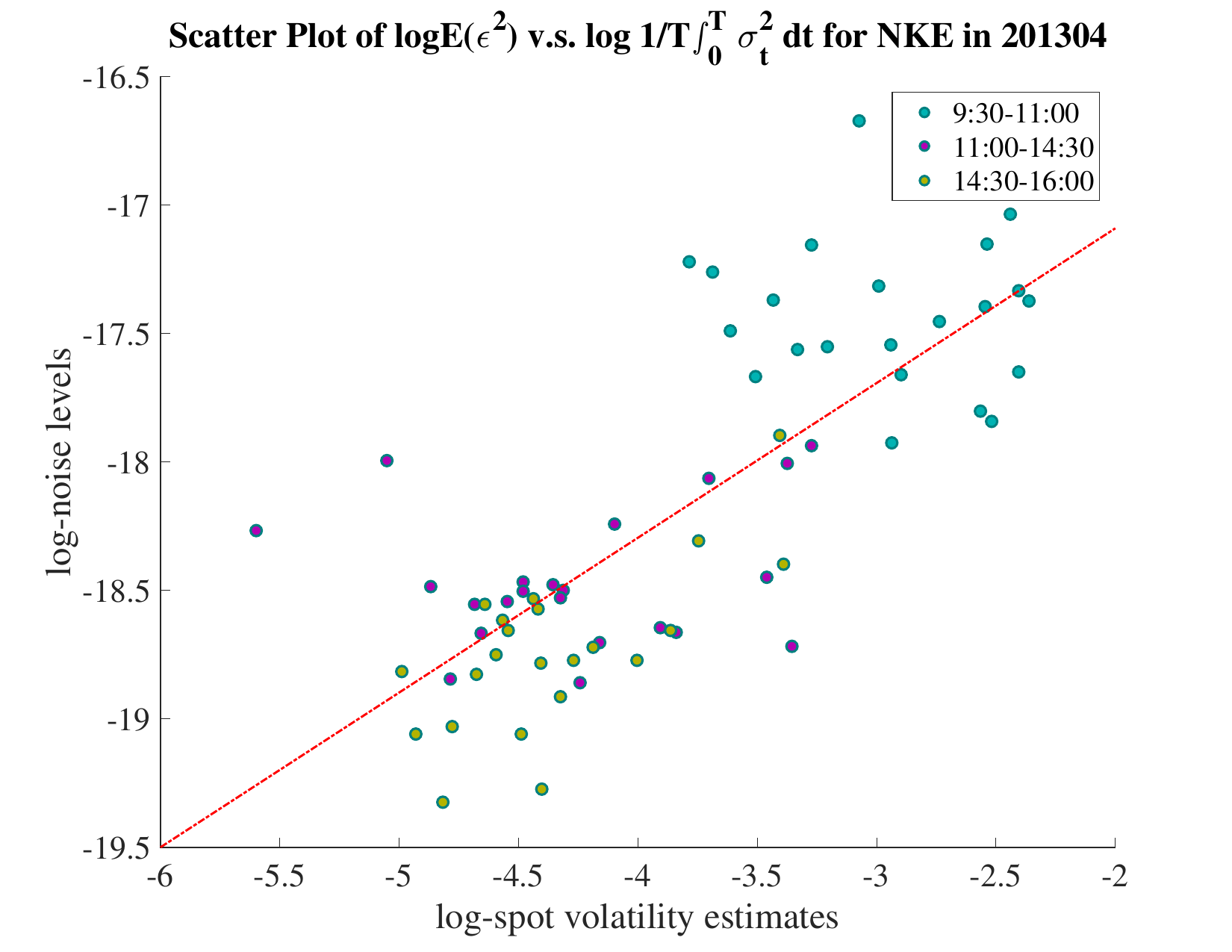}\\
	  \includegraphics[width=.52\textwidth]{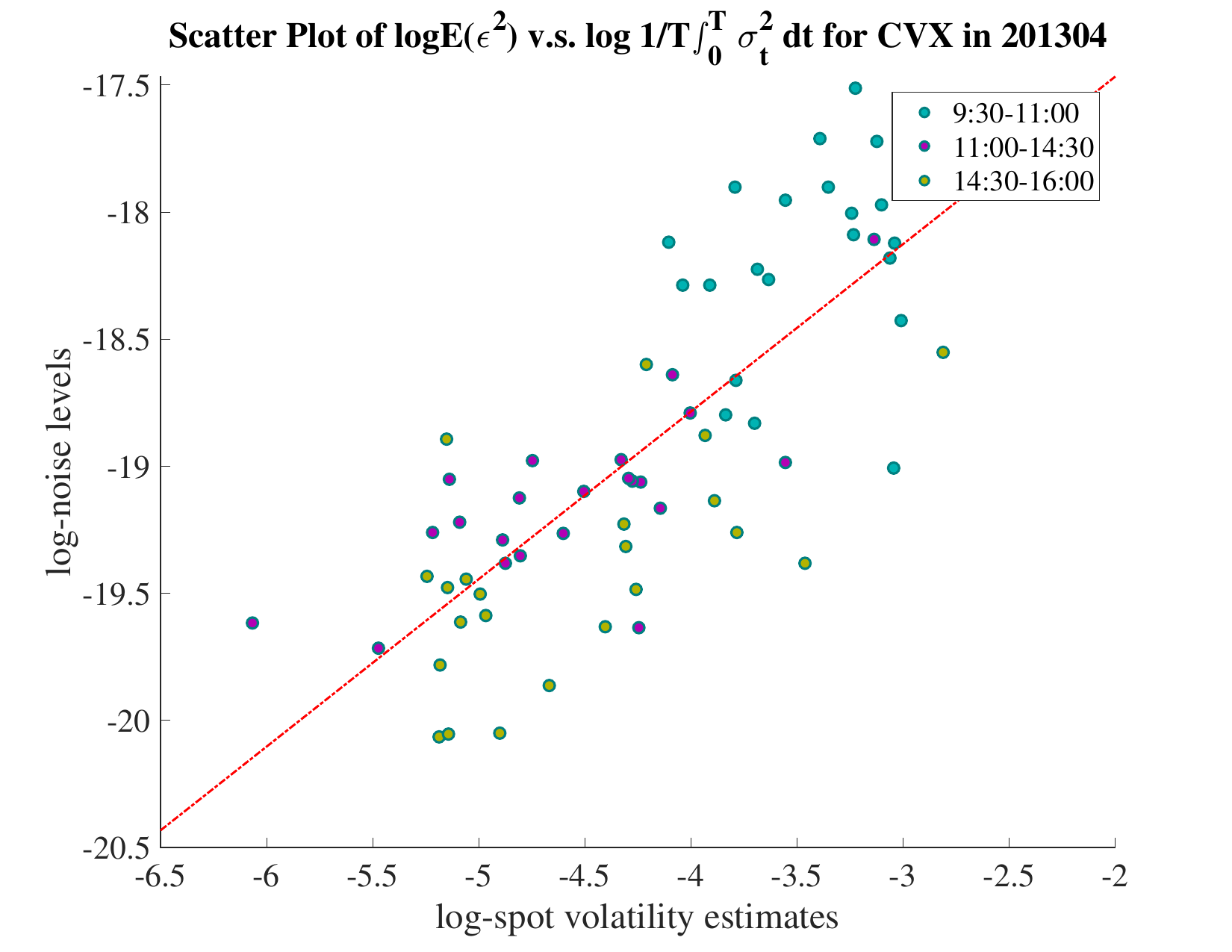}\includegraphics[width=.52\textwidth]{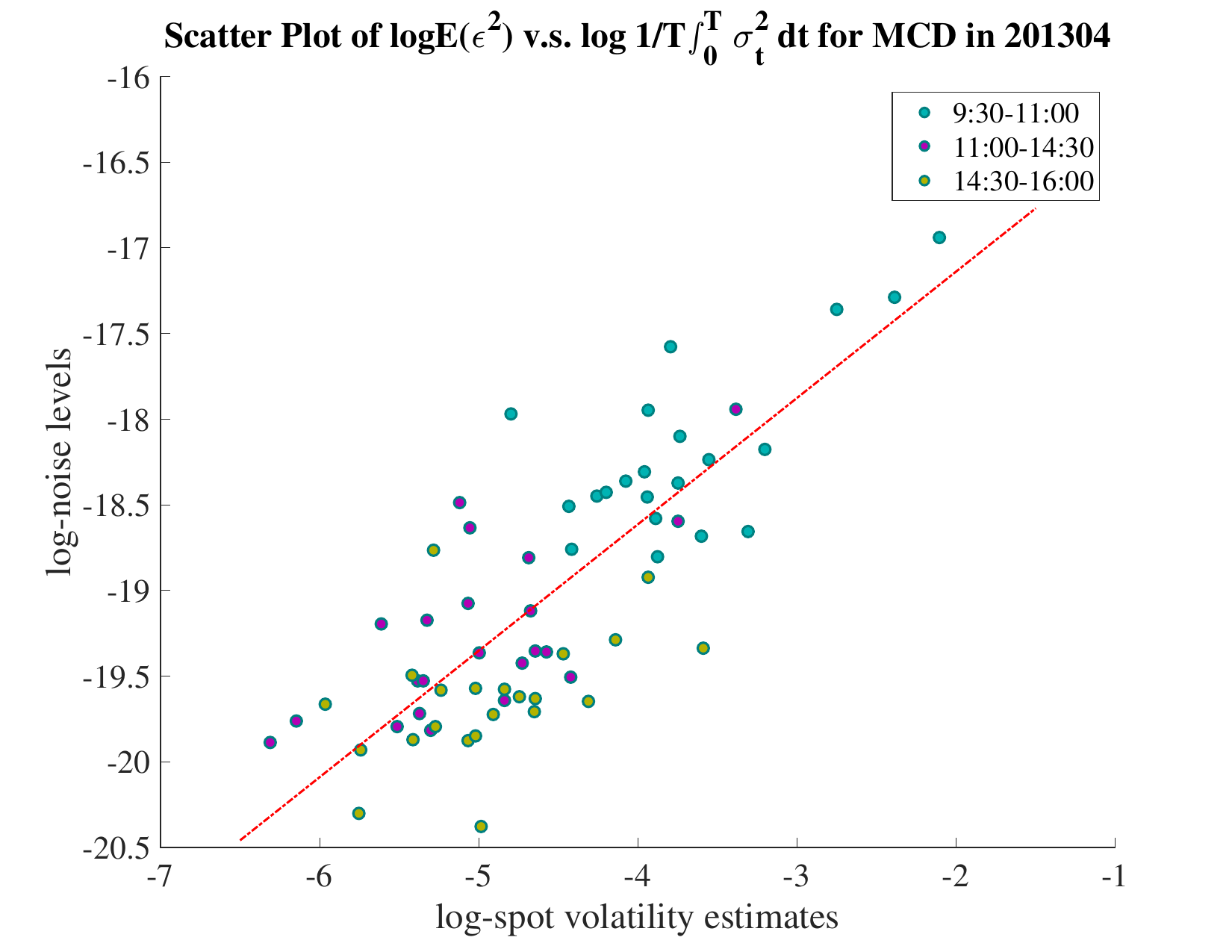}
	  \caption{Scatter plots of $\log(\widehat{g}_{\tau_i})$ against $\log(\widehat{\sigma}^2_{\tau_i})$, where each $\tau_i$ represents a particular period in each day. The \textcolor{red}{red dotted lines} are the fitted regression lines for IBM (IT), GS (finance), JNJ (medicine and pharmacy), NKE (manufacturing), CVX (energy), MCD (fast food), from left to right and top to bottom, respectively.}
	  \label{reg}
\end{figure}

\subsection{Model extension: endogenous noise}\label{modelexten}
In our model, we allow arbitrary fashion of the noise process up to the time-varying Markov kernel $Q_t(\cdot,\cdot)$ plus the identification assumption (\textbf{Assumption \ref{identification}}). As documented in \cite{j09}, the identification assumption is restrictively strong. If one is interested in the stationarity of $\{\epsilon_t\}_{t\ge0}$, our methods are valid regardless the identification assumption holds or not. However, if one is concerned about $\{e_t\}_{t\ge0}$ our methods will break down when the identification assumption is violated. Nevertheless, this extension is indispensable for empirically compatible modeling and it allows endogenous microstructure noise (noise which is correlated with the efficient price \citep{hl06}).

Note that in subsection \ref{setup}, conditioning on all latent variables, $\epsilon_t$ is a mean-zero random variable, i.e., $\int_{\mathbb{R}}(y-Z_t(\omega^{(0)}))\,Q_t(\omega^{(0)},\mathrm{d}y)=0$ since $\int_{\mathbb{R}}y\,Q_t(\omega^{(0)},\mathrm{d}y)=Z_t(\omega^{(0)})$. 
However, the conditional mean of $e_t$ is not necessarily 0 since $E(e_t|\mathcal{F}^{(0)})=E(Y_t-X_t|\mathcal{F}^{(0)})=Z_t-X_t$.

This observation enables us to, non-parametrically, introduce endogenous noise into our model. We can allow instantaneous/realized correlation between the latent process $\{X_t\}_{t\ge 0}$ and the noise process $\{e_t\}_{t\ge 0}$. 
Although $E(e_t|\mathcal{F}^{(0)})$ is not necessarily 0, we assume the unconditional mean $E_{\mathbb{P}}(e_t)$ is zero, then calculation shows
\begin{eqnarray*}
	\mathrm{Cov}(X_t,e_t)&=&E_{\mathbb{P}^{(0)}}[X_tZ_t]-E_{\mathbb{P}^{(0)}}[X^2_t]\\
    \mathrm{Cov}(Z_t,e_t)&=&E_{\mathbb{P}^{(0)}}[Z^2_t]-E_{\mathbb{P}^{(0)}}[X_tZ_t]\\
    \mathrm{Cov}(X_t,\epsilon_t)&=&0\\
    \mathrm{Cov}(Z_t,\epsilon_t)&=&0
\end{eqnarray*}

\citep{j09} assumed $Z_t=X_t$, so there is no endogenous noise in their model. However, as long as $E_{\mathbb{P}^{(0)}}[X_tZ_t]\ne E_{\mathbb{P}^{(0)}}[X^2_t]$, there is correlation between the latent process $\{X\}_{t\ge 0}$ defined by (\ref{X}) and the noise process $\{e_t\}_{t\ge 0}$ defined by (\ref{error}).


An intuitive interpretation is that $e_t$ encodes some relevant information about the processes defined on the latent probability space if it is correlated with the latent random variables $X_t$ and $Z_t$. In contrast, $\epsilon_t$ is a pure noise and conveys no useful information about the latent processes, the correlation between $\epsilon_t$ and any latent random variable is zero. For this reason, we call $e_t$ ``\textit{endogenous microstructure noise}'', and call $\epsilon_t$ ``\textit{exogenous microstructure noise}''.
\begin{center}
	\usetikzlibrary{positioning}
	\usetikzlibrary{shapes,snakes}
	\begin{tikzpicture}[xscale=12,yscale=6,>=stealth]
	\tikzstyle{s}=[rectangle, minimum size=5mm,draw,thick]
	\tikzstyle{e}=[ellipse,   minimum size=5mm,draw,thick]
	\tikzstyle{v}=[circle,    minimum size=5mm,draw,thick]
	\node[s] (y)   [draw=blue!60,fill=blue!10] {$Y_t$};
	\node[e] (x)   [draw=green!60,fill=green!10,right=of y] {$X_t$};
	\node[e] (z)   [draw=red!60,fill=red!10,right=of x] {$Z_t=E(Y_t|\mathcal{F}^{(0)})$};
	\node[v] (e)   [draw=green!60,fill=green!10,below=of x] {$e_t$};
	\node[v] (eps) [draw=red!60,fill=red!10,right=of e] {$\epsilon_t$};
	\draw[thick,->] (y) to node {} (x);
	\draw[thick,->] (y.south) .. controls +(down:1mm) and +(left:1mm) ..  (e.west);
	\draw[thick,->] (x) to node {} (z);
	\draw[thick,->,snake=snake] (e) to node[anchor=west] {information} (z);
	\draw[thick,->] (e) to node {} (eps);
	\end{tikzpicture}
\end{center}
\begin{remk}
	When one tries to estimate the integrated volatility, the quantity which is actually estimated is $\langle Z,Z\rangle_T$, not necessarily the usually desired target $\langle X,X\rangle_T$. This is discussed by \citep{lm07}. In contrast to \citep{j09}, we do not assume $\int_{\mathbb{R}}y\,Q_t(\omega^{(0)},\mathrm{d}y)=X_t(\omega^{(0)})$. In other words, in the case where $Z_t\ne X_t$, the integrated volatility $\langle X,X\rangle_T$ is not identifiable; however, if we are satisfied with estimating $\langle Z,Z\rangle_T$, then we are able to introduce some conditional correlation between the efficient price and the microstructure noise.
\end{remk}

One conceptual finding from the model extension is the informational content in microstructure noise $\{e_t\}_{t\ge 0}$ with respect to the efficient price in financial term (or latent process in statistical term) which is modeled as an It\^o semimartingale. The interpretation comes from market microstructure theory \citep{o95,o03}. As in the classical asset pricing theory, we take the price as given and exogenous, and conduct trading and hedging strategies, portfolio allocation and risk management. But, the price discovery and price formation depend on the behaviors of market participants, no price will be produced without investment activities of various market participants. It is the balance between demand and supply from investors, it is the psychology of people in the market, it is the synthesis of microscopic effects of beliefs and behaviors of market participants, that determine the prices. Thus, the efficient price should be an endogenous process in the financial market. It is one of striking difference between asset pricing and market microstructure theory: the classical asset pricing theory assumes frictionless and competitive market in which people do not have to worry about the price impact and liquidity constraint. While, in market microstructure theory, the modelers need to look inside the ``black box'' of the trading processes, and take market making, price discovery, liquidity formation, inventory control, asymmetric information into account.

Since we consider the price as endogenous, which, for example, affected by transaction costs (like bid-ask spread), inventory control, discrete adjustment of price, lagged incorporation of new information, insider trading and adverse selection brought by asymmetric information, lack of liquidity caused by one or several of the factors mentioned above, the It\^o process is merely an approximation to the efficient price observed at high-frequency, at which market microstructure effects manifest itself to such extent that the accumulated noise swamps the integrated volatility of the latent It\^o process and the variation in microstructure noise dominates the total variance.

Therefore, it is reasonable (even indispensable) to extend our model to allow  endogenous microstructure noise, at least from the viewpoint of microstructure theory, and for sake of realistic modeling at low-latency and millisecond level. This topic is not the focus of this paper; in-depth discussion and treatment on endogenous microstructure noise will be addressed in our future research.

\section{Simulation}\label{simul}
\subsection{Simulation scenario}\label{simul.config}

The configuration of our simulation design is
\begin{eqnarray}\label{xiu2010}
Y_{t_i}             &=&\left\lfloor \frac{X_{t_i}+\epsilon_{t_i}}{\alpha}\right\rfloor\alpha\\
\mathrm{d}X_t       &=&\mu\,\mathrm{d}t+\sigma_{t}\,\mathrm{d}W_t+J^X_t\,\mathrm{d}N^X_t\\
\mathrm{d}\sigma^2_t&=&\kappa(\bar{\sigma}^2-\sigma_t^2)\,\mathrm{d}t+\delta\sigma_{t}\,\mathrm{d}B_t+\sigma_{t-}J^V_t\,\mathrm{d}N^V_t
\end{eqnarray}
where $E(\mathrm{d}W_t\cdot\mathrm{d}B_t)=\rho\,\mathrm{d}t$, $N^X$ and $N^V$ are Poisson processes $\indep W,B$ with parameters $\lambda_X$ and $\lambda_V$ respectively, the jump sizes satisfy $J^X\sim N(\theta_X,\nu_X)$ and $J^V_t=e^z$ with $Z\sim N(\theta_V,\nu_V)$. The stationary microstructure noise behaves as $\epsilon^{(s)}_{t_i}\overset{\text{i.i.d.}}{\sim}N(0,a_0^2)$, whereas the non-stationary microstructure noises are distributed as
\begin{equation}\label{noise.curve}
\begin{array}{ll}
    \epsilon^{(ns)}_{t_i}&= \sqrt{\frac{60}{17}}\left[\left(\frac{i}{n}-0.5\right)^2+0.2\right]^{\frac{1}{2}}\times e_{t_i}\\
    e_{t_i}&=z_i+\sum_{j=1}^M\binom{u+j-1}{j}z_{i-j}\\
    z_k&\overset{\text{i.i.d.}}{\sim}N\left(0,\omega^2\right),\hspace{4mm} \omega^2=a_1\left(\frac{1}{n}\sum_{j=1}^n\sigma_{it_j}^4\right)^\frac{1}{2}
\end{array}
\end{equation}
where $u\in(-0.5,0.5)$ and $n$ is the number of high-frequency observations in 1 business day. In (\ref{noise.curve}), the noise variance of $\{\epsilon^{(ns)}_{t_i}\}_i$ changes according to a U-curve, which means that the noise is of relatively higher levels around opening and closed hours. The U-curve is chosen such that the averaged noise variance within a day is $\omega^2$.
The noise conforms to the empirical feature that the variance of microstructure noise increases with the level of volatility \citep{br06}. The parameters are chosen so that they are consistent with \cite{ay09}:
\vspace{2mm}

\centerline{\begin{tabular}{l|cccccccccccc}
		$X$ parameters & $X_0$ & $\mu$ & $\rho$ & $\lambda_X$ & $\theta_X$ & $\nu_X$\\
		\hline
		      & $\ln(100)$ & 0.03  &  -0.6  &  6  &  0.0016    &   0.004\\
		\hline \hline
		$\sigma$ parameters & $\kappa$ & $\bar{\sigma}^2$ & $\delta$& $\lambda_V$ & $\theta_V$ & $\nu_V$\\
		\hline
		&   6     &       0.16        &    0.5   &  12  &    -5    &   0.8\\
		\hline \hline
		noise parameters & $a_0$ & $a_1$ & $\alpha$ &  $M$  &  $u$\\
		\hline
		& $5\times10^{-3}$ & $1.54\times10^{-4}$ & $1\times10^{-5}$ & 10 & 0.3
	\end{tabular}}
	\vspace{2mm}
\noindent Furthermore, $\sigma_0^2$ is sampled from the stationary distribution of Cox-Ingersoll-Ross process \citep{cir85}, i.e., $\text{Gamma}\left(\frac{2\kappa\bar{\sigma}^2}{\delta^2},\frac{\delta^2}{2\kappa}\right)$ so the unconditional mean of the volatility is $\bar{\sigma}^2$. $a_1$ is chosen such that $\mathrm{Var}(\epsilon^{(s)})=\mathrm{Var}(\epsilon^{(ns)})$ in average. We also adopted a random sampling scheme according to an inhomogeneous Poisson process $\mathrm{Poisson}(\lambda_t\times\Delta)$ where $\Delta$ is averaged sampling duration and the trading intensity evolves periodically $\lambda_t=1+0.5\times \cos(2\pi t/T)$ with $T$ being the length of 1 business day.

\subsection{Simulation results}
In Figure \ref{N}, \ref{V} and \ref{Vbar}, we show the simulation results of $N(Y,K_n)^n_T$, $V(Y,K_n,s_n,2)^n_T$ and $\overline{V}(Y,K_n,2)^n_T$ where $T$ is taken to be 1 business day (left panel in each figure) and 5 business days (right panel in each figure). For each test and each time span, the simulation is conducted in 2 different circumstances: stationary noise (upper picture in each column), U-shape noise (\ref{noise.curve}) (lower picture in each column). The plots show various empirical densities function of our proposed tests against the density of $N(0,1)$. Each group of tests were computed from 3000 sample paths with averaged sampling interval 1 second.

\begin{figure}
	\centering
	\caption{Empirical density of $N(Y,K_n)^n$}
	\centerline{\includegraphics[width=.55\textwidth]{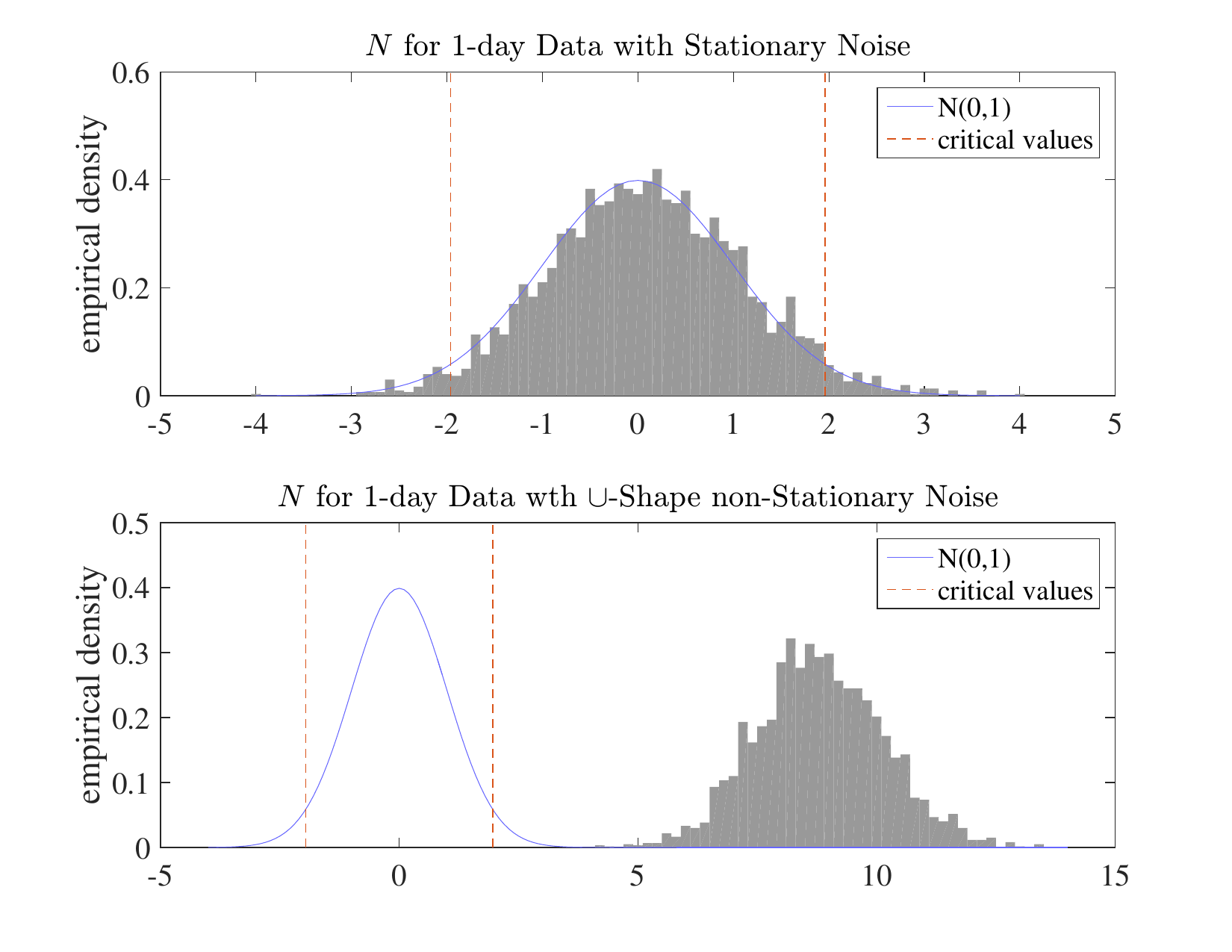}\includegraphics[width=.55\textwidth]{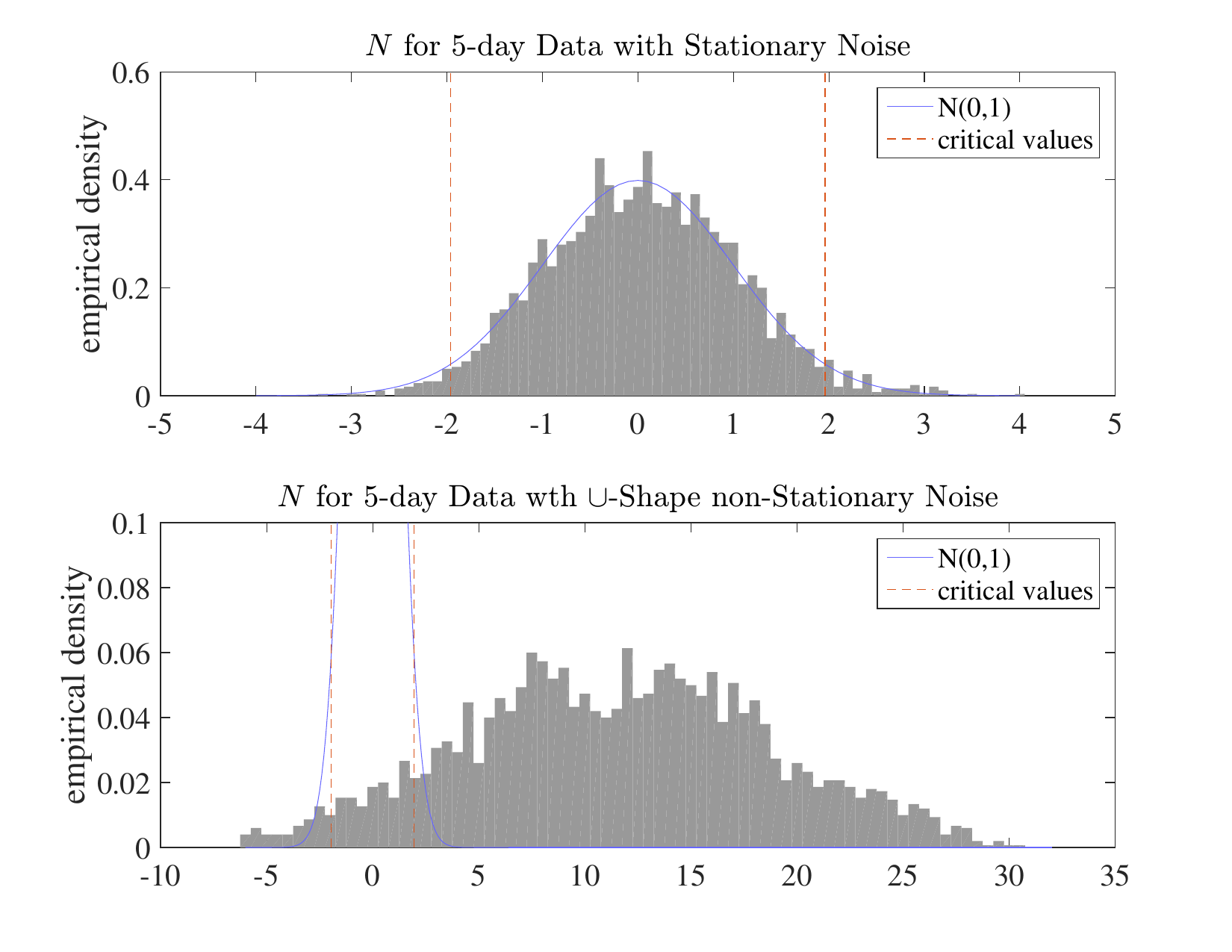}}
	\floatfoot{These plots show the empirical densities of $N(Y,K_n)^n$ when it applies to 1-day/5-day data with stationary/non-stationary noises. Compared the simulation of other tests, we can see $N(Y,K_n)^n$ converges faster to $N(0,1)$ when microstructure noise is stationary. On the other hand, if the microstructure noise is non-stationary and exhibits daily diurnal pattern, $N(Y,K_n)^n$ is the best for 1-day data.}\label{N}
\end{figure}



\begin{figure}
	\centering
	\caption{Empirical density of $V(Y,K_n,s_n,2)^n$}
	\centerline{\includegraphics[width=.55\textwidth]{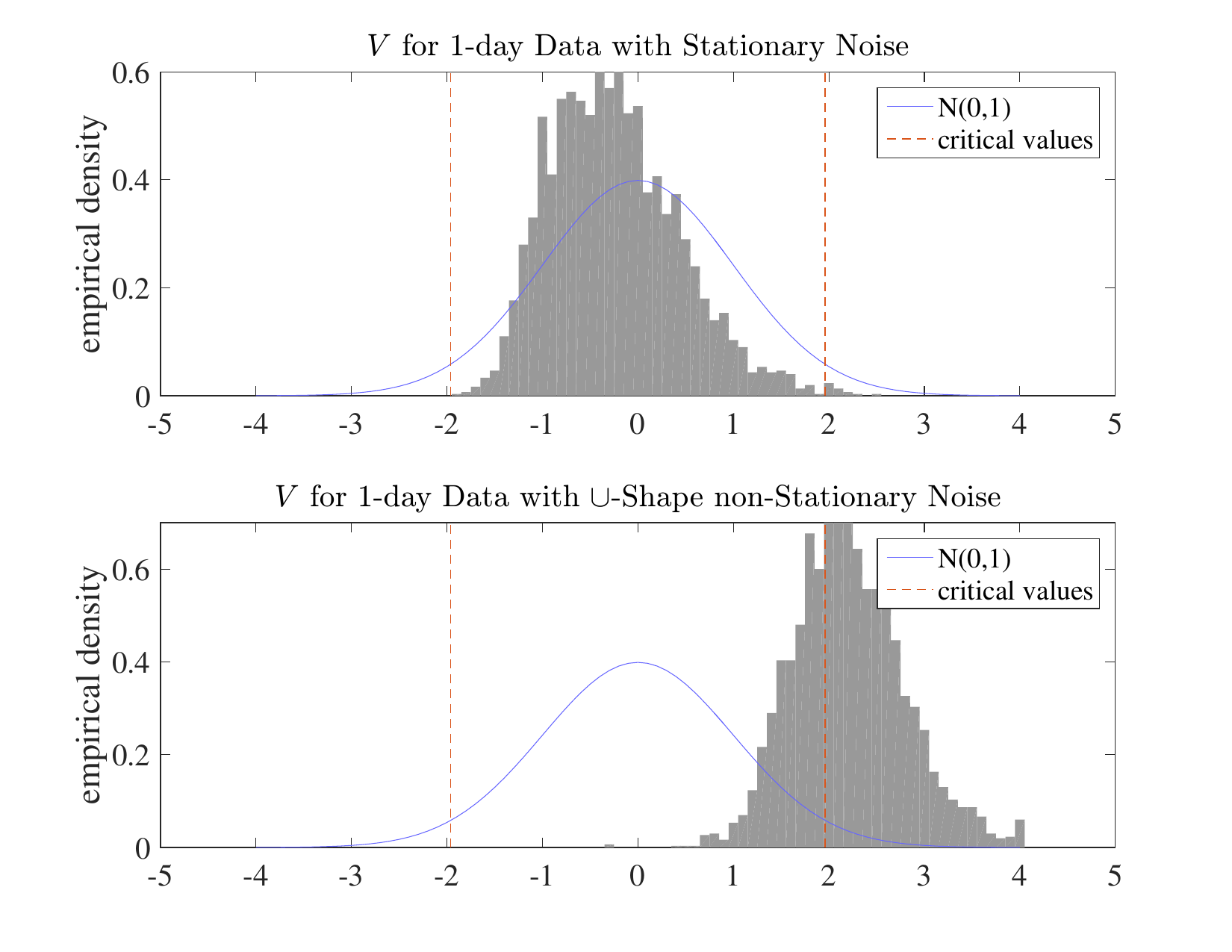}\includegraphics[width=.55\textwidth]{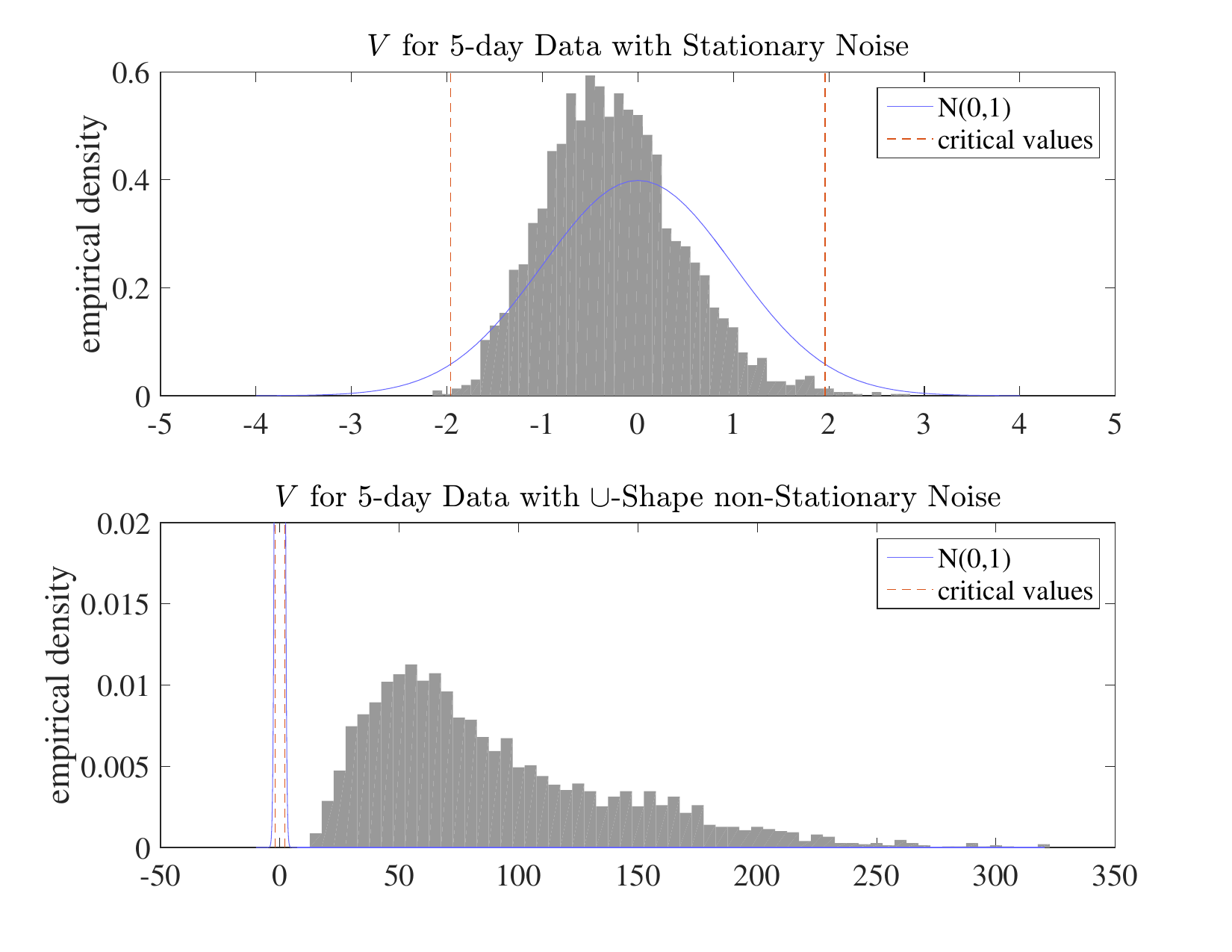}}
	\floatfoot{These plots show the empirical densities of $V(Y,K_n,s_n,2)^n$ when it applies to 1-day/5-day data with stationary/non-stationary noises. Compared the simulation of other tests, we can see $V(Y,K_n,s_n,2)^n$ is more conservative due to its relatively large edge effect when microstructure noise is stationary. On the other hand, if the microstructure noise is non-stationary and exhibits daily diurnal pattern, $V(Y,K_n,s_n,2)^n$ is the best for multi-day data and enjoys the largest statistical power.}\label{V}
\end{figure}

\begin{figure}
  \centering
  \caption{Empirical density of $\overline{V}(Y,K_n,2)^n$}
  \centerline{\includegraphics[width=.55\textwidth]{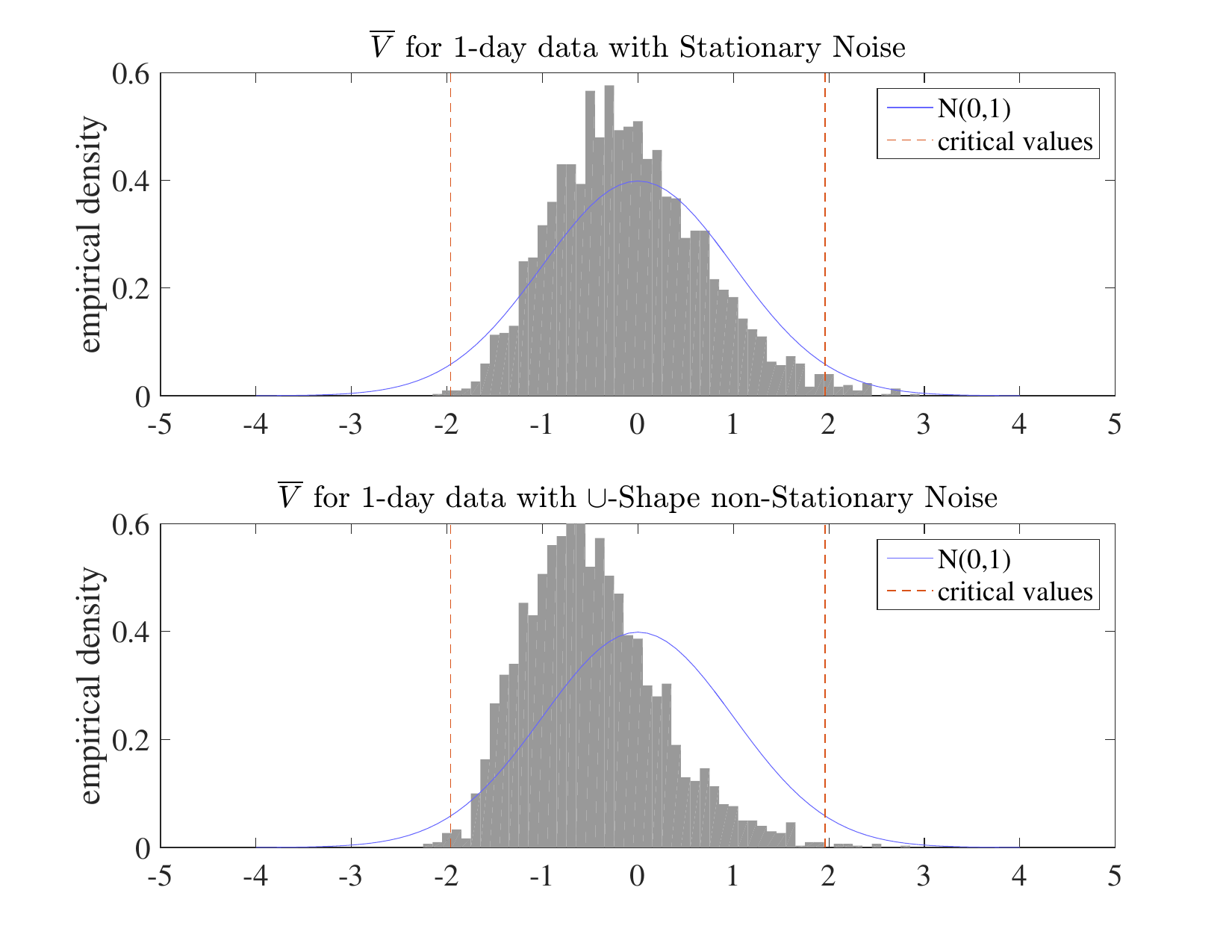}\includegraphics[width=.55\textwidth]{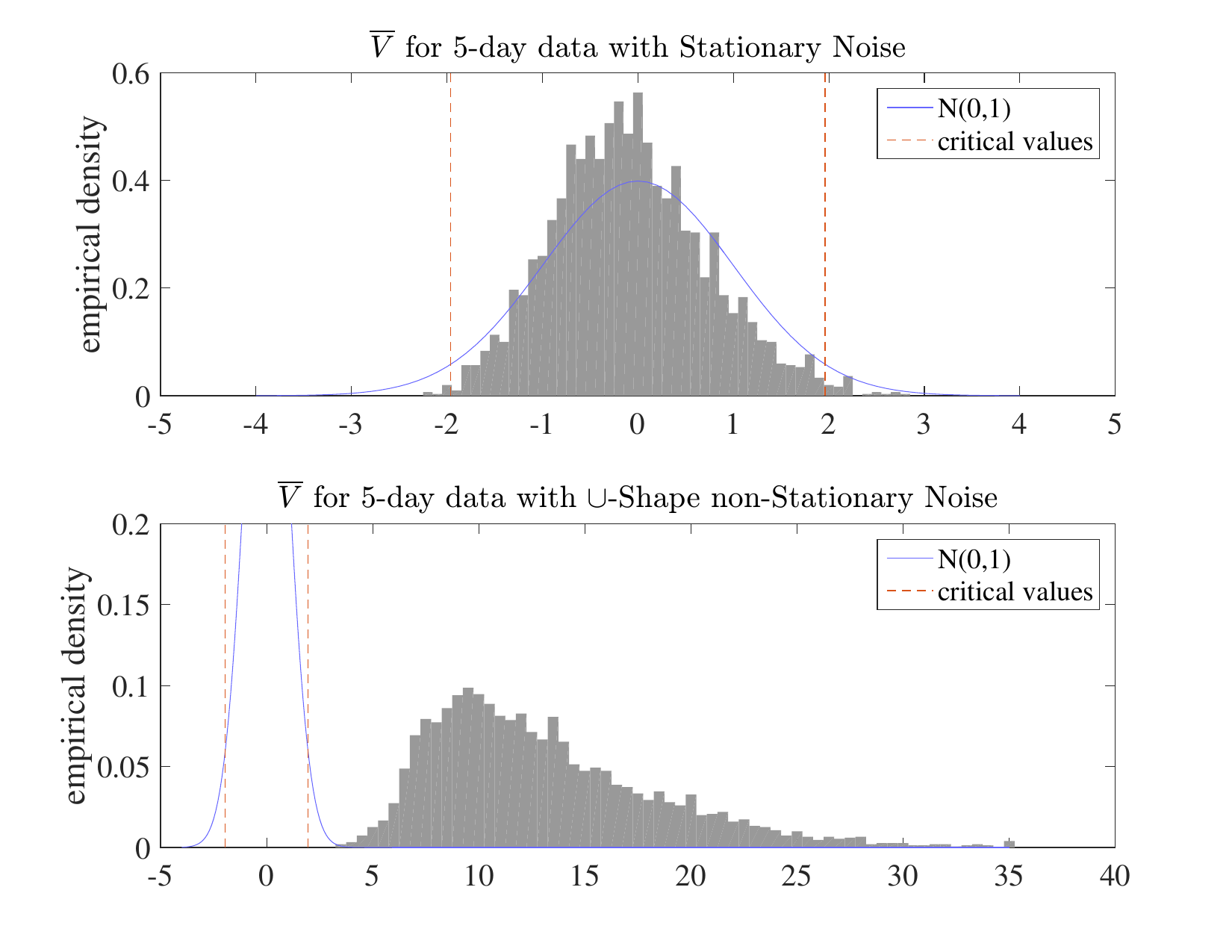}}
  \floatfoot{These plots show the empirical densities of $\overline{V}(Y,K_n,2)^n$ when it applies to 5-day/10-day data with stationary/non-stationary noises. Compared the simulation of other tests, we can see $\overline{V}(Y,K_n,2)^n$ controls type-I error more accurately than $V(Y,K_n,s_n,2)^n$ does when microstructure noise is stationary. On the other hand, if the microstructure noise is non-stationary and exhibits daily diurnal pattern, $N(Y,K_n)$ is better for multi-day data.}\label{Vbar}
\end{figure}




\section{Empirical studies}\label{empirical}
\subsection{Empirical evidence of non-stationary microstructure noise}\label{empievid}

Figure \ref{month_noise} shows daily variations of microstructure noise levels in 2008. Figure \ref{day_noise} exhibits intra-day variations in microstructure noises of individual stocks in the first 4 months of 2013.

\begin{figure}
  \centering
  \includegraphics[width=1.0\textwidth]{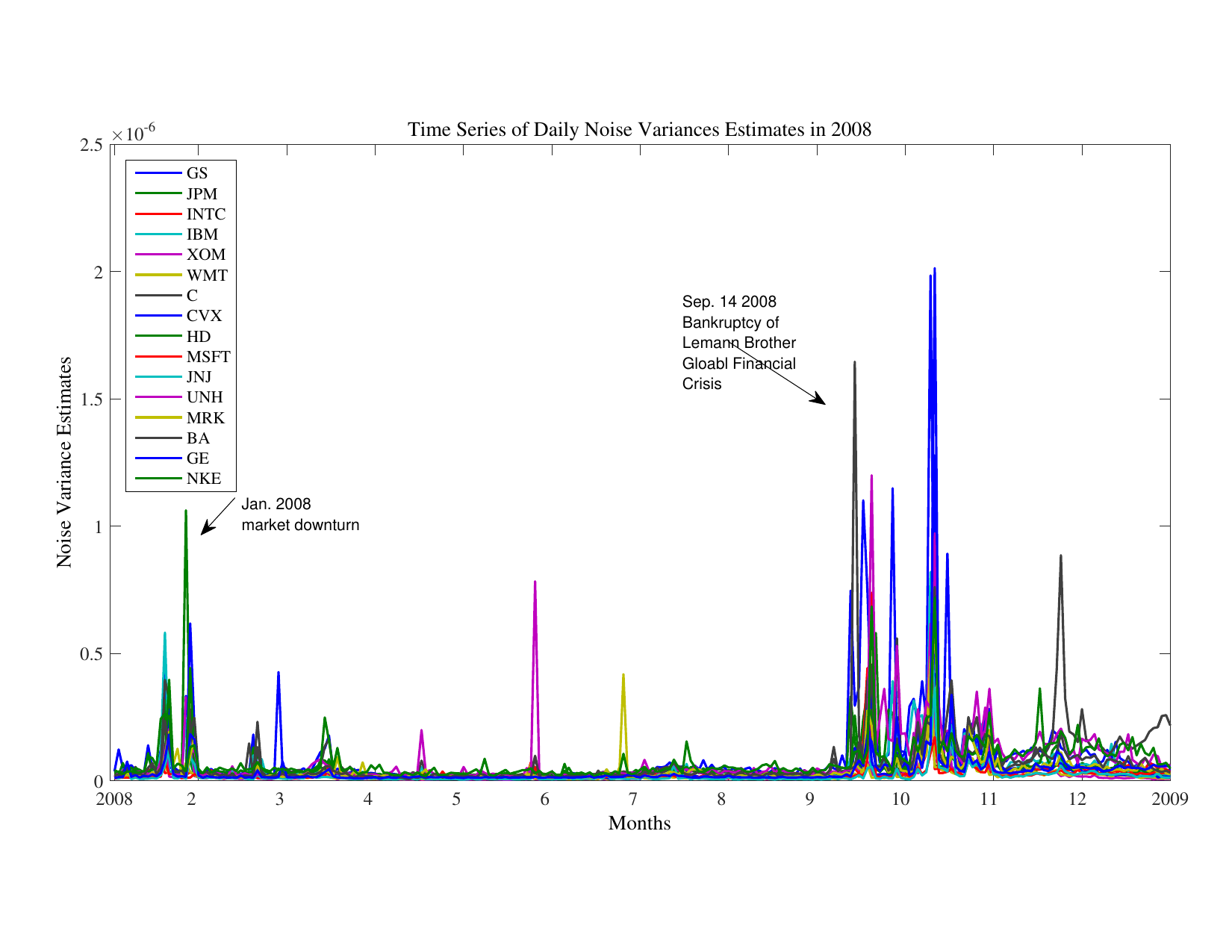}
  \caption{
  Daily noise variance estimates in 2008, with a simple event-history analysis. During the turmoil of financial crisis, the market microstructure noise surged up, the quality of the market worsened strikingly.}\label{month_noise}
\end{figure}
\begin{figure}
	\centering
	\includegraphics[width=.52\textwidth]{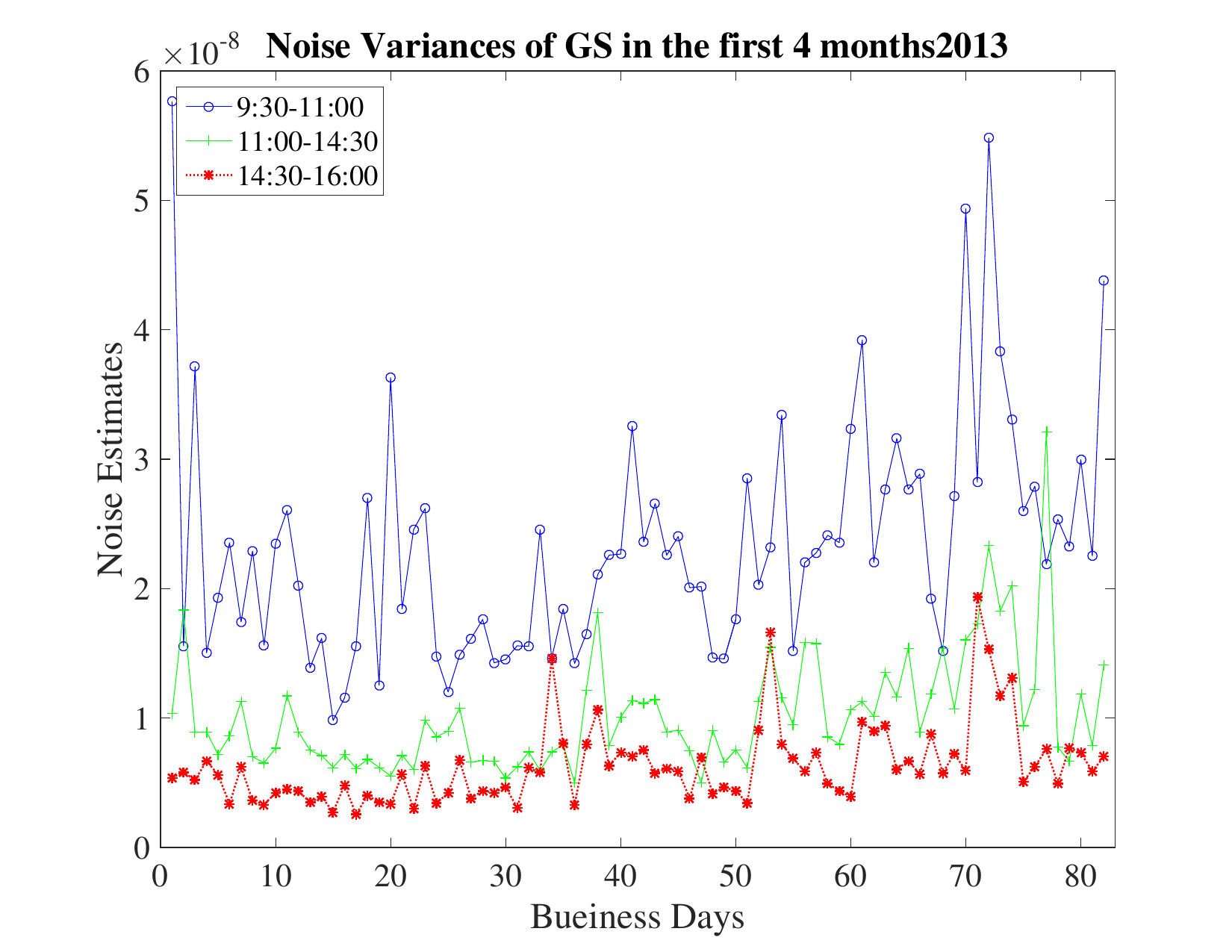}\includegraphics[width=.52\textwidth]{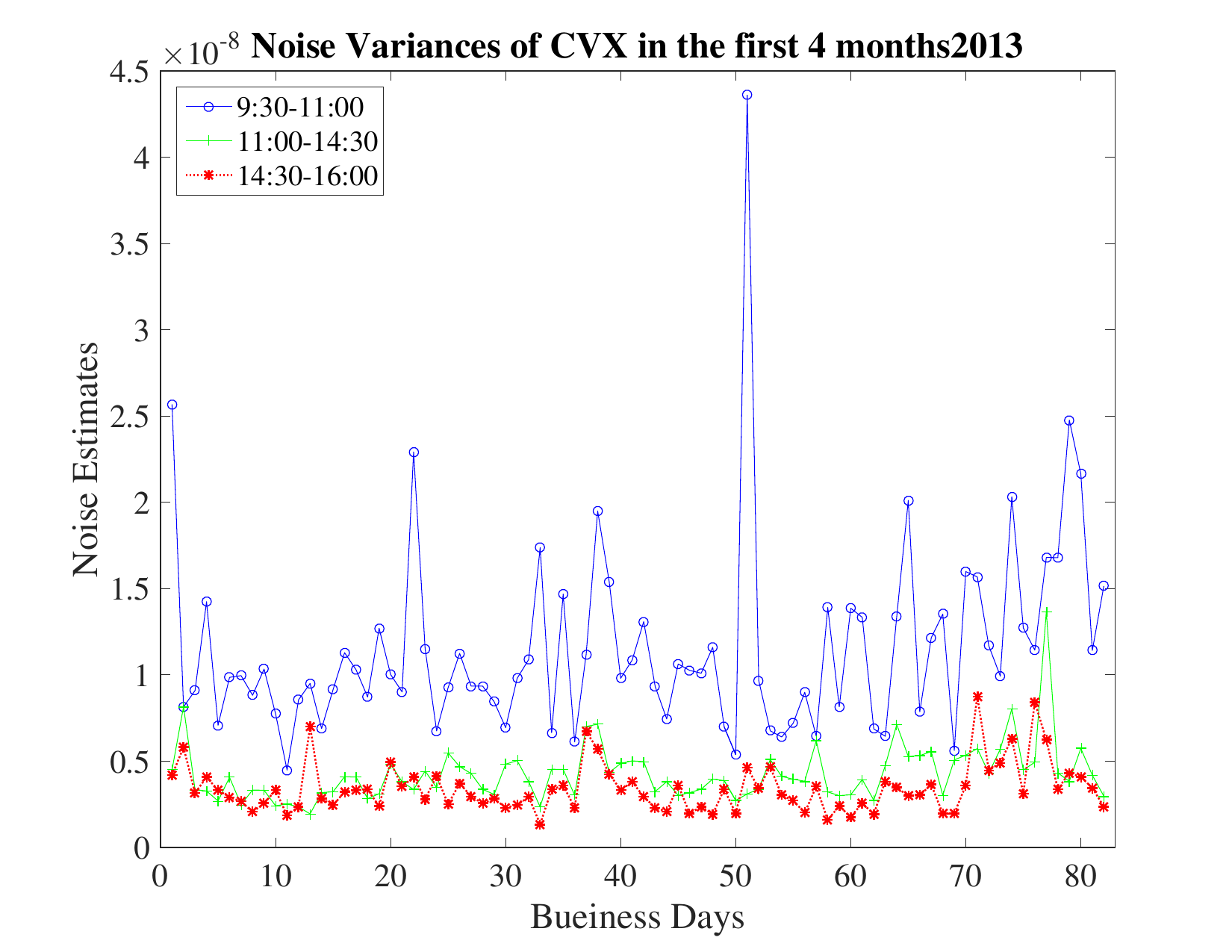}\\
	\includegraphics[width=.52\textwidth]{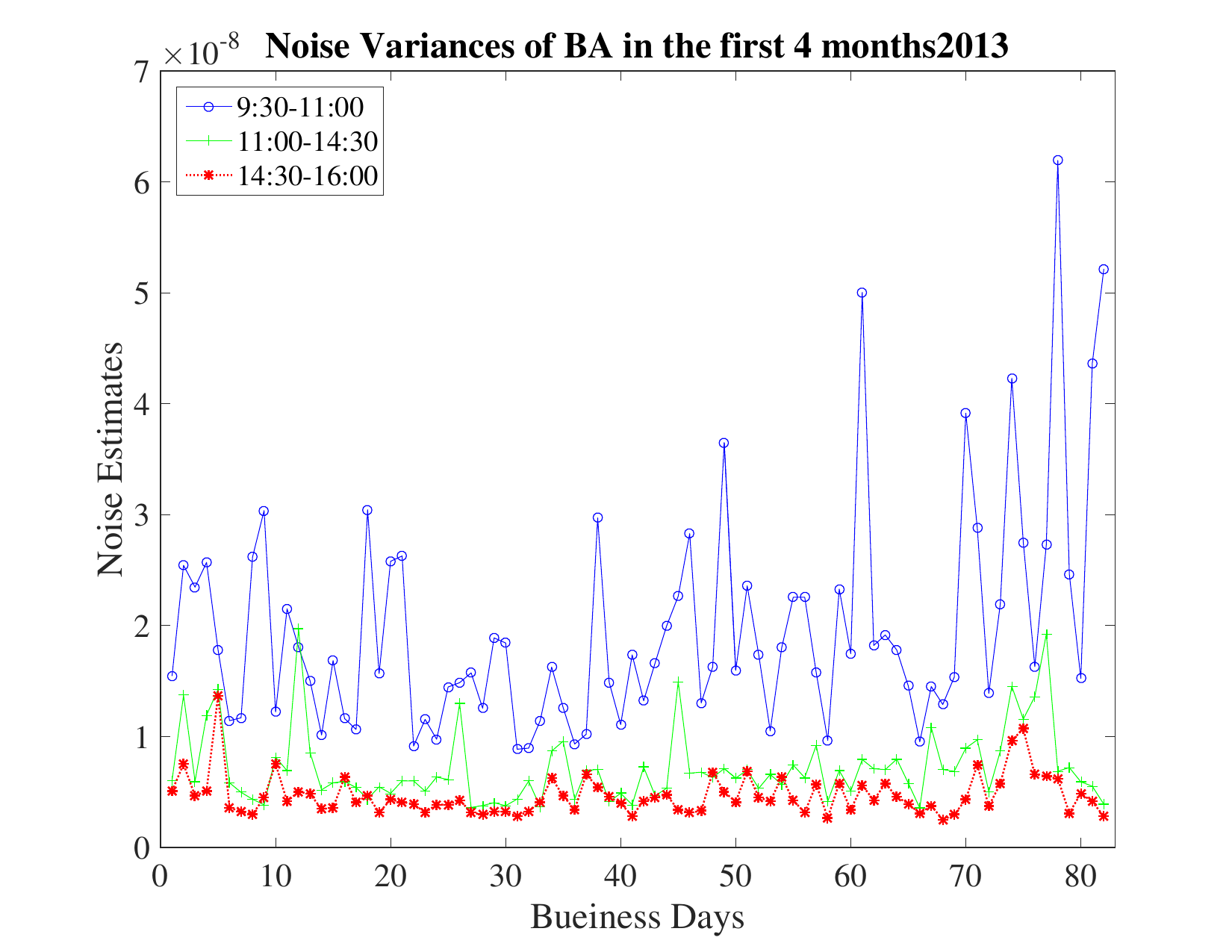}\includegraphics[width=.52\textwidth]{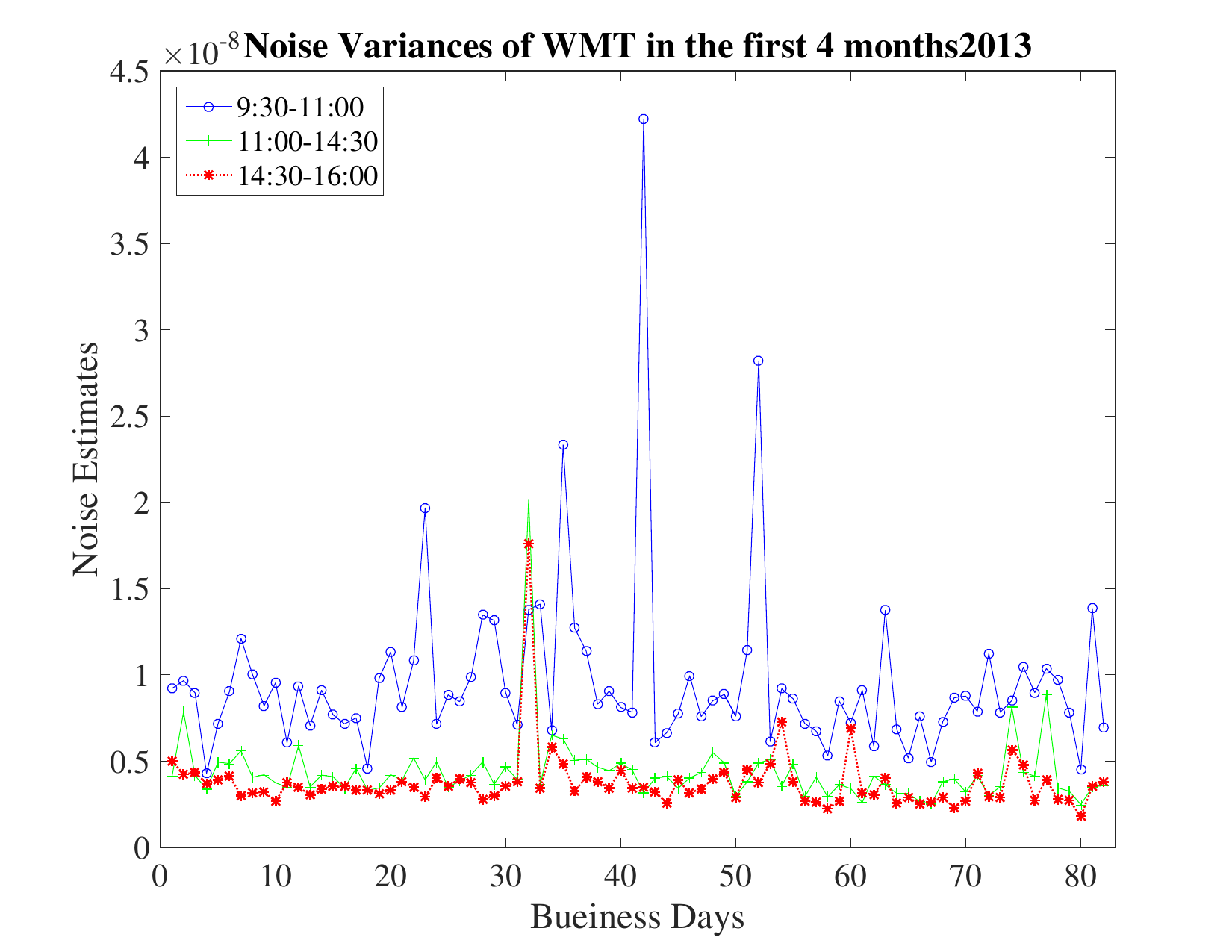}\\
	\caption{Temporal dynamics of $\widehat{E(\epsilon_t^2)}$ in different segments of trading hours 9:30-16:00 EST. For example, the \textcolor{blue}{blue line} is the time series plot of estimated noise level around morning across different business days in the first 4 months, 2013. From left to right and top to bottom, the companies are Goldman Sachs (GS), Chevron Corporation (CVX), Boeing (BA), Walmart (WMT).}\label{day_noise}
\end{figure}

\subsection{Empirical tests}\label{empitest}

In this subsection, we apply our tests onto high-frequency financial transaction data of stocks. We take several components in Dow Jones Industrial Average (DJIA30): Intel Corporation (INTC), International Business Machines Corporation (IBM), Goldman Sachs (GS), JPMorgan Chase (JPM), Exxon Mobil Corporation (XOM), General Electric (GE) and Walmart (WMT). We compute the test statistics and the p-values for these stocks during the 22 business days in April, which is shown in Table \ref{table_monthly_N}.  Besides, in Figure \ref{EmpriTest1}, we plot the whole trend of the test statistics $N(Y,K)_T^n$ during the period January 3, 2006 to December 31, 2013 as measures of liquidity.

\begin{landscape}
\begin{table}
\begin{tabular}{crl|rl|rl|rl|rl}
                 \cline{2-11}
Dates      & \multicolumn{2}{c}{\textbf{IBM}} & \multicolumn{2}{c}{\textbf{XOM}} & \multicolumn{2}{c}{\textbf{INTC}} & \multicolumn{2}{c}{\textbf{GS}} & \multicolumn{2}{c}{\textbf{GE}}\\
\cline{2-11}
yyyy-mm-dd & $N(Y,K)_T$ & p-value & $N(Y,K)_T$ & p-value & $N(Y,K)_T$ & p-value & $N(Y,K)_T$ & p-value & $N(Y,K)_T$ & p-value\\
\hline
2013-04-01 & 0.5942  & \textcolor{red}{0.2762}  & 6.0114  & 9.1947e-10 & 17.3676 & 0 & 0.9125 & \textcolor{red}{0.1807} & 6.4765  & 4.6925e-11\\
2013-04-02 & 3.8894  & 5.0246e-05 & 16.7202 & 0          & 12.3133 & 0 &  8.6813 & 0 & 26.2744 & 0\\
2013-04-03 & 6.8579  & 3.4941e-12 & 11.1238 & 0          & 12.6015 & 0 &  9.9089 & 0 &  4.5688 & 2.4529e-06\\
2013-04-04 & 4.5851  & 2.2690e-06 & 11.7737 & 0          & 11.8105 & 0 &  7.4771 & 3.7970e-14  & 8.3468 & 0\\
2013-04-05 & 8.6943  & 0          & 19.6103 & 0          & 21.9399 & 0 & 13.0797 & 0 & 7.7996 & 3.1086e-15\\
2013-04-08 & 12.0086 & 0          & 10.2720 & 0          & 19.6533 & 0 & 12.0044 & 0 & 8.7725 & 0\\
2013-04-09 & 4.4107  & 5.1507e-06 & 4.2196  & 1.0152e-05 & 14.5840 & 0 &  3.9217 & 4.3971e-05 & 2.8118 & 0.0025\\
2013-04-10 & 10.7967 & 0          & 20.3985 & 0          & 12.4934 & 0 &  1.4729 & \textcolor{red}{0.0704} & 12.1430 & 0\\
2013-04-11 & 10.5358 & 0          & 8.4332  & 0          & 19.8102 & 0 &  5.5467 & 1.4557e-08 & 7.6796 & 7.9936e-15\\
2013-04-12 &  9.8741 & 0          & 18.8744 & 0          &  9.7960 & 0 & 10.4689 & 0 & 11.3813 & 0\\
2013-04-15 &  8.6767 & 0          & 37.0635 & 0          & 11.8791 & 0 &  5.0028 & 2.8247e-07 & 6.6791 & 1.2023e-11\\
2013-04-16 & 11.5517 & 0          & 25.8213 & 0          & 11.0252 & 0 &  5.5612 & 1.3384e-08 & 16.5744 & 0\\
2013-04-17 & 11.2338 & 0          & 4.2163  & 1.2419e-05 & 20.6048 & 0 &  5.5168 & 1.7261e-08 & 13.6559 & 0\\
2013-04-18 & 15.1748 & 0          & 14.7396 & 0          & 49.2313 & 0 &  3.1477 & 8.2284e-04 & 10.6347 & 0\\
2013-04-19 & 29.7852 & 0          & 18.3013 & 0          & 10.8806 & 0 &  9.7611 & 0          & 18.5074 & 0\\
2013-04-22 & 13.4899 & 0          & 7.0150  & 1.1479e-12 & 10.0430 & 0 &  7.5659 & 1.9207e-14 & 12.9960 & 0\\
2013-04-23 & 11.0911 & 0          & 0.9798  & \textcolor{red}{0.1636} & 1.5144 & \textcolor{red}{0.065} & 0.4083 & \textcolor{red}{0.3415}  & 26.3066 & 0\\
2013-04-24 & 10.6420 & 0          & 26.4967 & 0          & 22.6824 & 0 & 9.6762 & 0 & 20.8122 & 0\\
2013-04-25 & 12.8092 & 0          & 13.4558 & 0          & 15.4190 & 0 & 9.4322 & 0 &  9.3956 & 0\\
2013-04-26 & 7.1480  & 4.4031e-13 & 14.8469 & 0          & 15.1904 & 0 & 3.0896 & 0.0010 & 6.0681 & 6.4723e-10\\
2013-04-29 & 3.4021  & 3.3438e-04 & 19.2697 & 0          &  0.8441 & \textcolor{red}{0.1993} & 9.3275 & 0 & -0.0481 & \textcolor{red}{0.4808}\\
2013-04-30 & 0.4047  & \textcolor{red}{0.3428}  & 12.8344 & 0   & -0.2676 & \textcolor{red}{0.3945} & 10.1050 & 0 & 7.4785 & 3.7637e-14\\
\hline
\end{tabular}
\caption{Test statistics $N(Y,K_n)_T$ for DJIA components ($T$ is 1 business day)}
\label{table_monthly_N}
\end{table}
\end{landscape}

\begin{landscape}
\begin{figure}
  \centering
  \includegraphics[width=1.0\textwidth]{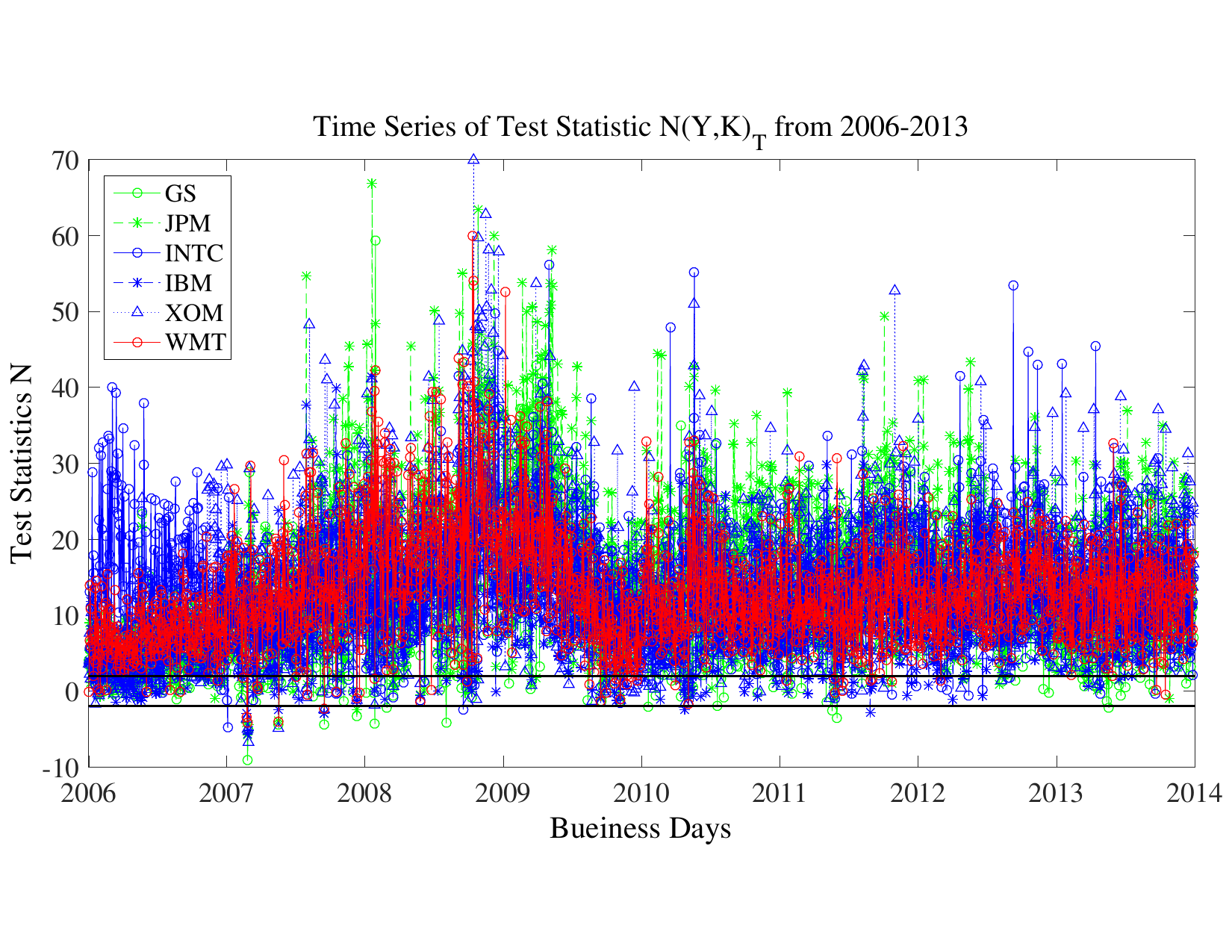}
  \caption{The six time series of the first test statistic $N(Y,K_n)^n$ computed daily using intra-day ultrahigh frequency financial data. The black horizontal lines around zero are .025\% and .975\% quantiles of the standard normal distribution.}\label{EmpriTest1}
\end{figure}
\end{landscape}

\begin{landscape}
\begin{figure}
  \centering
  \includegraphics[width=1.0\textwidth]{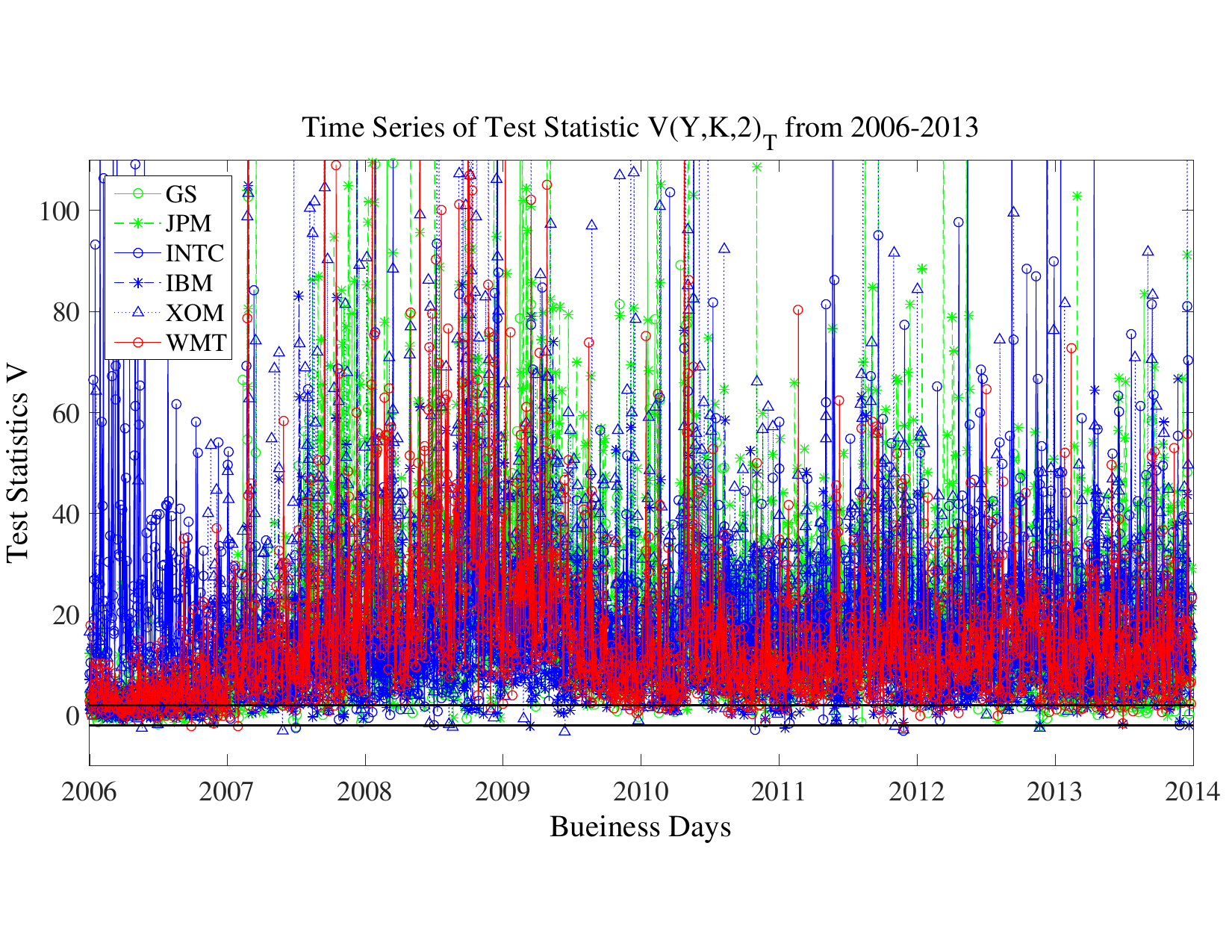}
  \caption{The six time series of the second test statistic $V(Y,K_n,s_n,2)^n$ computed daily using intra-day ultrahigh frequency financial data. The black horizontal lines around zero are .025\% and .975\% quantile of the standard normal distribution.}\label{EmpriTest2}
\end{figure}
\end{landscape}

\section{Conclusion}\label{conclu}

In this paper, we mainly concern hypothesis testing of microstructure noise stationarity in a hidden It\^o semimartingale model. The null hypothesis is that microstructure noise is stationary, and the alternative hypothesis is that microstructure noise is non-stationary with arbitrage dynamics up to a Markov kernel. Our tests work in fairly general settings where the latent It\^o semimartingale may have jumps with any degree of activity, the microstructure allows white noise and rounding error, and the observation times can be irregularly spaced.


The first test is motivated by the behavior of the two-scaled estimator (TSRV) under contamination of non-stationary noise, whose negative impact can be eliminated by a modification of TSRV \citep{kl08} under our general model. Based on the remedy for non-stationary microstructure noise, the first test $N(Y,K_n)^n$ is designed as a functional of volatility estimators, its type-I error can be controlled by associated central limit theorem under the null hypothesis. We also demonstrated that $N(Y,K_n)^n$ explodes in high-frequency asymptotics when microstructure noise is non-stationary.

Besides, we have other complementary tests, namely $V(Y,K_n,s_n,2)^n_T$ and $\overline{V}(Y,K_n,2)^n$. They are defined as functionals of $N(Y,K_n)^n$'s and realized variances, respectively, which are computed in different local time windows. $V(Y,K_n,s_n,2)^n$ and $\overline{V}(Y,K_n,2)^n$ are asymptotically equivalent and share the same convergence rate under the null hypothesis. Asymptotic approximation to $\overline{V}(Y,K_n,2)^n$ in finite sample is more accurate than that of $V(Y,K_n,s_n,2)^n$ under the null hypothesis, however, $V(Y,K_n,s_n,2)^n$ has more advantage under the alternative hypothesis in that it has a larger statistical power. Compared to $N(Y,K_n)^n$ which is more suited for 1-day data, $V(Y,K_n,s_n,2)^n_T$ and $\overline{V}(Y,K_n,2)^n$ are more suited for multi-day data. How to choose these complementary tests are discussed in detail.


Since microstructure noise could be a measure of the market quality (market liquidity, market depth, etc.) \citep{h93,o03,ay09}, our test statistics can be measures of liquidity risk. Particularly, assuming microstructure noise variance evolves like an It\^o diffusion, not only the asymptotic distributions of the test statistics under the alternative hypothesis are available, but also a notation of ``aggregate liquidity risk'' and a consistent estimator with an associated central limit theorem.

Some high-frequency financial data from NYSE are analyzed using the tests. As some DJIA components from 2006 to 2013 shows, variances of microstructure noise indeed changed both daily and intra-daily, which agrees with the empirical literature. Moreover, we find that the timing of the sudden increase in noise variance in Sep. 2008 coincided with the beginning of the global financial catastrophe triggered by the mortgage subprime crisis. The time series of our test statistics reveals a pattern which indicates increases in daily and weekly transaction costs during the financial turmoil.

\newpage
\section{Appendix}\label{proof}
\small
All the calculations are conditional on $\mathcal{F}^{(0)}$. Assuming \textbf{Proposition \ref{prop}} and \textbf{Lemma \ref{lem_RV}, \ref{lem_M}} which can also be found in \cite{zma05,lm07}:
\begin{prop}\label{prop}
	Assume that $E(|A_n||\mathcal{F}^{(0)})$ is $O_p(1)$. Then $A_n$ is $O_p(1)$.
\end{prop}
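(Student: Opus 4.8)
The plan is to invoke the conditional Markov inequality, but with care, since the hypothesis that $E(|A_n|\mid\mathcal{F}^{(0)})$ is $O_p(1)$ is strictly weaker than $L^1$-boundedness of $A_n$ and therefore does not permit a direct passage to unconditional expectations. Write $B_n\equiv E(|A_n|\mid\mathcal{F}^{(0)})$, which is $\mathcal{F}^{(0)}$-measurable by construction. Fix $\epsilon>0$; the goal is to exhibit a threshold $M$, not depending on $n$, with $P(|A_n|>M)\le\epsilon$ uniformly in $n$, which is exactly the meaning of $A_n=O_p(1)$.

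First I would exploit the tightness of $\{B_n\}$: because $B_n=O_p(1)$, there is a constant $L$ with $\sup_n P(B_n>L)\le\epsilon/2$. The event $\{B_n\le L\}$ lies in $\mathcal{F}^{(0)}$, so its indicator may be moved inside the conditional expectation. Next I would split
\begin{equation*}
P(|A_n|>M)=P(|A_n|>M,\,B_n\le L)+P(|A_n|>M,\,B_n>L)
\end{equation*}
and bound the second summand crudely by $P(B_n>L)\le\epsilon/2$. For the first summand, conditioning on $\mathcal{F}^{(0)}$ and applying the conditional Markov inequality gives
\begin{equation*}
P(|A_n|>M,\,B_n\le L)=E\left[\mathbf{1}\{B_n\le L\}\,P(|A_n|>M\mid\mathcal{F}^{(0)})\right]\le E\left[\mathbf{1}\{B_n\le L\}\,\frac{B_n}{M}\right]\le\frac{L}{M}.
\end{equation*}
Combining the two pieces yields $P(|A_n|>M)\le L/M+\epsilon/2$, uniformly in $n$.

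Finally I would choose $M>2L/\epsilon$, so that $L/M<\epsilon/2$ and hence $P(|A_n|>M)\le\epsilon$ for every $n$. Since $\epsilon>0$ was arbitrary, the family $\{A_n\}$ is bounded in probability, i.e.\ $A_n=O_p(1)$.

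The one genuine subtlety, and the step I would watch most carefully, is the separation of the event $\{B_n>L\}$ \emph{before} applying Markov's inequality. One cannot simply write $P(|A_n|>M)\le E(B_n)/M=E|A_n|/M$, since $E|A_n|$ need not be bounded under an $O_p(1)$ hypothesis on $B_n$. The truncation at level $L$, legitimated by the tightness of $\{B_n\}$ together with the $\mathcal{F}^{(0)}$-measurability of $\{B_n\le L\}$, is precisely what converts the pointwise conditional bound into a uniform unconditional tail bound; everything else is routine.
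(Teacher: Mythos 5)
Your proof is correct: the truncation on the $\mathcal{F}^{(0)}$-measurable event $\{B_n\le L\}$ followed by the conditional Markov inequality is exactly the right way to convert tightness of $E(|A_n|\,|\,\mathcal{F}^{(0)})$ into tightness of $A_n$, and you correctly flag that the naive unconditional bound $E|A_n|/M$ is unavailable. The paper does not prove this proposition itself but defers to \cite{zma05,lm07}, where the same truncation-plus-conditional-Markov argument is used, so your approach matches the standard one.
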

\begin{lem}\label{lem_RV}
	Under the model (\ref{X}), (\ref{Z}) and (\ref{epsilon}), and the assumptions in Section \ref{assumptions}, we have:
	\begin{eqnarray}
	[Y,Y]_{\mathcal{G}}&=&[\epsilon,\epsilon]_{\mathcal{G}}+O_p(1)\label{y2all}\\
	{[Y,Y]}_T^{(avg,K)}&=&[Z,Z]_T^{(avg,K)}+[\epsilon,\epsilon]_T^{(avg,K)}+O_p(1/\sqrt{K})
	\end{eqnarray}
\end{lem}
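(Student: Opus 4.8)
The plan is to decompose the realized variance of the observed process into the realized variance of the estimable latent process $Z$, the realized variance of the noise $\epsilon$, and a cross term, and then to show the cross term is negligible at the claimed order. Throughout I would condition on $\mathcal{F}^{(0)}$, which under the identification Assumption \ref{identification} makes the increments of $Z$ (which then coincide with those of $X$) deterministic, and by Assumption \ref{independence} makes the $\epsilon_{t_i}$ conditionally independent with zero mean and conditional moments controlled by Assumption \ref{bounded}. The transfer from conditional to unconditional stochastic order would be carried out by Proposition \ref{prop}.

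For the first identity, write $Y_{t_i}-Y_{t_{i-1}}=(Z_{t_i}-Z_{t_{i-1}})+(\epsilon_{t_i}-\epsilon_{t_{i-1}})$ and expand the square:
\begin{equation*}
[Y,Y]_{\mathcal{G}}=[Z,Z]_{\mathcal{G}}+2\sum_{i=1}^n(Z_{t_i}-Z_{t_{i-1}})(\epsilon_{t_i}-\epsilon_{t_{i-1}})+[\epsilon,\epsilon]_{\mathcal{G}}.
\end{equation*}
The term $[Z,Z]_{\mathcal{G}}$ converges to the quadratic variation $\langle Z,Z\rangle_T$ and is hence $O_p(1)$. For the cross term I would regroup the sum by the individual noises $\epsilon_{t_j}$: each $\epsilon_{t_j}$ enters with the $\mathcal{F}^{(0)}$-measurable coefficient $2\big((Z_{t_j}-Z_{t_{j-1}})-(Z_{t_{j+1}}-Z_{t_j})\big)$, the difference of its two neighbouring $Z$-increments. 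Conditionally on $\mathcal{F}^{(0)}$ the cross term then has mean zero and conditional variance bounded by a constant times $\sum_j\big((Z_{t_j}-Z_{t_{j-1}})^2+(Z_{t_{j+1}}-Z_{t_j})^2\big)g_{t_j}$, which is $O_p(1)$ since the $g_{t_j}$ are bounded (Assumption \ref{bounded}) and $\sum_j(Z_{t_j}-Z_{t_{j-1}})^2=O_p(1)$. Proposition \ref{prop} then yields that the cross term is $O_p(1)$, proving (\ref{y2all}).

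For the second identity I would repeat the expansion on each sparse subgrid $\mathcal{G}^{(k)}$, using the $K$-spaced increments $Y_{t_i}-Y_{t_{i-K}}$, then average over $k=0,\dots,K-1$. This produces $[Z,Z]_T^{(avg,K)}+[\epsilon,\epsilon]_T^{(avg,K)}$ plus a cross term $\frac{2}{K}\sum_k\sum_{t_i\in\mathcal{G}'^{(k)}}(Z_{t_i}-Z_{t_{i-K}})(\epsilon_{t_i}-\epsilon_{t_{i-K}})$. The sharper order $O_p(1/\sqrt{K})$ comes from the same conditional-variance computation but now tracking the scaling: each noise $\epsilon_{t_j}$ appears only in its own subgrid, with coefficient $\frac{2}{K}\big((Z_{t_j}-Z_{t_{j-K}})-(Z_{t_{j+K}}-Z_{t_j})\big)$, so the conditional variance is of order $K^{-2}\sum_j(Z_{t_j}-Z_{t_{j-K}})^2 g_{t_j}$. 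Since there are on the order of $n$ noise terms, the $K$-spaced squared increments sum to $O_p(K)$ (each subgrid's increments recover $\langle Z,Z\rangle_T$, and there are $K$ subgrids), and the $g_{t_j}$ are bounded, the conditional variance is $K^{-2}\cdot O_p(K)=O_p(1/K)$. Proposition \ref{prop} then gives that the cross term is $O_p(1/\sqrt{K})$, establishing the claim.

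The main obstacle I anticipate is the bookkeeping in the cross term: correctly regrouping the telescoping noise increments so that each $\epsilon_{t_j}$ is isolated with its $\mathcal{F}^{(0)}$-deterministic coefficient, and then verifying that the resulting conditional variance really scales like $1/K$ rather than $O(1)$. This hinges on the fact that, after subsampling, each noise term contributes to only one subgrid and is paired with $K$-spaced price increments whose collective squared size across all subgrids is $O_p(K)$ while the prefactor supplies $K^{-2}$. All the moment control required is furnished by Assumptions \ref{independence} and \ref{bounded}, and the passage to unconditional stochastic order is routine via Proposition \ref{prop}.
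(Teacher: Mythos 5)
Your proposal is correct and follows essentially the same route as the paper, which does not prove this lemma directly but imports it from \cite{zma05,lm07} and reproduces exactly your cross-term computation (conditional mean zero, conditional variance bounded by a multiple of $[Z,Z]_{\mathcal{G}}$ resp.\ $K^{-2}\sum_k[Z,Z]_{\mathcal{G}^{(k)}}$, then Proposition \ref{prop}) when bounding $[Z,\epsilon]_{\mathcal{G}}$ in the proof of Theorem \ref{thm1}. The only point to make explicit is that Assumption \ref{bounded} gives only \emph{local} boundedness of $g_{t_i}$, so the variance bound must be carried out on the localization event $\{\tau_l>T\}$ with $\mathbb{P}(\tau_l>T)\to1$, as the paper does.
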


Besides, define $\mathcal{G}_{K+1}^{(\min)}$ as the right immediate neighbor of $\max{\mathcal{G}^{(\min)}}$ in the full grid $\mathcal{G}$, and define $\mathcal{G}_{0}^{(\max)}$ as the left immediate neighbor of $\min{\mathcal{G}^{(\max)}}$ in the full grid $\mathcal{G}$.

In order to describe the edge effect and the behaviors of some test statistics below, we need to introduce some random variables:
\begin{equation}\label{M_martingales}
	\begin{array}{ll}
	M^{(1)}_T\equiv\frac{1}{\sqrt{n}}\sum^n_{i=0}\left(\epsilon^2_{t_i}-g_{t_i}\right)\\
	M^{(2)}_T\equiv\frac{1}{\sqrt{n}}\sum^n_{i=1}\epsilon_{t_i}\epsilon_{t_{i-1}}&
	M^{(3)}_T\equiv\frac{1}{\sqrt{n}}\sum_{k=0}^{K-1}\sum_{t_i\in \mathcal{G}'^{(k)}}\epsilon_{t_i}\epsilon_{t_{i-K}}
	\end{array}
\end{equation}
and denote $h_t(\omega^{(0)})\equiv E(\epsilon^4_t|\mathcal{F}^{(0)})(\omega^{(0)})$. Note that $M^{(1)}_T$, $M^{(2)}_T$ and $M^{(3)}_T$ are the \textit{end-points of martingales} with respect to filtration $\mathcal{F}_i=\sigma(\epsilon_{t_j},j\le i; X_t, \forall t)$. By the argument from the Appendix A.2 in \cite{amz05}, we have
\begin{lem}\label{lem_M}
	$M_T^{(1)}$, $M_T^{(2)}$ and $M_T^{(3)}$ are asymptotically conditionally independent mixed normals, they have conditional variances $\frac{1}{T}\int_0^Th_t-g_t^2\,\mathrm{d}t$, $\frac{1}{T}\int_0^Tg_t^2\,\mathrm{d}t$, $\frac{1}{T}\int_0^Tg_t^2\,\mathrm{d}t$, respectively.
\end{lem}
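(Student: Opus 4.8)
The plan is to prove this as a stable central limit theorem for a triangular array of martingale differences, carried out conditionally on $\mathcal{F}^{(0)}$, exactly along the lines of Appendix A.2 of \cite{amz05}. View the three statistics as the terminal values of the vector martingale $M_T=(M^{(1)}_T,M^{(2)}_T,M^{(3)}_T)$ adapted to $\mathcal{F}_i=\sigma(\epsilon_{t_j},j\le i;X_t,\forall t)$, with increments $\xi^{(1)}_i=\frac{1}{\sqrt n}(\epsilon^2_{t_i}-g_{t_i})$, $\xi^{(2)}_i=\frac{1}{\sqrt n}\epsilon_{t_i}\epsilon_{t_{i-1}}$, and $\xi^{(3)}_i=\frac{1}{\sqrt n}\epsilon_{t_i}\epsilon_{t_{i-K}}$ (the last summed over $\bigcup_k\mathcal{G}'^{(k)}$, reindexed by the larger index $i$ so that each increment is $\mathcal{F}_i$-measurable). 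First I would verify the martingale-difference property: by Assumption \ref{independence}, $\epsilon_{t_i}$ is conditionally independent of $\mathcal{F}_{i-1}$ given the latent path, and by construction $E(\epsilon_{t_i}|\mathcal{F}^{(0)})=0$, $E(\epsilon^2_{t_i}|\mathcal{F}^{(0)})=g_{t_i}$; hence $E(\xi^{(j)}_i|\mathcal{F}_{i-1})=0$ for each $j$, since the remaining factors ($\epsilon_{t_{i-1}}$, $\epsilon_{t_{i-K}}$, or the constant $g_{t_i}$) are $\mathcal{F}_{i-1}$-measurable.

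Second, I would compute the predictable quadratic variations. For $M^{(1)}$ the bracket is $\sum_i E((\xi^{(1)}_i)^2|\mathcal{F}_{i-1})=\frac1n\sum_i(h_{t_i}-g^2_{t_i})$, which converges to $\frac1T\int_0^T(h_t-g^2_t)\,\mathrm{d}t$ as a Riemann sum under the shrinking-mesh assumption (Assumption \ref{hfasym}) together with the c\`adl\`ag regularity of $g,h$. For $M^{(2)}$ and $M^{(3)}$ the brackets equal $\frac1n\sum_i\epsilon^2_{t_{i-1}}g_{t_i}$ and $\frac1n\sum_i\epsilon^2_{t_{i-K}}g_{t_i}$; the key step here is a conditional law of large numbers, replacing $\epsilon^2_{t_{i-\ell}}$ by $g_{t_{i-\ell}}$, which introduces an error $\frac1n\sum_i(\epsilon^2_{t_{i-\ell}}-g_{t_{i-\ell}})g_{t_i}$ of zero conditional mean and conditional variance $O(1/n)$, hence $o_p(1)$ by Proposition \ref{prop}. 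The leftover Riemann sums $\frac1n\sum_i g_{t_{i-\ell}}g_{t_i}$ converge to $\frac1T\int_0^T g^2_t\,\mathrm{d}t$, yielding the two stated variances.

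Third, to obtain the asserted asymptotic conditional independence, I would show the three predictable cross-brackets vanish. Writing $s_t=E(\epsilon^3_t|\mathcal{F}^{(0)})$, both $\langle M^{(1)},M^{(2)}\rangle$ and $\langle M^{(1)},M^{(3)}\rangle$ reduce to $\frac1n\sum_i s_{t_i}\epsilon_{t_{i-\ell}}$, which has zero conditional mean and $O(1/n)$ conditional variance; and $\langle M^{(2)},M^{(3)}\rangle=\frac1n\sum_i g_{t_i}\epsilon_{t_{i-1}}\epsilon_{t_{i-K}}$ vanishes likewise, since $\epsilon_{t_{i-1}}\perp\epsilon_{t_{i-K}}$ given the latent path for $K\ge2$. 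Each is $o_p(1)$ by the same variance bound. Finally I would verify a Lyapunov condition: using the $(2+\delta)$-moment bound of Assumption \ref{bounded} (after the usual localization making $X,\sigma^2$ bounded) and Cauchy--Schwarz for the product increments, $\sum_iE(|\xi^{(j)}_i|^{2+\delta}|\mathcal{F}^{(0)})=O(n^{-\delta/2})\to0$, which implies the conditional Lindeberg condition. Conditionally on $\mathcal{F}^{(0)}$ the increments are independent across $i$, so the multivariate martingale CLT delivers convergence of $M_T$ to a centered Gaussian vector with the diagonal covariance matrix computed above; because that covariance is $\mathcal{F}^{(0)}$-measurable, this conditional convergence is precisely stable convergence to the stated mixed normal, and the vanishing off-diagonal terms give the asymptotic conditional independence.

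I expect the main obstacle to be the quadratic-variation bookkeeping for $M^{(2)}$ and $M^{(3)}$ --- specifically, justifying the LLN replacement of $\epsilon^2_{t_{i-\ell}}$ by $g_{t_{i-\ell}}$ and the passage of the resulting Riemann sums to $\frac1T\int_0^T g^2_t\,\mathrm{d}t$ for the (possibly irregular) sampling times --- together with confirming that the conditional-independence structure upgrades the ordinary CLT to \emph{stable} convergence, which is exactly where the argument relies on the framework of \cite{amz05}.
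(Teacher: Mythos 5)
Your proposal is correct and is essentially the argument the paper itself invokes: the paper gives no proof of Lemma~3 beyond citing Appendix~A.2 of \cite{amz05}, and that cited argument is exactly the conditional martingale CLT you carry out (martingale-difference check under Assumption~\ref{independence}, predictable brackets converging to $\frac1T\int_0^T(h_t-g_t^2)\,\mathrm{d}t$ and $\frac1T\int_0^Tg_t^2\,\mathrm{d}t$ after the LLN replacement of $\epsilon^2$ by $g$, vanishing cross-brackets, Lyapunov via Assumption~\ref{bounded}, and the upgrade to stable convergence from conditional independence given $\mathcal{F}^{(0)}$). Your bracket computations also match verbatim the ones the paper does spell out for the analogous edge martingales $\underline{m}^{(j)},\overline{m}^{(j)}$ in its proof of Theorem~\ref{thm2}, so there is no substantive divergence.
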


\subsection{Robustness to jumps in noise inference}
In proving the testing theorems, namely, \textbf{Theorem \ref{thm2}}, \textbf{Theorem \ref{thm3}}, \textbf{Corollary \ref{corollary1}} and \textbf{Theorem \ref{thm4}}, we can assume the $J_t=0,\forall t>0$ in (\ref{X}) without loss of generality, as long as the noise is uncorrelated with neither the continuous part nor the jump part. 
Under \textbf{Assumptions \ref{diffusion}, \ref{jump}}, there are 3 components in the realized variance:
\begin{itemize}
	\item[(1)] finite quadratic variation of the discontinuous It\^o semimartingale $[X,X]_T=\langle X,X\rangle_T+\sum_{t\le T}|\Delta X_t|^2$, where $\Delta X_t=X_t-\lim_{s\nearrow t}
	X_s$ (a well-known result in stochastic calculus);
	\item[(2)] variation due to noise, which is of order $O_p(n)$;
	\item[(3)] asymptotically negligible terms, which are cross terms between noise, continuous martingale and jumps.
\end{itemize}
Under \textbf{Assumption \ref{independence}}, by a similar argument to those in the proof of lemma 1 in \cite{lm07}\footnote{Lemma 1 on p. 606 in \cite{lm07}.}, we have a result similar to \textbf{Lemma \ref{lem_RV}}:
\begin{equation}
[Y,Y]_{\mathcal{G}}=[\epsilon,\epsilon]_{\mathcal{G}}+\underbrace{2\sum_{i=1}^n(J_{t_i}-J_{t_{i-1}})(\epsilon_{t_i}-\epsilon_{t_{i-1}})+[X,X]_T}_{O_p(1)}+O_p(1)
\end{equation}
which suggests that normalized realized variance of the fastest time scale $\frac{1}{2n}[Y,Y]_{\mathcal{G}}$ is a consistent estimator of $E(\epsilon^2|\mathcal{F}^{(0)})$ provided the noise is stationary even if there exist jumps, i.e., \textbf{Lemma \ref{lem_RV}} still holds. For this reason, the asymptotic distributions remain the same for the test statistics, even if jump is present.

\subsection{Proof of Lemma \ref{lem_edge}}\label{deriavg}
\begin{proof} 
By our assumptions, we can write
\begin{eqnarray*}
[Y,Y]_T^{(avg,K)}
&=&[Z,Z]_T^{(avg,K)}+\frac{2\sqrt{n}}{K}\left(M_T^{(1)}-M_T^{(3)}\right)\\&&+\frac{2}{K}\sum^K_{k=1}\sum_{t_i\in\mathcal{G}''^{(k)}}g_{t_i}+\frac{1}{K}\sum_{t_i\in\mathcal{G}^{(\min)}}g_{t_i}+\frac{1}{K}\sum_{t_i\in\mathcal{G}^{(\max)}}g_{t_i}+O_p\left(\frac{1}{\sqrt{K}}\right)
\end{eqnarray*}

By the conditional Lyapunov condition, 
and \textbf{Lemma \ref{lem_M}}
\begin{equation*}
M_T^{(1)}-M_T^{(3)}\overset{\mathbb{P}}{\longrightarrow}\mathcal{MN}\left(0,\frac{1}{T}\int_0^Th_t\,\mathrm{d}t\right)
\end{equation*}
thus (\ref{edge_avg}) follows.
\end{proof}

\subsection{Proof of Theorem \ref{thm1}}\label{prfthm1}
\begin{proof}
Asymptotically\footnote{The caveat is $\mathcal{G}$ and $\mathcal{G}^{(\min)}\bigcup\left(\bigcup_{k=1}^K\mathcal{G}''^{(k)}\right)\bigcup\mathcal{G}^{(\max)}$ might not equal, the difference is $\left\{t_{\left\lfloor n/K\right\rfloor\cdot K+1},\cdots,t_n\right\}$. Whereas, upon an appropriate choice of $K$, this difference is asymptotically negligible.}, the new version of realized variance in \cite{kl08} can be written as follows:
\begin{equation*}
[Y,Y]^{\{n\}}_T=\frac{1}{2}[Y,Y]_{\mathcal{G}^{(\min)}}+[Y,Y]_{\bigcup^K_{k=1}\mathcal{G}''^{(k)}}+\frac{1}{2}[Y,Y]_{\mathcal{G}^{(\min)}}
\end{equation*}
Since for any grid $\mathcal{H}$, $[Y,Y]_{\mathcal{H}}=[Z,Z]_{\mathcal{H}}+2[Z,\epsilon]_{\mathcal{H}}+[\epsilon,\epsilon]_{\mathcal{H}}$, we have
\begin{eqnarray*}
[Y,Y]^{\{n\}}_T
=[Z,\epsilon]_{\mathcal{G}^{(\min)}}+2[Z,\epsilon]_{\bigcup^K_{k=1}\mathcal{G}''^{(k)}}+[Z,\epsilon]_{\mathcal{G}^{(\max)}}+\frac{1}{2}[\epsilon,\epsilon]_{\mathcal{G}^{(\min)}}+[\epsilon,\epsilon]_{\bigcup^K_{k=1}\mathcal{G}''^{(k)}}+\frac{1}{2}[\epsilon,\epsilon]_{\mathcal{G}^{(\max)}}+O_p(1)
\end{eqnarray*}
Note that
$[Z,\epsilon]_{\mathcal{G}^{(\min)}}+2[Z,\epsilon]_{\bigcup^K_{k=1}\mathcal{G}''^{(k)}}+[Z,\epsilon]_{\mathcal{G}^{(\max)}}\le 2[Z,\epsilon]_{\mathcal{G}}$.
Define $\Delta Z_{t_i}=Z_{t_i}-Z_{t_{i-1}}$, then
\begin{eqnarray*}
E\left(\left([Z,\epsilon]_{\mathcal{G}}\right)^2I_{\{\tau_l>T\}}|\mathcal{F}^{(0)}\right)
&=&I_{\{\tau_l>T\}}\sum^n_{i=1}\sum^n_{j=1}\Delta Z_{t_i}\Delta Z_{t_j}E\left[\left(\epsilon_{t_i}-\epsilon_{t_{i-1}}\right)\left(\epsilon_{t_j}-\epsilon_{t_{j-1}}\right)|\mathcal{F}^{(0)}\right]
\end{eqnarray*}

By assumption, the noises are mutually independent conditioning on the whole path of latent process $X$, thus
\begin{equation*}
E\left[(\epsilon_{t_i}-\epsilon_{t_{i-1}})(\epsilon_{t_j}-\epsilon_{t_{j-1}})|\mathcal{F}^{(0)}\right]=
\left\{\begin{array}{rl}
-g_{t_{i\wedge j}},     & |i-j|=1\\
 g_{t_{i-1}}+g_{t_i},   & j=i\\
 0,                     & \text{otherwise}
\end{array}\right.
\end{equation*}
So, if $\tau_l>T$, we have
\begin{multline*}
\sum^n_{i=1}\sum^n_{j=1}\Delta Z_{t_i}\Delta Z_{t_j}E\left[\left(\epsilon_{t_i}-\epsilon_{t_{i-1}}\right)\left(\epsilon_{t_j}-\epsilon_{t_{j-1}}\right)|\mathcal{F}^{(0)}\right]\\
=\sum^{n-1}_{i=0}\left(\Delta Z_{t_{i+1}}\right)^2g_{t_i}+\sum_{i=1}^n\left(\Delta Z_{t_i}\right)^2g_{t_i}-2\sum_{i=1}^{n-1}\Delta Z_{t_i}\Delta Z_{t_{i+1}}g_{t_i}
\le4M_{(2,l)}\cdot[Z,Z]_{\mathcal{G}}=O_p(1)
\end{multline*}
by \textbf{Proposition \ref{prop}} and $\mathbb{P}^{(0)}\{\tau_l>T\}\longrightarrow 1$ as $l\longrightarrow \infty$, we know $[Z,\epsilon]_{\mathcal{G}}=O_p(1)$. So, the following relation holds:
\begin{eqnarray*}
[Y,Y]^{\{n\}}_T
&=&[\epsilon,\epsilon]_{\mathcal{G}}-\frac{1}{2}\left([\epsilon,\epsilon]_{\mathcal{G}^{(\min)}}+[\epsilon,\epsilon]_{\mathcal{G}^{(\max)}}\right)+O_p(1)
\end{eqnarray*}
By our assumption, we have the following:
\begin{eqnarray}
	K[\epsilon,\epsilon]_T^{(avg,K)}&=&2\sqrt{n}\left(M_T^{(1)}-M_T^{(3)}\right)+2\sum_{k=1}^K\sum_{t_i\in\mathcal{G}'^{(k)}}g_{t_i}+\sum_{t_i\in\mathcal{G}^{(\min)}}g_{t_i}-\sum_{t_i\in\mathcal{G}^{(\max)}}g_{t_i}+O_p(\sqrt{K})\label{epsilon_avg}\\{}
	[\epsilon,\epsilon]_{\mathcal{G}}&=&2\sqrt{n}\left(M_T^{(1)}-M_T^{(2)}\right)+2\sum^n_{i=0}g_{t_i}+O_p(1)\label{epsilon_all}
\end{eqnarray}
Define the following quantities:
\begin{equation}\label{m_martingales}
	\begin{array}{ll}
	\underline{m}_T^{(1)}\equiv\frac{1}{\sqrt{K}}\sum_{i=1}^K\left(\epsilon^2_{\mathcal{G}_i^{(\min)}}-g_{\mathcal{G}_i^{(\min)}}\right)&
	\underline{m}_T^{(2)}\equiv\frac{1}{\sqrt{K}}\sum_{i=1}^K\epsilon_{\mathcal{G}_{i+1}^{(\min)}}\epsilon_{\mathcal{G}_i^{(\min)}}\\
	\overline{m}^{(1)}_T\equiv\frac{1}{\sqrt{K}}\sum_{i=1}^K\left(\epsilon^2_{\mathcal{G}_i^{(\max)}}-g_{\mathcal{G}_i^{(\max)}}\right)
	&
	\overline{m}^{(2)}_T\equiv\frac{1}{\sqrt{K}}\sum_{i=1}^K\epsilon_{\mathcal{G}_i^{(\max)}}\epsilon_{\mathcal{G}_{i-1}^{(\max)}}
	\end{array}
\end{equation}
Similarly to (\ref{epsilon_all}), 
\begin{eqnarray*}
[\epsilon,\epsilon]_{\mathcal{G}^{(\min)}}
&=&2\sqrt{K}\left(\underline{m}_T^{(1)}-\underline{m}_T^{(2)}\right)+2\sum_{t_i\in\mathcal{G}^{(\min)}}g_{t_i}+O_p(1)\\{}
[\epsilon,\epsilon]_{\mathcal{G}^{(\max)}}
&=&2\sqrt{K}\left(\overline{m}_T^{(1)}-\overline{m}_T^{(2)}\right)+2\sum_{t_i\in\mathcal{G}^{(\max)}}g_{t_i}+O_p(1)
\end{eqnarray*}

Combine \textbf{Lemma \ref{lem_RV}} with these results, the difference between sample-weighted TSRV and the averaged realized variance of the theoretical process $Z$ is
\begin{multline}\label{WTSRV}
\widehat{\langle X,X\rangle}^{(WTSRV,K)}_T-[Z,Z]_T^{(avg,K)}\\=\frac{2\sqrt{n}}{K}\left(M_T^{(2)}-M_T^{(3)}\right)+\frac{1}{\sqrt{K}}\left(\underline{m}_T^{(1)}-\underline{m}_T^{(2)}+\overline{m}_T^{(1)}-\overline{m}_T^{(2)}\right)+O_p\left(\frac{1}{\sqrt{K}}\right)
\end{multline}
therefore
\begin{multline*}
\frac{K}{\sqrt{n}}\left(\widehat{\langle X,X\rangle}^{(WTSRV,K)}_T-[Z,Z]_T^{(avg,K)}\right)
=2\left(M_T^{(2)}-M_T^{(3)}\right)+o_p(1)\overset{\mathcal{L}-s}{\longrightarrow} \mathcal{MN}\left(0,\frac{8}{T}\int_0^Tg_t^2\,\mathrm{d}t\right)
\end{multline*}

The remaining argument discussing the error term due to discretization $[Z,Z]_T^{(avg)}-\langle Z,Z\rangle_T$ to which an identical technique in Appendix A.3 in \cite{zma05} applies, whence we get the claim of \textbf{Theorem \ref{thm1}}.

\end{proof}

\subsection{Proof of Theorem \ref{thm2}}\label{prfthm2}

\begin{proof}
Recall the definition (\ref{m_martingales}), by (\ref{WTSRV}) and the asymptotic behavior of the original TSRV \citep{amz05,lm07}, under the null $H_0$ we have
\begin{multline*}\label{difTSRV}
\widehat{\langle X,X\rangle}_T^{(WTSRV,K)}-\widehat{\langle X,X\rangle}_T^{(TSRV,K)}\\
=\frac{2(K-1)}{K\sqrt{n}}\left(M_T^{(2)}-M_T^{(1)}\right)+\frac{1}{\sqrt{K}}\left(\underline{m}_T^{(1)}-\underline{m}_T^{(2)}+\overline{m}_T^{(1)}-\overline{m}_T^{(2)}\right)+O_p\left(\frac{1}{K}\right)
\end{multline*}
i.e., when $H_0$ holds, we have $N(Y,K)^n_T=\left(\underline{m}_T^{(1)}-\underline{m}_T^{(2)}+\overline{m}_T^{(1)}-\overline{m}_T^{(2)}\right)+o_p(1)$.

To prove the limiting distribution is normal, again, we will use martingale limit central by exploiting the discrete predictable quadratic variations of those in (\ref{m_martingales}).
\begin{eqnarray*}
	\langle\underline{m}_T^{(1)},\underline{m}_T^{(1)}\rangle_T|\mathcal{F}^{(0)}
	&=&\frac{1}{K}\sum^K_{i=0}\left[E\left(\epsilon^4_{\mathcal{G}_i^{(\min)}}|\mathcal{F}^{(1)}_{\mathcal{G}_{i-1}^{(\min)}},\mathcal{F}^{(0)}\right)-g_{\mathcal{G}_i^{(\min)}}^2\right]\\
	&=&\frac{1}{K}\sum^K_{i=0}\left[h_{\mathcal{G}_i^{(\min)}}-g_{\mathcal{G}_i^{(\min)}}^2\right]\longrightarrow h_0-g_0^2
\end{eqnarray*}
By the same calculation, we know $\langle\overline{m}_T^{(1)},\overline{m}_T^{(1)}\rangle_T|\mathcal{F}^{(0)}=\frac{1}{K}\sum^n_{i=0}\left[h_{\mathcal{G}_i^{(\max)}}-g_{\mathcal{G}_i^{(\max)}}^2\right]\longrightarrow h_T-g_T^2$.
Besides,
\begin{eqnarray*}
\langle\underline{m}^{(2)},\underline{m}^{(2)}\rangle_T|\mathcal{F}^{(0)}
&=&\frac{1}{K}\sum^K_{i=1}\epsilon^2_{\mathcal{G}_i^{(\min)}}E\left(\epsilon^2_{\mathcal{G}_{i+1}^{(\min)}}|\mathcal{F}_{\mathcal{G}_i^{(\min)}}^{(1)},\mathcal{F}^{(0)}\right)=\frac{1}{K}\sum^K_{i=1}\epsilon^2_{\mathcal{G}_i^{(\min)}}g_{\mathcal{G}_{i+1}^{(\min)}}\\
&=&\frac{1}{K}\sum^K_{i=1}\left[\epsilon^2_{\mathcal{G}_i^{(\min)}}-g_{\mathcal{G}_i^{(\min)}}\right]g_{\mathcal{G}_{i+1}^{(\min)}}+\frac{1}{K}\sum^K_{i=1}g_{\mathcal{G}_i^{(\min)}}g_{\mathcal{G}_{i+1}^{(\min)}}
\end{eqnarray*}
Since $\mathcal{G}^{(\min)}\longrightarrow0$ is a shrinking sub-grid, so $\frac{1}{K}\sum^K_{i=1}g_{\mathcal{G}_i^{(\min)}}g_{\mathcal{G}_{i+1}^{(\min)}}\longrightarrow g^2_0$. Besides,
\begin{multline*}
E\left(\left(\frac{1}{K}\sum^K_{i=1}(\epsilon^2_{\mathcal{G}_i^{(\min)}}-g_{\mathcal{G}_i^{(\min)}})g_{\mathcal{G}_{i+1}^{(\min)}}\mathbf{1}_{\{\tau_l>T\}}\right)^2|\mathcal{F}^{(0)}\right)=\\
=\frac{1}{K^2}\sum^K_{i=1}E\left(\left(\epsilon^2_{\mathcal{G}_i^{(\min)}}-g_{\mathcal{G}_i^{(\min)}}\right)^2|\mathcal{F}^{(0)}\right)g_{\mathcal{G}_{i+1}^{(\min)}}^2\mathbf{1}_{\{\tau_l>T\}}
\le\frac{1}{K^2}\sum^K_{i=1}M_{(4,l)}\cdot M^2_{(2,l)}=O_p\left(\frac{1}{K}\right)
\end{multline*}
by \textbf{Proposition \ref{prop}} and the fact that $P(\tau_l>T)\overset{\mathbb{P}}{\longrightarrow} 1$ as $l\longrightarrow \infty$, we know $\frac{1}{K}\sum^K_{i=1}(\epsilon^2_{\mathcal{G}_i^{(\min)}}-g_{\mathcal{G}_i^{(\min)}})g_{\mathcal{G}_{i+1}^{(\min)}}\overset{\mathbb{P}}{\longrightarrow}0$, hence $\langle \underline{m}^{(2)},\underline{m}^{(2)}\rangle_T\longrightarrow g_0^2$. Similarly, $\langle \overline{m}^{(2)},\overline{m}^{(2)}\rangle_T\longrightarrow g_T^2$. By martingale central limit theorem, we know $\underline{m}_T^{(1)}$, $\overline{m}_T^{(1)}$, $\underline{m}_T^{(2)}$, $\overline{m}_T^{(2)}$ are asymptotically mixed normal.
Because $\mathcal{G}^{(\min)}$ and $\mathcal{G}^{(\max)}$ are disjoint sets of observation times, so $\underline{m}_T^{(2)}$ and $\overline{m}_T^{(2)}$, or $\underline{m}_T^{(2)}$ and $\underline{m}_T^{(2)}$ are independent conditional on $\mathcal{F}^{(0)}$, so $\underline{m}_T^{(1)}+\overline{m}_T^{(1)}\overset{\mathcal{L}-s}{\longrightarrow}\mathcal{MN}\left(0,h_0-g_0^2+h_T-g_T^2\right)$ and $\underline{m}_T^{(2)}+\overline{m}_T^{(2)}\overset{\mathcal{L}-s}{\longrightarrow}\mathcal{MN}\left(0,g_0^2+g_T^2\right)$.

Furthermore,
\begin{eqnarray*}
\langle \underline{m}^{(1)}, \underline{m}^{(2)}\rangle_T|\mathcal{F}^{(0)}
&=&\frac{1}{K}\sum_{i=1}^KE\left[\left(\epsilon^2_{\mathcal{G}_i^{(\min)}}-g_{\mathcal{G}_i^{(\min)}}\right)\cdot\left(\epsilon_{\mathcal{G}_i^{(\min)}}\epsilon_{\mathcal{G}_{i+1}^{(\min)}}\right)|\mathcal{F}^{(0)},\mathcal{F}^{(1)}_{\mathcal{G}_i^{(\min)}}\right]\\
&=&\frac{1}{K}\sum_{i=1}^K\left[\left(\epsilon^2_{\mathcal{G}_i^{(\min)}}-g_{\mathcal{G}_i^{(\min)}}\right)\epsilon_{\mathcal{G}_i^{(\min)}}\cdot E\Big(\epsilon_{\mathcal{G}_{i+1}^{(\min)}}|\mathcal{F}^{(1)}_{\mathcal{G}_i^{(\min)}}\Big)|\mathcal{F}^{(0)}\right]=0
\end{eqnarray*}
and $\langle \overline{m}^{(1)}, \overline{m}^{(2)}\rangle_T|\mathcal{F}^{(0)}
=\frac{1}{K}\sum_{i=1}^K\epsilon_{\mathcal{G}_{i-1}^{(\max)}}E\left(\epsilon^3_{\mathcal{G}_i^{(\max)}}|\mathcal{F}^{(0)},\mathcal{F}^{(1)}_{\mathcal{G}_{i-1}^{(\max)}}\right)$.
\begin{eqnarray*}
E\left(\left(\langle\overline{m}^{(1)},\overline{m}^{(2)}\rangle_T\right)^2\cdot I_{\{\tau_l>T\}}|\mathcal{F}^{(0)}\right)
&=&\frac{1}{K^2}\sum_{i=1}^Kg_{\mathcal{G}_{i-1}^{(\max)}}\cdot \left(E\left(E(\epsilon^3_{\mathcal{G}_i^{(\max)}}|\omega^{(0)})|\mathcal{F}^{(1)}_{\mathcal{G}_{i-1}^{(\max)}}\right)\right)^2\cdot I_{\{\tau_l>T\}}\\
&\le&\frac{1}{K}\cdot M_{(2,l)}\cdot M^2_{(3,l)}\cdot I_{\{\tau_l>T\}}
\end{eqnarray*}
by \textbf{Proposition \ref{prop}} and the fact that $\mathbb{P}\{\tau_l> T\}\longrightarrow 1$ as $l\longrightarrow \infty$, we know $\langle \overline{m}^{(1)},\overline{m}^{(2)}\rangle_T=O_p\left(\frac{1}{\sqrt{K}}\right)\overset{\mathbb{P}}{\longrightarrow}0$. Thus, $\underline{m}_T^{(1)}$, $\underline{m}_T^{(2)}$, $\overline{m}_T^{(1)}$ and $\overline{m}_T^{(2)}$ are asymptotically independent and mixed normals, and
\begin{equation*}
\underline{m}_T^{(1)}-\underline{m}_T^{(2)}+\overline{m}_T^{(1)}-\overline{m}_T^{(2)}\overset{\mathcal{L}-s}{\longrightarrow}\mathcal{MN}(0,h_0+h_T)
\end{equation*}
\end{proof}

\subsection{Proof of Lemma \ref{lem_epsilon4}}\label{prflem_epsilon4}
\begin{proof}
Since $\{Z_t\}_{t\ge0}$ is an It\^o semimartingale
\begin{multline*}
[Y;4]_{\mathcal{G}}
=\sum^n_{i=1}(\epsilon_{t_i}-\epsilon_{t_{i-1}})^4+4\sum^n_{i=1}(\epsilon_{t_i}-\epsilon_{t_{i-1}})^3(Z_{t_i}-Z_{t_{i-1}})+
4\sum^n_{i=1}(\epsilon_{t_i}-\epsilon_{t_{i-1}})(Z_{t_i}-Z_{t_{i-1}})^3+O_p(1)
\end{multline*}
\begin{multline*}
E\left[\left(\sum^n_{i=1}(\epsilon_{t_i}-\epsilon_{t_{i-1}})^3(Z_{t_i}-Z_{t_{i-1}})\right)^2\mathbf{1}_{\{\tau_l>T\}}|\mathcal{F}^{(0)}\right]=\sum^n_{i=1}E\left[(\epsilon_{t_i}-\epsilon_{t_{i-1}})^6(Z_{t_i}-Z_{t_{i-1}})^2|\mathcal{F}^{(0)}\right]\mathbf{1}_{\{\tau_l>T\}}\\
\le\sum^n_{i=1}E\left[(\epsilon_{t_i}-\epsilon_{t_{i-1}})^{12}|\mathcal{F}^{(0)}\right]^{1/2}\cdot E\left[(Z_{t_i}-Z_{t_{i-1}})^4|\mathcal{F}^{(0)}\right]^{1/2}\mathbf{1}_{\{\tau_l>T\}}\\
\le\left(990\right)^{1/2}\left(\max_{2\le k\le12} M_{(k,l)}\vee 1\right)\mathbf{1}_{\{\tau_l>T\}}\sum^n_{i=1}E\left[(Z_{t_i}-Z_{t_{i-1}})^4|\mathcal{F}^{(0)}\right]^{1/2}=O_p(1)
\end{multline*}
\begin{multline*}
E\left[\left(\sum^n_{i=1}(\epsilon_{t_i}-\epsilon_{t_{i-1}})(Z_{t_i}-Z_{t_{i-1}})^3\right)^2\mathbf{1}_{\{\tau_l>T\}}|\mathcal{F}^{(0)}\right]=\sum^n_{i=1}E\left[(\epsilon_{t_i}-\epsilon_{t_{i-1}})^2(Z_{t_i}-Z_{t_{i-1}})^6|\mathcal{F}^{(0)}\right]\mathbf{1}_{\{\tau_l>T\}}\\
\le\sum^n_{i=1}E\left[(\epsilon_{t_i}-\epsilon_{t_{i-1}})^4|\mathcal{F}^{(0)}\right]^{1/2}\cdot E\left[(Z_{t_i}-Z_{t_{i-1}})^{12}|\mathcal{F}^{(0)}\right]^{1/2}\mathbf{1}_{\{\tau_l>T\}}\\
\le \sqrt{6}\left(\max_{2\le k\le4} M_{(k,l)}\vee 1\right)\mathbf{1}_{\{\tau_l>T\}}\cdot\sum^n_{i=1}E\left[(Z_{t_i}-Z_{t_{i-1}})^{12}|\mathcal{F}^{(0)}\right]^{1/2}=o_p(1)
\end{multline*}
Thus, $[Y;4]_{\mathcal{G}}=\sum^n_{i=1}(\epsilon_{t_i}-\epsilon_{t_{i-1}})^4+O_p(1)$. 
Note that
\begin{equation}
\sum^n_{i=1}(\epsilon_{t_i}-\epsilon_{t_{i-1}})^4=2\sum^n_{i=1}h_{t_i}+6\sum^n_{i=1}g_{t_{i-1}}g_{t_i}+\sqrt{n}\left(2L^{(1)}_T+6L^{(2)}_T-4L^{(3)}_T-4L^{(4)}_T\right)+O_p(1)
\end{equation}
where
\begin{equation*}
\begin{array}{ll}
L^{(1)}_T=\frac{1}{\sqrt{n}}\sum^n_{i=1}\left[\epsilon^4_{t_i}-h_{t_i}\right]&
L^{(2)}_T=\frac{1}{\sqrt{n}}\sum^n_{i=1}\left[\epsilon^2_{t_{i-1}}\epsilon^2_{t_i}-E(\epsilon^2_{t_{i-1}}\epsilon^2_{t_i}|\mathcal{F}^{(0)})\right]\\
L^{(3)}_T=\frac{1}{\sqrt{n}}\sum^n_{i=1}\epsilon^3_{t_{i-1}}\epsilon_{t_i}&
L^{(4)}_T=\frac{1}{\sqrt{n}}\sum^n_{i=1}\epsilon_{t_{i-1}}\epsilon^3_{t_i}
\end{array}
\end{equation*}
We can show that $L^{(1)}_T$, $L^{(2)}_T$, $L^{(3)}_T$ and $L^{(4)}_T$ are mixed normals. Observe that
\begin{eqnarray*}
\frac{1}{n}\sum^n_{i=1}h_{t_i}&=&\frac{1}{T}\sum^n_{i=1}h_{t_i}\frac{T}{n}\longrightarrow\frac{1}{T}\int_0^Th_t\,\mathrm{d}t\\
\frac{1}{n}\sum^n_{i=1}g_{t_{i-1}}g_{t_i}&=&\frac{1}{T}\sum^n_{i=1}g_{t_{i-1}}g_{t_i}\frac{T}{n}\longrightarrow\frac{1}{T}\int_0^Tg^2_t\,\mathrm{d}t
\end{eqnarray*}
then (\ref{eps4}) follows.
\end{proof}

For each $i=1,2,\cdots,r_n$, we define $m_i\equiv m_i^{(1)}-m_i^{(2)}$ where
\begin{equation}\label{m_i}
\begin{array}{ll}
      m_i^{(1)}&\equiv\frac{1}{\sqrt{K_n}}\sum_{k=1}^{K_n}\left(\epsilon^2_{t_{(i-1)K_n+k}}-g_{t_{(i-1)K_n+k}}\right)\\
      m_i^{(2)}&\equiv\frac{1}{\sqrt{K_n}}\sum_{k=1}^{K_n}\epsilon_{t_{(i-1)K_n+k}}\epsilon_{t_{(i-1)K_n+k-1}}
\end{array}
\end{equation}
To prove \textbf{Theorem \ref{thm3}}, we need an additional lemma:
\begin{lem}\label{lemma4} Assume the microstructure noise is stationary, and under the moment assumptions on the noise process $\{\epsilon_t\}_{t\ge 0}$, we have for each $i\in\{1,2,\cdots,r_n\}$,
\begin{eqnarray*}
	E(m^2_i|\mathcal{F}^{(0)})&=&E(\epsilon^4|\mathcal{F}^{(0)})\\
	E(m^4_i|\mathcal{F}^{(0)})&=&6\left[E(\epsilon^4|\mathcal{F}^{(0)})^2-E(\epsilon^4|\mathcal{F}^{(0)})E(\epsilon^2|\mathcal{F}^{(0)})^2+E(\epsilon^2|\mathcal{F}^{(0)})^4\right]+O_p\left(\frac{1}{K}\right)
\end{eqnarray*}
\end{lem}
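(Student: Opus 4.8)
The plan is to compute both conditional moments directly from $m_i = m_i^{(1)} - m_i^{(2)}$, using that under stationarity and conditionally on $\mathcal{F}^{(0)}$ the noise variables $\epsilon_{t_j}$ inside block $i$ are independent with $E(\epsilon_{t_j}^2|\mathcal{F}^{(0)}) = E(\epsilon^2|\mathcal{F}^{(0)})$ and $E(\epsilon_{t_j}^4|\mathcal{F}^{(0)}) = E(\epsilon^4|\mathcal{F}^{(0)})$ (Assumption \ref{independence}, Assumption \ref{bounded}). Abbreviating $A_k = \epsilon_{t_{(i-1)K_n+k}}^2 - g$ and $B_k = \epsilon_{t_{(i-1)K_n+k}}\epsilon_{t_{(i-1)K_n+k-1}}$, so that $m_i = \frac{1}{\sqrt{K_n}}\sum_{k=1}^{K_n}(A_k-B_k)$, the second moment is the easy part. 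Expanding $m_i^2$ and taking conditional expectations, the $\{A_k\}$ are independent and mean-zero, the $\{B_k\}$ are mean-zero and pairwise uncorrelated (including at lag one, since $E(B_kB_{k-1}|\mathcal{F}^{(0)})$ carries an isolated endpoint factor), and every cross covariance $E(A_kB_l|\mathcal{F}^{(0)})$ vanishes for the same reason. Only the diagonal survives, giving $E(m_i^2|\mathcal{F}^{(0)}) = \frac{1}{K_n}\sum_k E((A_k-B_k)^2|\mathcal{F}^{(0)}) = E(\epsilon^4|\mathcal{F}^{(0)})$.

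For the fourth moment I would expand $m_i^4 = \frac{1}{K_n^2}\sum_{j,k,l,p}\xi_j\xi_k\xi_l\xi_p$ with $\xi_k = A_k - B_k$, and organize the quadruple sum by the coincidence pattern of its indices. The structural facts driving the count are that $\xi_k$ depends only on $\epsilon_{t_{\cdot-1}},\epsilon_{t_\cdot}$ (hence $\xi_k$ and $\xi_l$ are conditionally independent once $|k-l|\ge 2$) and that $E(\xi_k|\mathcal{F}^{(0)}) = 0$. Consequently any quadruple whose indices contain a ``singleton cluster'' contributes zero, and the genuinely $O(K_n^2)$ contributions arise only from configurations in which the four indices split into two well-separated equal pairs, while all remaining patterns (indices confined to a bounded window, or three-plus-one clusters) comprise only $O(K_n)$ terms and fall into the $O_p(1/K_n)$ remainder. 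I would then expand $(A_k-B_k)^2$ on each surviving pair, factorize across the separated pairs by conditional independence, and collect the resulting products of $E(\epsilon^4|\mathcal{F}^{(0)})$ and $E(\epsilon^2|\mathcal{F}^{(0)})$ into the stated symmetric polynomial, the combinatorial multiplicity of each admissible pattern supplying the numerical coefficients.

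The main obstacle is precisely this fourth-moment bookkeeping: because $m_i^{(1)}$ and $m_i^{(2)}$ share noise variables, the one-dependence among the $B_k$ and the overlap between the $A$- and $B$-sums must be tracked carefully in deciding which index patterns survive at $O(1)$ rather than $O(1/K_n)$. In particular I would verify that the odd cross contributions, corresponding to $E((m_i^{(1)})^3 m_i^{(2)}|\mathcal{F}^{(0)})$ and $E(m_i^{(1)}(m_i^{(2)})^3|\mathcal{F}^{(0)})$, vanish by conditioning on the sub-filtration generated by the noise up to each index and using that an extremal noise factor has conditional mean zero, and that the even cross contribution $E((m_i^{(1)})^2(m_i^{(2)})^2|\mathcal{F}^{(0)})$ enters only through its separated-pair part. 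Finally I would appeal to Proposition \ref{prop} together with the local boundedness of the conditional noise moments to confirm that all discarded pieces are $O_p(1/K_n)$ uniformly in $i$, which pins down the leading constant and completes the proof.
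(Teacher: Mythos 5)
Your strategy is essentially the paper's: condition on $\mathcal{F}^{(0)}$, use the block-wise conditional independence of the noise, and compute the two moments by deciding which index configurations survive. Your second-moment computation is correct and equivalent to the paper's (the paper works with $(m_i^{(1)})^2$, $(m_i^{(2)})^2$ and their cross term rather than with $\xi_k=A_k-B_k$, but the content is the same), and your observation that the odd cross moments $E\bigl((m_i^{(1)})^3m_i^{(2)}|\mathcal{F}^{(0)}\bigr)$ and $E\bigl(m_i^{(1)}(m_i^{(2)})^3|\mathcal{F}^{(0)}\bigr)$ contribute only $O_p(1/K_n)$ is exactly what the paper verifies.

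The gap is in the one step you defer: the assertion that the multiplicities of the surviving configurations will ``supply the numerical coefficients'' of the stated polynomial. They will not. In your cluster decomposition the only configurations contributing at order $K_n^2$ are two well-separated equal pairs; there are exactly $3$ pairings of the four slots, each contributing $E(\xi_j^2|\mathcal{F}^{(0)})\,E(\xi_l^2|\mathcal{F}^{(0)})=E(\epsilon^4|\mathcal{F}^{(0)})^2$ by conditional independence, so a correct execution of your plan gives $E(m_i^4|\mathcal{F}^{(0)})=3\,E(\epsilon^4|\mathcal{F}^{(0)})^2+O_p(1/K_n)$ --- the Gaussian fourth moment consistent with the limit $m_i\overset{\mathcal{L}-s}{\longrightarrow}\mathcal{MN}\bigl(0,E(\epsilon^4|\mathcal{F}^{(0)})\bigr)$ invoked in the proof of Theorem \ref{thm3} --- and \emph{not} the stated $6\bigl[E(\epsilon^4|\mathcal{F}^{(0)})^2-E(\epsilon^4|\mathcal{F}^{(0)})E(\epsilon^2|\mathcal{F}^{(0)})^2+E(\epsilon^2|\mathcal{F}^{(0)})^4\bigr]$. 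A sanity check: for Rademacher noise one has $m_i^{(1)}=0$ and a direct computation gives $E(m_i^4|\mathcal{F}^{(0)})=3-2/K_n$, whereas the stated formula gives $6$. The paper's own proof reaches the stated value by attaching the coefficient $6$ to the ordered off-diagonal sums $\sum_k\sum_{j\ne k}$ in $E[(m_i^{(1)})^4|\mathcal{F}^{(0)}]$ and $E[(m_i^{(2)})^4|\mathcal{F}^{(0)}]$, where the correct multinomial coefficient over ordered pairs is $3$; with $u=E(\epsilon^4|\mathcal{F}^{(0)})$ and $v=E(\epsilon^2|\mathcal{F}^{(0)})^2$ the corrected binomial terms recombine as $3(u-v)^2+6(u-v)v+3v^2=3u^2$. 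So you cannot finish the argument as written and land on the stated constant: you should carry the count through explicitly, report $3\,E(\epsilon^4|\mathcal{F}^{(0)})^2$, and note that the discrepancy propagates into the variance constants $\zeta^2$ and $\eta^2$ used in the proof of Theorem \ref{thm3}.
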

\begin{proof}
For the ease of notation, let us suppress the notation $K=K_n$, and denote $\epsilon_{(i-1)K_n+k}$ by $\xi_{i,k}$, and $g_{(i-1)K_n+k}$ by $g_{i,k}$ for each $i\in\{1,2,\cdots,r_n\}$ and $k\in\{0,1,2,\cdots,K\}$. Note that under our new notation
\begin{equation*}
\begin{array}{ll}
     m^{(1)}_i\equiv\frac{1}{\sqrt{K}}\sum^K_{k=1}\left(\xi^2_{i,k}-g_{i,k}\right)&
     m^{(2)}_i\equiv\frac{1}{\sqrt{K}}\sum^K_{k=1}\xi_{i,k-1}\xi_{i,k}
\end{array}
\end{equation*}
and
\begin{equation*}
\begin{array}{ll}
      E\left[\left(m_i^{(1)}\right)^2|\mathcal{F}^{(0)}\right]=&
      \frac{1}{K}\sum^K_{k=1}E\left[\xi^4_{i,k}-g^2_{i,k}|\mathcal{F}^{(0)}\right]
      =E(\epsilon^4|\mathcal{F}^{(0)})-E(\epsilon^2|\mathcal{F}^{(0)})^2\\
      E\left[\left(m_i^{(2)}\right)^2|\mathcal{F}^{(0)}\right]=&
      \frac{1}{K}\sum^K_{k=1}E\left[\xi^2_{i,k-1}\xi^2_{i,k}|\mathcal{F}^{(0)}\right]
      =E(\epsilon^2|\mathcal{F}^{(0)})^2\\
      E\left[m_i^{(1)}m_i^{(2)}|\mathcal{F}^{(0)}\right]
      =&\frac{1}{K}\sum^K_{k=1}\sum^K_{j=1}E\left[\left(\xi^2_{i,k}-g_{i,k}\right)\cdot\xi_{i,j-1}\cdot\xi_{i,j}|\mathcal{F}^{(0)}\right]=0
\end{array}
\end{equation*}
thus $E(m^2_i|\mathcal{F}^{(0)})=E(\epsilon^4|\mathcal{F}^{(0)})$. Note that
\begin{eqnarray*}
	m_i^2&=&\left(m_i^{(1)}\right)^2+\left(m_i^{(2)}\right)^2-2m_i^{(1)}m_i^{(2)}\\
	m_i^4&=&\left(m_i^{(1)}\right)^4-4\left(m_i^{(1)}\right)^3m_i^{(2)}+6\left(m_i^{(1)}\right)^2\left(m_i^{(2)}\right)^2-4m_i^{(1)}\left(m_i^{(2)}\right)^3+\left(m_i^{(2)}\right)^4
\end{eqnarray*}
Some calculation yields
\begin{eqnarray*}
E\left[\left(m^{(1)}_i\right)^4|\mathcal{F}^{(0)}\right]&=&\frac{1}{K^2}\left[\sum_{k=1}^KE\left(\xi^2_{i,k}-g_{i,k}\right)^4+6\sum^K_{k=1}\sum_{j\ne k}E\left(\xi^2_{i,k}-g_{i,k}\right)^2E\left(\xi^2_{i,j}-g_{i,j}\right)^2\right]\\
&=&6\left[E(\epsilon^4|\mathcal{F}^{(0)})-E(\epsilon^2|\mathcal{F}^{(0)})^2\right]^2+O_p\left(\frac{1}{K}\right)
\end{eqnarray*}
\begin{eqnarray*}
E\left[\left(m^{(1)}_i\right)^3m^{(2)}_i|\mathcal{F}^{(0)}\right]&=&\frac{3}{K^2}E\left[\left(\sum^K_{k=1}\sum_{j\ne k}(\xi^2_{i,k}-g_{i,k})^2(\xi^2_{i,j}-g_{i,j})\right)\cdot\left(\sum^K_{j=1}\xi_{i,j-1}\xi_{i,j}\right)|\mathcal{F}^{(0)}\right]\\
&=&\frac{6}{K^2}\sum^K_{k=2}E\left[(\xi^2_{i,k-1}-g_{i,k-1})^2\xi_{i,k-1}\cdot(\xi^2_{i,k}-g_{i,k})\xi_{i,k}|\mathcal{F}^{(0)}\right]
=O_p\left(\frac{1}{K}\right)
\end{eqnarray*}
\begin{multline*}
E\left[\left(m^{(1)}_i\right)^2\left(m^{(2)}_i\right)^2|\mathcal{F}^{(0)}\right]=\underbrace{\frac{1}{K^2}E\left[\sum^K_{k=1}\left(\xi^2_{i,k}-g_{i,k}\right)^2\cdot\left(\sum^K_{j=1}\xi_{i,j-1}\xi_{i,j}\right)^2|\mathcal{F}^{(0)}\right]}_{\text{(m21)}}\\
+\underbrace{\frac{1}{K^2}E\left[\sum^K_{k=1}\sum_{j\ne k}\left(\xi^2_{i,k}-g_{i,k}\right)\left(\xi^2_{i,j}-g_{i,j}\right)\cdot\left(\sum^K_{j=1}\xi_{i,j-1}\xi_{i,j}\right)^2|\mathcal{F}^{(0)}\right]}_{\text{(m22)}}
\end{multline*}
Note that
\begin{multline*}
\text{(m21)}=\frac{1}{K^2}E\left[\sum^K_{k=1}\sum_{j\ne k,k+1}\left(\xi^2_{i,k}-g_{i,k}\right)^2\cdot\xi^2_{i,j-1}\xi^2_{i,j}|\mathcal{F}^{(0)}\right]+\frac{1}{K^2}E\left[\sum^K_{k=1}\left(\xi^2_{i,k}-g_{i,k}\right)^2\cdot\xi^2_{i,k-1}\xi^2_{i,k}|\mathcal{F}^{(0)}\right]\\+\frac{1}{K^2}E\left[\sum^{K-1}_{k=1}\left(\xi^2_{i,k}-g_{i,k}\right)^2\cdot\xi^2_{i,k}\xi^2_{i,k+1}|\mathcal{F}^{(0)}\right]
=\left[E(\epsilon^4|\mathcal{F}^{(0)})-E(\epsilon^2|\mathcal{F}^{(0)})^2\right]E(\epsilon^2|\mathcal{F}^{(0)})^2+O_p\left(\frac{1}{K}\right)
\end{multline*}
and
\begin{eqnarray*}
\text{(m22)}&=&\frac{1}{K^2}E\left[\left(\sum^K_{k=1}\sum_{j\ne k}\left(\xi^2_{i,k}-g_{i,k}\right)\left(\xi^2_{i,j}-g_{i,j}\right)\right)\cdot\left(\sum^K_{j=1}\xi^2_{i,j-1}\xi^2_{i,j}+\sum^{K-1}_{j=1}\xi_{i,j-1}\xi^2_{i,j}\xi_{i,j+1}\right)|\mathcal{F}^{(0)}\right]\\
&=&\frac{1}{K^2}E\left[\sum^K_{k=2}\left(\xi^4_{i,k-1}-\xi^2_{i,k-1}g_{i,k-1}\right)\left(\xi^4_{i,k}-\xi^2_{i,k}g_{i,k}\right)|\mathcal{F}^{(0)}\right]\\
&&+\frac{1}{K^2}E\left[\sum^{K-1}_{k=2}\left(\xi^3_{i,k-1}-\xi_{i,k-1}g_{i,k-1}\right)\cdot\xi^2_{i,k}\cdot\left(\xi^3_{i,k+1}-\xi_{i,k+1}g_{i,k+1}\right)|\mathcal{F}^{(0)}\right]
=O_p\left(\frac{1}{K}\right)
\end{eqnarray*}
so we have
\begin{equation*}
E\left[\left(m^{(1)}_i\right)^2\left(m^{(2)}_i\right)^2|\mathcal{F}^{(0)}\right]=\left[E(\epsilon^4|\mathcal{F}^{(0)})-E(\epsilon^2|\mathcal{F}^{(0)})^2\right]E(\epsilon^2|\mathcal{F}^{(0)})^2+O_p\left(\frac{1}{K}\right)
\end{equation*}
\begin{multline*}
E\left[m^{(1)}_i\left(m^{(2)}_i\right)^3|\mathcal{F}^{(0)}\right]=\frac{3}{K^2}\sum^K_{k=2}E\left[\xi^2_{i,k-2}\xi^3_{i,k-1}\xi^3_{i,k}|\mathcal{F}^{(0)}\right]+\frac{3}{K^2}\sum^{K-2}_{k=1}E\left[\xi^3_{i,k}\xi^3_{i,k+1}\xi^2_{i,k+2}|\mathcal{F}^{(0)}\right]\\+\frac{1}{K^2}\sum^K_{k=1}E\left[\left(\xi^5_{i,k}-\xi^3_{i,k}g_{i,k}\right)\cdot\xi^3_{i,k-1}|\mathcal{F}^{(0)}\right]+\frac{1}{K^2}\sum^{K-1}_{k=1}E\left[\left(\xi^5_{i,k}-\xi^3_{i,k}g_{i,k}\right)\cdot\xi^3_{i,k+1}|\mathcal{F}^{(0)}\right]
=O_p\left(\frac{1}{K}\right)
\end{multline*}
\begin{multline*}
E\left[\left(m^{(2)}_i\right)^4|\mathcal{F}^{(0)}\right]=\frac{1}{K^2}\sum^K_{k=1}E\left(\xi^4_{i,k-1}\xi^4_{i,k}|\mathcal{F}^{(0)}\right)+\frac{6}{K^2}\sum^K_{k=1}\sum_{j\ne k}E\left(\xi^2_{i,k-1}\xi^2_{i,k}\xi^2_{i,j-1}\xi^2_{i,j}|\mathcal{F}^{(0)}\right)\\
=6E(\epsilon^2|\mathcal{F}^{(0)})^4+O_p\left(\frac{1}{K}\right)
\end{multline*}
Thus, from the above calculation, we have
\begin{eqnarray*}
E(m_i^4|\mathcal{F}^{(0)})&=&E\left[\left(m_i^{(1)}\right)^4|\mathcal{F}^{(0)}\right]+6E\left[\left(m_i^{(1)}\right)^2\left(m_i^{(2)}\right)^2|\mathcal{F}^{(0)}\right]+E\left[\left(m_i^{(2)}\right)^4|\mathcal{F}^{(0)}\right]+O_p\left(\frac{1}{K}\right)\\
&=&6\left[E(\epsilon^4|\mathcal{F}^{(0)})^2-E(\epsilon^4|\mathcal{F}^{(0)})E(\epsilon^2|\mathcal{F}^{(0)})^2+E(\epsilon^2|\mathcal{F}^{(0)})^4\right]+O_p\left(\frac{1}{K}\right)
\end{eqnarray*}
\end{proof}

\subsection{Proof of Theorem \ref{thm3}}\label{prfthm3}
\begin{proof}
Under the assumption of this theorem, we know $g_t(\omega^{(0)})=E(\epsilon^2|\mathcal{F}^{(0)})(\omega^{(0)})$ has a constant value. By the proof of \textbf{Theorem \ref{thm2}}, we know $m_i\overset{{\mathcal{L}-s}}{\longrightarrow}\mathcal{MN}\left(0,E(\epsilon^4|\mathcal{F}^{(0)})\right)$ under the null hypothesis. By continuous mapping theorem, $m^2_i\overset{{\mathcal{L}-s}}{\longrightarrow}E(\epsilon^4|\mathcal{F}^{(0)})\cdot\chi^2_1$ where $\chi^2_1$ denotes a centered Chi-square distribution with degree of freedom 1 and independent of $\mathcal{F}^{(1)}$. Note that
\begin{equation*}
U(Y,K_n,s_n,2)^n_T
=\frac{1}{r_n-s_n+1}\left[\sum_{i=1}^{r_n-s_n+1}m^2_i+\sum_{i=1}^{r_n-s_n+1}m^2_{i+s_n-1}+2\sum_{i=1}^{r_n-s_n+1}m_im_{i+s_n-1}\right]+O_p\left(\frac{1}{K_n}\bigvee\frac{K_n}{n}\right)
\end{equation*}
we can write
\begin{equation}\label{plim}
\sqrt{r_n-s_n+1}\left(U(Y,K_n,2)^n_T-2E(\epsilon^4|\mathcal{F}^{(0)})\right)=2H_T^{(1)}+2H_T^{(2)}+R_T+O_p\left(\frac{\sqrt{r_n}}{K_n}\bigvee\frac{1}{\sqrt{r_n}}\right)
\end{equation}
where
\begin{equation*}
\begin{array}{ll}
    H_T^{(1)}&=\frac{1}{\sqrt{r_n-s_n+1}}\sum_{i=1}^{r_n-s_n+1}(m_i-E(m_i^2|\mathcal{F}^{(0)}))\\
    H_T^{(2)}&=\frac{1}{\sqrt{r_n-s_n+1}}\sum_{i=1}^{r_n-s_n+1}m_im_{i-s_n+1}\\
    R_T&=\frac{1}{\sqrt{r_n-s_n+1}}\left[\sum_{i=r_n-s_n+2}^{r_n}(m^2_i-E(\epsilon^4|\mathcal{F}^{(0)}))-\sum_{i=1}^{s_n-1}(m_i^2-E(\epsilon^4|\mathcal{F}^{(0)}))\right]
\end{array}
\end{equation*}
Furthermore, note that on the coarser filtered probability space $\left(\Omega^{(1)},\mathcal{F}^{(1)},\{\mathcal{F}^{(1)}_{t_{(i-1)K_n}}\}_{i\in\mathbb{N}},\mathbb{P}^{(1)}\right)$, $H^{(1)}_t$ and $H^{(2)}_t$ are two discrete martingales, and the increments of $H^{(1)}_t$ and $H^{(2)}_t$, namely
\begin{equation*}
\begin{array}{cc}
    \left\{\frac{1}{\sqrt{r_n-s_n+1}}\left(m^2_i-E(m^2_i|\omega^{(0)}\right)\right\}_{n\in\mathbb{N}^{+},i\in\mathbb{N}^{+}\le r_n-s_n+1} &
    \left\{\frac{1}{\sqrt{r_n-s_n+1}}\left(m_im_{i+s_n-1}\right)\right\}_{n\in\mathbb{N}^{+},i\in\mathbb{N}^{+}\le r_n-s_n+1}
\end{array}
\end{equation*}
are two triangular sequences to which we can apply martingale central limit theorem. By the results of \textbf{Lemma \ref{lemma4}}
\begin{eqnarray*}
\langle H^{(1)},H^{(1)}\rangle_T|\mathcal{F}^{(0)}
&=&\frac{1}{r_n-s_n+1}\sum^{r_n-s_n+1}_{i=1}\left[E\left(m^4_i|\mathcal{F}^{(1)}_{t_{(i-1)K}}\right)-E\left(m^2_i|\mathcal{F}^{(0)}\cup\mathcal{F}^{(1)}_{t_{(i-1)K}}\right)^2\right]|\mathcal{F}^{(0)}\\
&=&5E(\epsilon^4|\mathcal{F}^{(0)})^2-6E(\epsilon^4|\mathcal{F}^{(0)})E(\epsilon^2|\mathcal{F}^{(0)})^2+6E(\epsilon^2|\mathcal{F}^{(0)})^4
\end{eqnarray*}
and
\begin{eqnarray*}
\langle H^{(2)},H^{(2)}\rangle_T|\mathcal{F}^{(0)}
&=&\frac{1}{r_n-s_n+1}\sum^{r_n-s_n+1}_{i=1}E(m^2_i|\mathcal{F}^{(0)})\cdot E\left(m^2_{i+s_n-1}|\mathcal{F}^{(0)}\right)\\
&&+\frac{1}{r_n-s_n+1}\sum^{r_n-s_n+1}_{i=1}\left(m^2_i-E(m^2_i|\mathcal{F}^{(0)})\right)\cdot E\left(m^2_{i+s_n-1}|\mathcal{F}^{(0)}\right)
\end{eqnarray*}
since  $P(\tau_l>T)\overset{\mathbb{P}^{(0)}}{\longrightarrow} 1$ as $l\longrightarrow \infty$ and
\begin{multline*}
E\left(\left(\frac{1}{r_n-s_n+1}\sum^{r_n-s_n+1}_{i=1}\left(m^2_i-E(m^2_i|\mathcal{F}^{(0)})\right)\cdot E\left(m^2_{i+s_n-1}|\mathcal{F}^{(0)}\right)\mathbf{1}_{\{\tau_l>T\}}\right)^2|\mathcal{F}^{(0)}\right)\\
=\frac{1}{(r_n-s_n+1)^2}\sum^{r_n-s_n+1}_{k=1}Var\left(m^2_i-E(m^2_i|\mathcal{F}^{(0)})\right)\left(E\left(m^2_{i+s_n-1}|\mathcal{F}^{(0)}\right)\right)^2\mathbf{1}_{\{\tau_l>T\}}\\
\le\frac{1}{(r_n-s_n+1)^2}\sum^{r_n-s_n+1}_{k=1}M_{(4,l)}\cdot M^2_{(2,l)}=O_p\left(\frac{1}{r_n-s_n+1}\right)
\end{multline*}
by \textbf{Proposition \ref{prop}}, we know $\frac{1}{r_n-s_n+1}\sum^{r_n-s_n+1}_{i=1}\left(m^2_i-E(m^2_i|\mathcal{F}^{(0)})\right)\cdot E\left(m^2_{i+s_n-1}|\mathcal{F}^{(0)}\right)\overset{\mathbb{P}}{\longrightarrow}0
$, thus we have
\begin{equation*}
\langle H^{(2)},H^{(2)}\rangle_T|\mathcal{F}^{(0)}\overset{\mathbb{P}}{\longrightarrow} E(\epsilon^4|\mathcal{F}^{(0)})^2
\end{equation*}
Besides,
\begin{equation*}
\langle H^{(1)},H^{(2)}\rangle_T|\mathcal{F}^{(0)}
=\frac{1}{r_n-s_n+1}\sum^{r_n-s_n+1}_{i=1}(m_i^3-m_iE(m^2_i|\mathcal{F}^{(0)}))\cdot E\left(m_{i+s_n-1}|\mathcal{F}^{(0)}\right)=0
\end{equation*}
Therefore, we have the following joint asymptotic distribution for $H^{(1)}_T$ and $H^{(2)}_T$:
\begin{equation}
\left(\begin{array}{c}H^{(1)}_T\\H^{(2)}_T\end{array}\right)\overset{\mathcal{L}-s}{\longrightarrow}\mathcal{MN}\left(\left(\begin{array}{c}0\\0\end{array}\right),\left(\begin{array}{cc}\zeta^2&0\\0&E(\epsilon^4|\mathcal{F}^{(0)})^2\end{array}\right)\right)
\end{equation}
where $\zeta^2=5E(\epsilon^4|\mathcal{F}^{(0)})^2-6E(\epsilon^4|\mathcal{F}^{(0)})E(\epsilon^2|\mathcal{F}^{(0)})^2+6E(\epsilon^2|\mathcal{F}^{(0)})^4$. Lastly, note that $R_T=o_p(1)$, this is because $P(\tau_l>T)\overset{\mathbb{P}^{(0)}}{\longrightarrow} 1$ as $l\to\infty$, and
\begin{eqnarray*}
E(R^2_T\mathbf{1}_{\{\tau_l>T\}}|\mathcal{F}^{(0)})
&=&\frac{1}{r_n-s_n+1}\left[\sum_{i=1}^{s_n-1}E(m^4_i|\mathcal{F}^{(0)})+\sum_{i=r_n-s_n+2}^{r_n}E(m^4_i|\mathcal{F}^{(0)})-2(s_n-1)E(\epsilon^4|\mathcal{F}^{(0)})^2\right]\\
&=&O_p\left(\frac{s_n}{r_n-s_n+1}\right)=o_p(1)
\end{eqnarray*}

Plug in these results into (\ref{plim}), we can get
\begin{equation*}
\sqrt{r_n-s_n+1}\left(U(Y,K_n,s_n,2)^n_T-2E(\epsilon^4|\mathcal{F}^{(0)})\right)
=2\left(H_T^{(1)}+H_T^{(2)}\right)+o_p(1)\overset{\mathcal{L}-s}{\longrightarrow}\mathcal{MN}\left(0,\eta^2\right)
\end{equation*}
where $\eta^2=24 [E(\epsilon^4|\mathcal{F}^{(0)})^2-E(\epsilon^4|\mathcal{F}^{(0)})E(\epsilon^2|\mathcal{F}^{(0)})^2+E(\epsilon^2|\mathcal{F}^{(0)})^4]$.

According to \textbf{Remark \ref{remk.est.epsilon4}} about the consistent estimator of $E(\epsilon^4|\mathcal{F}^{(0)})$, $\widehat{\eta}^2-\eta^2=O_p(1/\sqrt{n})$ when the noise is stationary due to (\ref{eps4}), as well as $\frac{1}{2n}[Y,Y]_{\mathcal{G}}-E(\epsilon^2|\mathcal{F}^{(0)})=O_p(1/\sqrt{n})$, plus the stable convergence for $U(Y,K_n,s_n,2)^n_{\mathcal{T}}$, (\ref{asym.V}) follows. 


\end{proof}

\subsection{Proof of Theorem \ref{thm5.0} and \ref{thm5}}\label{prfthm56}
\begin{proof}
In this proof, we write $K$ and $r$ without the subscript $n$ in order to avoid clustered notation.
We give the proof for \textbf{Theorem \ref{thm5}} first, and dictate how to modify the proof to prove \textbf{Theorem \ref{thm5.0}}.

\subsubsection{The law of large number: the limit quantity}
Under the assumption of \textbf{Theorem \ref{thm5}}, and from \textbf{Lemma \ref{lem_RV}}, we have
\begin{eqnarray*}
\frac{1}{2K}[Y,Y]_{\mathcal{S}_i}&=&\frac{1}{2K}[\epsilon,\epsilon]_{\mathcal{S}_i}+O_p\left(\frac{1}{K}\right)\\
&=&\frac{1}{K}\sum_{t_j\in(T_{i-1},T_i]}g_{t_j}+\frac{1}{\sqrt{K}}\left(m_i^{(1)}-m_i^{(2)}\right)+O_p\left(\frac{1}{K}\right)
\end{eqnarray*}
where $m^{(1)}_i$ and $m^{(2)}_i$ are defined in (\ref{m_i}) 
which are asymptotically mixing normal. Thus,
\begin{eqnarray*}
\frac{1}{2K}[Y,Y]_{\mathcal{S}_{i+1}}-\frac{1}{2K}[Y,Y]_{\mathcal{S}_i}
&=&\underbrace{\frac{1}{K}\sum^K_{j=1}\sum_{l=1}^K\left(g_{t_{(i-1)K+j+l}}-g_{t_{(i-1)K+j+l-1}}\right)}_{\text{(A)}}+\frac{1}{\sqrt{K}}(m_{i+1}-m_i)+O_p\left(\frac{1}{K}\right)
\end{eqnarray*}
notice that:
\begin{eqnarray*}
\text{(A)}^2&=&\frac{1}{K^2}\left[\sum^K_{j=1}(j-1)\Delta g_{t_{(i-1)K+j}}+\sum^{2K}_{j=K+1}(K-(j-1))\Delta g_{t_{(i-1)K+j}}\right]^2\\
&=&\sum^K_{j=1}\frac{(j-1)^2}{K^2}(\Delta g_{t_{(i-1)K+j}})^2+\sum^{2K}_{j=K+1}\frac{(K-(j-1))^2}{K^2}(\Delta g_{t_{(i-1)K+j}})^2+\text{(\Rmnum{1})}+\text{(\Rmnum{2})}+\text{(\Rmnum{3})}
\end{eqnarray*}
where
\begin{equation}\label{discretizationerror}
\begin{array}{ll}
      \text{(\Rmnum{1})}&=\sum^K_{j=1}\sum_{l\ne j}\frac{(j-1)(l-1)}{K^2}\Delta g_{t_{(i-1)K+j}}\Delta g_{t_{(i-1)K+l}}\\
      \text{(\Rmnum{2})}&=\sum^K_{j=1}\sum_{l\ne j}\frac{(K-(j-1))(K-(l-1))}{K^2}\Delta g_{t_{iK+j}}\Delta g_{t_{iK+l}}\\
      \text{(\Rmnum{3})}&=\sum^K_{j=1}\sum^K_{l=1}\frac{(j-1)(K-(l-1))}{K^2}\Delta g_{t_{(i-1)K+j}}\Delta g_{t_{iK+l}}
\end{array}
\end{equation}
are mean-0 martingales. By standard localization procedure, we can strengthen the condition by assuming $\sigma^{(g)}_t\le \sigma^{(g)}_+$, $\forall t \in [0,\mathcal{T}]$, therefore,
\begin{equation*}
E[\text{(\Rmnum{1})}^2]\le \frac{T^2(\sigma^{(g)}_+)^4}{n^2}\sum^K_{j=1}\sum^K_{l=1}\left[\frac{(j-1)(l-1)}{K^2}\right]^2=\frac{T^2(\sigma^{(g)}_+)^4}{n^2}\cdot \sum^K_{j=1}\frac{(j-1)^2}{K^2}\cdot \sum^K_{j=1}\frac{(l-1)^2}{K^2}=O_p\left(\frac{K^2}{n^2}\right)
\end{equation*}
by Chebyshev inequality, $\text{(\Rmnum{1})}=O_p\left(\frac{K}{n}\right)$. Similarly, $\text{(\Rmnum{2})}$, $\text{(\Rmnum{3})}=O_p\left(\frac{K}{n}\right)$. Furthermore, we can know $\text{(A)}=O_p\left(\sqrt{\frac{K}{n}}\right)$. Thus,
\begin{multline*}
	\sum^{r-1}_{i=1}\left(\frac{1}{2K}[Y,Y]_{\mathcal{S}_{i+1}}-\frac{1}{2K}[Y,Y]_{\mathcal{S}_i}\right)^2=\sum^K_{j=1}\frac{(j-1)^2}{K^2}(\Delta g_{t_j})^2+\sum^K_{j=1}\frac{(K-(j-1))^2}{K^2}(\Delta g_{t_{(r-1)K+j}})^2\\
	+\sum^{r-2}_{i=2}\sum^K_{j=1}\frac{(j-1)^2+(K-(j-1))^2}{K^2}(\Delta g_{t_{(i-1)K+j}})^2+\underbrace{\sum^{r-1}_{i=1}[\text{(\Rmnum{1})}+\text{(\Rmnum{2})}+\text{(\Rmnum{3})}]}_{O_p\left(\frac{\sqrt{r}K}{n}\right)=O_p\left(\frac{1}{\sqrt{r}}\right)}+\underbrace{O_p\left(\frac{r}{K}\right)}_{\text{error due to noises}}
\end{multline*}
note that the error due to noises (of the stochastic order $O_p\left(\frac{r}{K}\right)$) approximately equals to $\frac{2r}{\mathcal{T}K}\int_0^\mathcal{T}h_t\,\mathrm{d}t$ by the proof of \textbf{Theorem \ref{thm3}}. Hence,
\begin{multline*}
	\sum^{r-1}_{i=1}\left(\frac{1}{2K}[Y,Y]_{\mathcal{S}_{i+1}}-\frac{1}{2K}[Y,Y]_{\mathcal{S}_i}\right)^2-\frac{2}{3}\sum^n_{j=1}(\Delta g_{t_j})^2-\frac{2r}{\mathcal{T}K}\int_0^\mathcal{T}h_t\,\mathrm{d}t\\
	=\sum_{i=1}^{r-2}\sum_{j=1}^K\left[\frac{1}{3}-2\frac{K-(j-1)}{K}\frac{j-1}{K}\right](\Delta g_{t_{iK+j}})^2
	+\sum^K_{j=1}\left[\frac{(j-1)^2}{K^2}-\frac{2}{3}\right](\Delta g_{t_j})^2\\+\sum^K_{j=1}\left[\frac{(K-(j-1))^2}{K^2}-\frac{2}{3}\right](\Delta g_{t_{(r-1)K+j}})^2+\text{(E1)}+\text{(E2)}
\end{multline*}
where
\begin{equation*}
\begin{array}{ll}
    \text{(E1)}=\sum^{r-1}_{i=1}[\text{(\Rmnum{1})}+\text{(\Rmnum{2})}+\text{(\Rmnum{3})}]=O_p\left(\frac{1}{\sqrt{r}}\right) & \text{ end points of martingale in } \{g_t\}_{t\in[0,\mathcal{T}]}\\
    \text{(E2)}=\frac{2r}{K}\left(\frac{1}{r}\sum_{i=1}^{r-1}m_i^2-\frac{1}{\mathcal{T}}\int_0^\mathcal{T}h_t\,\mathrm{d}t\right)=O_p\left(\frac{\sqrt{r}}{K}\right) & \text{ error due to noise}
\end{array}
\end{equation*}
The order of (E1) will be analyzed later; $\text{(E2)}$ comes from negligible remaining of microstructure noise and its order is obtained from the proof of \textbf{Theorem \ref{thm3}}:
\begin{equation*}
\frac{K}{r}\cdot\sqrt{r}\text{(E2)}\overset{\mathcal{L}-s}{\longrightarrow}\mathcal{MN}\left(0,\frac{24}{\mathcal{T}}\int_0^\mathcal{T}\left[h_t^2-h_tg^2_t+g^4_t\right]\mathrm{d}t\right)
\end{equation*}
Moreover,
\begin{multline*}
\sum_{i=1}^{r-2}\sum_{j=1}^K\left[\frac{1}{3}-2\frac{K-(j-1)}{K}\frac{j-1}{K}\right](\Delta g_{t_{iK+j}})^2+\sum^K_{j=1}\left[\frac{(j-1)^2}{K^2}-\frac{2}{3}\right](\Delta g_{t_j})^2\\
+\sum^K_{j=1}\left[\frac{(K-(j-1))^2}{K^2}-\frac{2}{3}\right](\Delta g_{t_{(r-1)K+j}})^2=O_p\left(\frac{K}{n}\right)=O_p\left(\frac{1}{r}\right)
\end{multline*}
Because $\sum^n_{j=1}(\Delta g_{t_j})^2-\langle g,g\rangle_T=O_p\left(\frac{1}{\sqrt{n}}\right)$\footnote{The proofs can be found in \cite{jp98,mz06}.}, so
\begin{equation*}\label{LLN_N}
\frac{r}{K}U(Y,K,2)_\mathcal{T}^2-\frac{2}{3}\langle g,g\rangle_\mathcal{T}-\frac{2r}{\mathcal{T}K}\int^\mathcal{T}_0h_t\,\mathrm{d}t=\text{(E1)}+\text{(E2)}+O_p\left(\frac{1}{r}\right)
\end{equation*}

\subsubsection{Decomposition of the discretization error process}\label{prfthm5.2}
Followed from (\ref{discretizationerror}),
if we define the following two quantities:
\begin{eqnarray}\label{def_N}
N^{(1)}_\mathcal{T}&\equiv&2\sqrt{r}\sum_{i=0}^{r-1}\sum^K_{j=2}\Delta g_{t_{iK+j}}\cdot\left[\sum_{l=1}^{j-1}\left(1+2\frac{j-1}{K}\frac{l-1}{K}-\frac{j-1}{K}-\frac{l-1}{K}\right)\Delta g_{t_{iK+l}}\right]\\
N^{(2)}_\mathcal{T}&\equiv&\sqrt{r}\sum_{i=1}^{r-1}\sum^K_{j=1}\left(1-\frac{j-1}{K}\right)\Delta g_{t_{iK+j}}\cdot\left(\sum_{l=1}^K\frac{l-1}{K}\Delta g_{t_{(i-1)K+l}}\right)
\end{eqnarray}
then we have
\begin{equation}\label{rep_dis_error}
\text{(E1)}=\frac{1}{\sqrt{r}}N^{(1)}_\mathcal{T}+\frac{1}{\sqrt{r}}N^{(2)}_\mathcal{T}+\underbrace{O_p\left(\frac{K}{n}\right)}_{\text{the edge in (E1)}}
\end{equation}
furthermore, by (\ref{def_N})
\begin{multline*}
\langle N^{(1)},N^{(2)}\rangle_\mathcal{T}=2r\sum^{r-1}_{i=1}\sum^K_{j=2}\left(1-\frac{j-1}{K}\right)\Delta\langle g,g\rangle_{t_{iK+j}}\times\sum^K_{l=1}\frac{l-1}{K}\Delta g_{t_{(i-1)K+l}}\\
\times\sum^{j-1}_{l=1}\left(1+2\frac{j-1}{K}\frac{l-1}{K}-\frac{j-1}{K}-\frac{l-1}{K}\right)\Delta g_{t_{iK+l}}
\end{multline*}
\begin{multline*}
E\left[\langle N^{(1)},N^{(2)}\rangle_\mathcal{T}^2\right]=4r^2\sum^{r-1}_{i=1}\sum^K_{j=2}\left(1-\frac{j-1}{K}\right)^2\left(\Delta\langle g,g\rangle_{t_{iK+j}}\right)^2\times \left(\sum^K_{l=1}\frac{(l-1)^2}{K^2}E\left[(\Delta g_{t_{(i-1)K+l}})^2\right]\right)\\
\times\left(\sum^{j-1}_{l=1}\left(\frac{(j-1)(l-1)}{K^2}+\frac{(K-(j-1))(K-(l-1))}{K^2}\right)^2E\left[(\Delta g_{t_{iK+l}})^2\right]\right)
\end{multline*}
thus we can know $E\left[\langle N^{(1)},N^{(2)}\rangle_\mathcal{T}^2\right]=O_p\left(\frac{r^3K^3}{n^4}\right)=O_p\left(\frac{1}{n}\right)$, So
\begin{equation}\label{decomp_N}
\langle N^{(1)}+N^{(2)},N^{(1)}+N^{(2)}\rangle_\mathcal{T}=\langle N^{(1)},N^{(1)}\rangle_\mathcal{T}+\langle N^{(2)},N^{(2)}\rangle_\mathcal{T}+O_p\left(\frac{1}{\sqrt{n}}\right)
\end{equation}

\subsubsection{Calculating $\langle N^{(1)},N^{(1)}\rangle_\mathcal{T}$}\label{prfthm5.3}
By (\ref{def_N}),
\begin{eqnarray*}
\langle N^{(1)},N^{(1)}\rangle_\mathcal{T}&=&4r\sum^{r-1}_{i=0}\sum^K_{j=2}\Delta \langle g,g\rangle_{t_{iK+j}}\times\left[\sum^{j-1}_{l=1}\left(\frac{(j-1)(l-1)}{K^2}+\frac{(K-(j-1))(K-(l-1))}{K^2}\right)\Delta g_{t_{iK+l}}\right]^2\\
&=&\text{(A1)}+\text{(A2)}
\end{eqnarray*}
where
\begin{eqnarray*}
\text{(A1)}&=&4r\sum^{r-1}_{i=0}\sum^K_{j=2}\Delta \langle g,g\rangle_{t_{iK+j}}\times\left[\sum^{j-1}_{l=1}\left(\frac{2(j-1)-K}{K^2}(l-1)+\frac{K-(j-1)}{K}\right)^2(\Delta g_{t_{iK+l}})^2\right]\\
&=&4r\sum^{r-1}_{i=0}\sum^K_{j=2}\left(\sigma^{(g)}_{t_{iK+j}}\right)^4\Delta^2_n\times\sum^{j-1}_{l=1}\left[\frac{2(j-1)-K}{K^2}(l-1)+\frac{K-(j-1)}{K}\right]^2+O_p\left(\frac{1}{rK^{1/2}}\right)\\
\end{eqnarray*}
the error term appears because $\sigma^{(g)}$ is an It\^o process, and the error due to the local-consistency approximation for $\sigma^{(g)}$ is of a smaller order than $4r\sum^{r-1}_{i=0}\sum^K_{j=2}\sqrt{j}\Delta_n^3\sum^{j-1}_{l=1}\left[\frac{2(j-1)-K}{K^2}(l-1)+\frac{K-(j-1)}{K}\right]^2=O_p\left(\frac{1}{rK^{1/2}}\right)$, besides
\begin{equation*}
\text{(A2)}=8r\sum^{r-1}_{i=0}\sum^K_{j=3}\Delta \langle g,g\rangle_{t_{iK+j}}\cdot \phi_j
\end{equation*}
where
\begin{equation*}
\phi_j=\sum^{j-1}_{l=2}\sum^{l-1}_{k=1}\left[\frac{2(j-1)-K}{K^2}(l-1)+\frac{K-(j-1)}{K}\right]\cdot\left[\frac{2(j-1)-K}{K^2}(k-1)+\frac{K-(j-1)}{K}\right]\Delta g_{t_{iK+l}}\Delta g_{t_{iK+k}}
\end{equation*}
by Burkholder-Davis-Gundy inequality, $\exists C_1\in \mathbb{R}^{+}$ such that 
\begin{multline*}
	\|\phi_j\|^2_2\le C_1^2\|\langle \phi_j,\phi_j\rangle\|_1\\
\le\frac{\mathcal{T}^2C_1^2\left(\sigma^{(g)}_{+}\right)^2}{n^2}\sum^{j-1}_{l=2}\left[\frac{2(j-1)-K}{K^2}(l-1)+\frac{K-(j-1)}{K}\right]^2\times \sum^{l-1}_{k=1}\left[\frac{2(j-1)-K}{K^2}(k-1)+\frac{K-(j-1)}{K}\right]^2
\end{multline*}
thus, $\|\phi_j\|_2=O_p\left(\frac{j^3}{K^2n}\right)$ and $\sum^K_{j=3}\|\phi_j\|_2^2\le \sum^K_{j=3}C^2_1\|\langle\phi_j,\phi_j\rangle\|_1=O_p\left(\frac{K^3}{n^2}\right)$. Define $\text{(A2)}^\prime\equiv8r\sum^{r-1}_{i=0}\sum^K_{j=3}\left(\sigma^{(g)}_{t_{iK}}\right)^2\Delta_n\phi_j$, and apply Burkholder-Davis-Gundy inequality again, but on $\text{(A2)}'$, we get $\|\text{(A2)}'\|^2_2\le 64r^2C^2_2\sum^{r-1}_{i=0}\sum^K_{j=3}\left(\sigma^{(g)}_+\right)^4\Delta_n^2\times \|\langle\phi_j,\phi_j\rangle\|_1=O_p\left(\frac{r^3K^3}{n^4}\right)=O_p\left(\frac{1}{n}\right)$, so
\begin{equation}\label{A2.1}
\text{(A2)}^\prime=O_p\left(\frac{1}{\sqrt{n}}\right)
\end{equation}
by Cauchy-Schwarz inequality,
\begin{multline}\label{A2.2}
\|\text{(A2)}-\text{(A2)}^\prime\|_1
\le8r^2\Delta(\mathcal{G})\sum^K_{j=3}\left\|\sup_{|t-s|\le K\Delta(\mathcal{G})}\left[\left(\sigma^{(g)}_t\right)^2-\left(\sigma^{(g)}_s\right)^2\right]\right\|_2 \cdot\|\phi_j\|_2\\
\le8r^2K\left(\Delta(\mathcal{G})\right)^{\frac{3}{2}}\left(\sum^K_{j=3}\|\phi_j\|^2_2\right)^{\frac{1}{2}}=O_p\left(\frac{1}{\sqrt{r}}\right)
\end{multline}
from (\ref{A2.1}) and (\ref{A2.2}), we can know $\text{(A2)}=o_p(1)$, and more importantly,
\begin{equation*}
\langle N^{(1)},N^{(1)}\rangle_T=4r\sum^{r-1}_{i=0}\sum^K_{j=2}\left(\sigma^{(g)}_{t_{iK+j}}\right)^4\Delta^2_n\times\underbrace{\sum^{j-1}_{l=1}\left[\frac{2(j-1)-K}{K^2}(l-1)+\frac{K-(j-1)}{K}\right]^2}_{(1)}+o_p(1)
\end{equation*}
notice that $(1)=\frac{4}{3}\frac{j^5}{K^4}-\frac{10}{3}\frac{j^4}{K^3}+\frac{13}{3}\frac{j^3}{K^2}-3\frac{j^2}{K}+j+O(1)$, 
so
\begin{eqnarray*}
\langle N^{(1)},N^{(1)}\rangle_T
&=&4r\sum^{r-1}_{i=0}\sum^K_{j=2}(1)\cdot\left[\left(\sigma^{(g)}_{t_{iK}}\right)^2+O_p\left(\sqrt{K\Delta_n}\right)\right]^2\Delta_n^2+o_p(1)\\
&=&4r\sum^{r-1}_{i=0}\left[\left(\sum^K_{j=2}(1)\right)\times\frac{\Delta_n}{K}\left(\sigma^{(g)}_{t_{iK}}\right)^4K\Delta_n\right]+o_p(1)
\end{eqnarray*}
by Faulhaber's formula, wee know
\begin{equation*}
\sum^K_{j=2}(1)=\left(\frac{4}{3}\times\frac{1}{6}-\frac{10}{3}\times\frac{1}{5}+\frac{13}{3}\times\frac{1}{4}-3\times\frac{1}{3}+\frac{1}{2}\right)K^2+O(K)=\frac{5}{36}K^2+O(K)
\end{equation*}
so
\begin{equation*}
\langle N^{(1)},N^{(1)}\rangle_T=T\sum^{r-1}_{i=1}\left[\frac{5}{9}+O_p\left(\frac{1}{K}\right)\right]\left(\sigma^{(g)}_{t_{iK}}\right)^4K\Delta_n+o_p(1)\longrightarrow \frac{5T}{9}\int_0^T\left(\sigma^{(g)}_t\right)^4\mathrm{d}t
\end{equation*}

\subsubsection{Calculating $\langle N^{(2)},N^{(2)}\rangle_T$}\label{prfthm5.4}
By (\ref{def_N}),
\begin{eqnarray*}
\langle N^{(2)},N^{(2)}\rangle_T&=&r\sum^{r-1}_{i=1}\sum^K_{j=1}\frac{(K-(j-1))^2}{K^2}\Delta\langle g,g\rangle_{t_{iK+j}}\times\left(\sum^K_{l=1}\frac{(l-1)}{K}\Delta g_{t_{(i-1)K+l}}\right)^2\\
&=&\text{(B1)}+\text{(B2)}
\end{eqnarray*}
where
\begin{eqnarray*}
\text{(B1)}&=&r\sum^{r-1}_{i=1}\sum^K_{j=1}\frac{(K-(j-1))^2}{K^2}\Delta\langle g,g\rangle_{t_{iK+j}}\times \sum^K_{l=1}\frac{(l-1)^2}{K^2}(\Delta g_{t_{(i-1)K+l}})^2\\
&=&r\sum^{r-1}_{i=1}\sum^K_{j=1}\frac{(K-(j-1))^2}{K^2}\left(\sigma^{(g)}_{t_{iK+j}}\right)^4\Delta_n^2\times \sum^K_{l=1}\frac{(l-1)^2}{K^2}+O_p\left(\frac{1}{rK^{1/2}}\right)
\end{eqnarray*}
the error term just above comes from the local-constancy approximation on $(\sigma^{(g)})^2$, it is of the stochastic order of $O_p(r\sum^{r-1}_{i=1}\sum^K_{j=1}\frac{(K-(j-1))^2}{K^2}\sqrt{K}\Delta_n^3\times\sum^K_{l=1}\frac{(l-1)^2}{K^2})=O_p\left(\frac{1}{rK^{1/2}}\right)$. Besides,
\begin{equation*}
\text{(B2)}=2r\sum^{r-1}_{i=1}\sum^K_{j=1}\frac{(K-(j-1))^2}{K^2}\Delta \langle g,g\rangle_{t_{iK+j}}\psi_i
\end{equation*}
where
\begin{equation*}
\psi_i=\sum^K_{l=2}\sum^{l-1}_{k=1}\frac{l-1}{K}\frac{k-1}{K}\Delta g_{t_{(i-1)K+l}}\Delta g_{t_{(i-1)K+k}}
\end{equation*}
Apply Burkholder-Davis-Gundy on $\psi_i$, since $(\psi_i)_t\equiv\sum^K_{l=2}\sum^{l-1}_{k=1}\frac{l-1}{K}\frac{k-1}{K}\Delta g_{t_{(i-1)K+k}}\int^{t_{(i-1)K+l}\wedge t}_{t_{(i-1)K+l-1}}\mathrm{d}g_t$ is a continuous martingale by assumption of the \textbf{Theorem \ref{thm5}},
\begin{eqnarray*}
\|\psi_i\|^2_2&\le&D_1^2\|\langle \psi_i,\psi_i\rangle\|_1=D^2_1E\sum^K_{l=2}\frac{(l-1)^2}{K^2}\Delta \langle g,g\rangle_{t_{(i-1)K+l}}\times\left(\sum^{l-1}_{k=1}\frac{k-1}{K}\Delta g_{t_{(i-1)K+k}}\right)^2\\
&\le&D^2_1\left(\sigma^{(g)}_+\right)^4(\Delta(\mathcal{G}))^2\times \sum^K_{l=2}\frac{(l-1)^2)}{K^2}\sum^{l-1}_{k=1}\frac{(k-1)^2}{K^2}=O_p\left(\frac{K^2}{n^2}\right)
\end{eqnarray*}
so $\|\psi_i\|^2_2\le D_1\|\langle \psi_i,\psi_i\rangle\|_1=O_p\left(\frac{1}{r^2}\right)$. Define $\text{(B2)}'\equiv 2r\sum^{r-1}_{i=1}\sum^K_{j=1}\frac{(K-(j-1))^2}{K^2}\left(\sigma^{(g)}_{t_{(i-1)K}}\right)^2\Delta_n\psi_i$, apply Burkholder-Davis-Gundy inequality again on $\text{(B2)}'$,
\begin{eqnarray*}
\|\text{(B2)}^\prime\|_2^2&\le&4r^2D_2^2\sum^{r-1}_{i=1}\sum^K_{j=1}\frac{(K-(j-1))^4}{K^4}\left(\sigma^{(g)}_t\right)^4\Delta_n^2\times \|\langle \psi_i,\psi_i\rangle\|_1\\
&=&O_p\left(\frac{r^2}{n^2}\right)\times \sum^{r-1}_{i=1}\sum^K_{j=1}\frac{(K-(j-1))^4}{K^4}\times O_p\left(\frac{1}{r^2}\right)=O_p\left(\frac{1}{n}\right)
\end{eqnarray*}
therefore,
\begin{equation}\label{B2.1}
\text{(B2)}^\prime=O_p\left(\frac{1}{\sqrt{n}}\right)
\end{equation}
By Cauchy-Schwarz inequality,
\begin{multline}\label{B2.2}
\|\text{(B2)}-\text{(B2)}^\prime\|_1
\le2r\sum^{r-1}_{i=1}\sum^K_{j=1}\frac{(K-(j-1))^2}{K^2}\left\|\Delta\langle g,g\rangle_{t_{iK+j}}-\left(\sigma^{(g)}_{t_{(i-1)K}}\right)^2\Delta_n\right\|_2\cdot\left\|\psi_i\right\|_2\\
\le2r^2K\Delta(\mathcal{G})\cdot\left\|\sup_{|t-s|\le 2K\Delta(\mathcal{G})}\left[\left(\sigma^{(g)}_t\right)^2-\left(\sigma^{(g)}_s\right)^2\right]\right\|_2\cdot\sup_i\left\|\psi_i\right\|_2=O_p\left(\frac{1}{\sqrt{r}}\right)
\end{multline}
combine (\ref{B2.1}) and (\ref{B2.2}), we can get $\text{(B2)}=o_p(1)$. More importantly,
\begin{eqnarray*}
\langle N^{(2)},N^{(2)}\rangle_T
&=&r\sum^{r-1}_{i=1}\sum^K_{j=1}\frac{(K-(j-1))^2}{K^2}\left[\left(\sigma^{(g)}_{t_{iK}}\right)^4+O_p(\sqrt{K\Delta_n})\right]\times\left(\frac{K}{3}+O_p(1)\right)\times\Delta_n^2+o_p(1)\\
&=&r\sum^{r-1}_{i=1}\frac{K^2}{9}\left(\sigma^{(g)}_{t_{iK}}\right)^4\Delta_n^2+o_p(1)
\longrightarrow\frac{T}{9}\int_0^T\left(\sigma^{(g)}_t\right)^4\mathrm{d}t
\end{eqnarray*}

\subsubsection{Proof of the stable convergence}
Based on subsection \ref{prfthm5.2}, \ref{prfthm5.3} and \ref{prfthm5.4},
\begin{equation}\label{QV_E1}
\langle \sqrt{r}\text{(E1)},\sqrt{r}\text{(E1)}\rangle=\langle N^{(1)},N^{(1)}\rangle_T+\langle N^{(2)},N^{(2)}\rangle_T+o_p(1)=\frac{2T}{3}\int^T_0\left(\sigma^{(g)}_t\right)^4\mathrm{d}t+o_p(1)
\end{equation}
Following the similar method as that in the proof of \textbf{Theorem \ref{thm3}}, we know
\begin{equation}\label{QV_E2}
\langle \sqrt{r}\text{(E2)},\sqrt{r}\text{(E2)}\rangle_T=\frac{24r^2}{TK^2}\int_0^T\left[h_t^2-h_tg_t^2+g_t^4\right]\mathrm{d}t+o_p\left(\frac{r^2}{K^2}\right)=O_p\left(\frac{r^2}{K^2}\right)
\end{equation}
We need a technical condition on the filtration $\{\mathcal{F}_t\}_{t\ge0}$ to which all the relevant processes are adapted:
\begin{assum} 
	\textbf{(Condition on the Filtration)} There are Brownian motions $W^{(1)}, W^{(2)}, \cdots, W^{(p)}$ that generate the filtration $\{\mathcal{F}_t\}_{t\ge0}$.
\end{assum}
Consider the normalized error process,
\begin{eqnarray*}
\sqrt{r}\text{(E)}&=&\sqrt{r}\text{(E1)}+\sqrt{r}\text{(E2)}=N^{(1)}_T+N^{(2)}_T+\underbrace{O_p\left(\frac{1}{\sqrt{r}}\right)}_{\text{the edge in }N^{(1)}_T+N^{(2)}_T}+\underbrace{\sqrt{r}\text{(E2)}}_{O_p\left(\frac{r}{K}\right)}\\
&=&2\sqrt{r}\sum^{r-1}_{i=0}\sum^K_{j=2}\Delta g_{t_{iK+j}}\times\sum^{j-1}_{l=1}\left(1+2\frac{j-1}{K}\frac{l-1}{K}-\frac{j-1}{K}-\frac{l-1}{K}\right)\Delta g_{t_{iK+l}}\\
&&+\sqrt{r}\sum^{r-1}_{i=1}\sum^K_{j=1}\frac{K-(j-1)}{K}\Delta g_{t_{iK+j}}\times\sum^K_{l=1}\frac{l-1}{K}\Delta g_{t_{(i-1)K+l}}+O_p\left(\frac{1}{\sqrt{r}}\vee\frac{r}{K}\right)
\end{eqnarray*}
Define
\begin{multline*}
N^n_t=2\sqrt{r}\sum^{r-1}_{i=0}\sum^K_{j=2}\left[\sum^{j-1}_{l=1}\left(1+2\frac{j-1}{K}\frac{l-1}{K}-\frac{j-1}{K}-\frac{l-1}{K}\right)\Delta g_{t_{iK+l}}\right]\cdot\Delta g_{t_{iK+j}\wedge t}\\
+\sqrt{r}\sum^{r-1}_{i=1}\sum^K_{j=1}\left(\sum^K_{l=1}\frac{l-1}{K}\Delta g_{t_{(i-1)K+l}}\right)\cdot\frac{K-(j-1)}{K}\Delta g_{t_{iK+j}\wedge t}
\end{multline*}
then $\sqrt{r}\text{(E)}_t=N^n_T+o_p(1)$. Suppose $t_{iK+j-1}=\max\{t_k, k=0,1,\cdots,n, t_k\le t\}$, then
\begin{multline*}
\mathrm{d}\langle N^n,W^{(i)}\rangle_t=2\sqrt{r}\left[\sum^{j-1}_{l=1}\left(1+2\frac{j-1}{K}\frac{l-1}{K}-\frac{j-1}{K}-\frac{l-1}{K}\right)\Delta g_{t_{iK+l}}\right]\mathrm{d}\langle g,W^{(i)}\rangle_t\\
+\sqrt{r}\left[\sum^K_{l=1}\frac{l-1}{K}\Delta g_{t_{(i-1)K+l}}\cdot\frac{K-(j-1)}{K}\right]\mathrm{d}\langle g,W^{(i)}\rangle_t
\end{multline*}
for $i=1,2,\cdots,p$, by Kunita-Watanabe inequality,
\begin{equation*}
\left|\langle g,W^{(i)}\rangle_{t+h}-\langle g,W^{(i)}\rangle_t\right|\le\sqrt{\langle g,g\rangle_{t+h}-\langle g,g\rangle_t}\cdot\sqrt{\langle W^{(i)},W^{(i)}\rangle_{t+h}-\langle W^{(i)},W^{(i)}\rangle_t}\le
\sigma^{(g)}_+h
\end{equation*}
so $\Delta \langle g,W^{(i)}\rangle_{t_k}\le \sigma^{(g)}_+\Delta(\mathcal{G})$. We have
\begin{multline*}
\langle N^n,W^{(i)}\rangle_T
=\underbrace{2\sqrt{r}\sum^{r-1}_{i=0}\sum^K_{j=2}\left[\sum^{j-1}_{l=1}\frac{(j-1)(l-1)+(K-(j-1))(K-(l-1))}{K^2}\Delta g_{t_{iK+l}}\right]\Delta\langle g,W^{(i)}\rangle_{t_{iK+j}}}_{\text{(NW1)}}\\
+\underbrace{\sqrt{r}\sum^{r-1}_{i=1}\sum^K_{j=1}\left[\frac{K-(j-1)}{K}\sum^K_{l=1}\frac{l-1}{K}\Delta g_{t_{(i-1)K+l}}\right]\Delta\langle g,W^{(i)}\rangle_{t_{iK+J}}}_{\text{(NW2)}}
\end{multline*}
note that
\begin{eqnarray*}
E(\text{NW1})^2&\le&\frac{4\left(\sigma^{(g)}_+\right)^4r}{n^3}\sum^{r-1}_{i=0}\sum^K_{j=2}\sum^{j-1}_{l=1}\left[\frac{2(j-1)-K}{K^2}(l-1)+\frac{K-(j-1)}{K}\right]^2=O_p\left(\frac{r^2K^2}{n^3}\right)=O_p\left(\frac{1}{n}\right)\\
E(\text{NW2})^2&\le&\frac{\left(\sigma^{(g)}_+\right)^4r}{n^3}\sum^{r-1}_{i=1}\sum^K_{j=1}\frac{(K-(j-1))^2}{K^2}\sum^K_{l=1}\frac{(l-1)^2}{K^2}=O_p\left(\frac{r^2K^2}{n^3}\right)=O_p\left(\frac{1}{n}\right)
\end{eqnarray*}
the first equality in the first line follows the calculation of $\langle N^{(1)},N^{(1)}\rangle_T$ in \textbf{Subsection \ref{prfthm5.3}}.

Hence, $\langle N^n,W^{(i)}\rangle_T=O_p\left(\frac{1}{\sqrt{n}}\right)$, combine the result for $\langle N^{(1)},N^{(1)}\rangle_T$ and $\langle N^{(2)},N^{(2)}\rangle_T$, \textbf{Theorem \ref{thm5}} follows.

\subsubsection{The sketchy proof for \textbf{Theorem \ref{thm5.0}}}
The first step in proving \textbf{Theorem \ref{thm5.0}} is to write the error term
$$E_{\mathcal{T}}^n\equiv\sqrt{\frac{n}{K}}\left(\frac{n}{sK^2}U(Y,K,s,2)^n_{\mathcal{T}}-E^{(1)}_n-E^{(2)}_n-E^{(3)}_n\right)$$
as a summation of independent items (as (\ref{rep_dis_error}) did). One intermediary  (key) step is to write
\begin{equation*}
	E_{\mathcal{T}}^n=A_{\mathcal{T}}^n+B_{\mathcal{T}}^n+(I)_{\mathcal{T}}^n+(II)_{\mathcal{T}}^n+(III)_{\mathcal{T}}^n+(IV)_{\mathcal{T}}^n+o_p(1)
\end{equation*}
where
\begin{eqnarray*}
	A_{\mathcal{T}}^n&=&\frac{1}{s}\left(\frac{n^{1/2}}{K}\right)^3\left(\frac{1}{r-s+1}\sum_{i=1}^{r-s+1}(m_i+m_{i+s})^2-\frac{2}{\mathcal{T}}\int_{0}^{\mathcal{T}}h_t\,\mathrm{d}t\right)\\
	B_{\mathcal{T}}^n&=&\sum_{i=1}^{r-s+1}\sum_{l=1}^{K}\frac{(l-1)^2}{K^2}(\Delta g_{t_{(i-1)K+l}})^2+\sum_{i=1}^{r-s+1}\sum_{l=1}^{K}\frac{(K-(l-1))^2}{K^2}(\Delta g_{t_{(i+s-2)K+l}})^2\\
	&&+\sum_{i=1}^{r-s+1}\sum_{j=1}^{s-2}\sum_{l=1}^{K}\left[2\frac{j-1}{s-2}+\left(2\frac{l-1}{(s-2)K}-1\right)\right]^2(\Delta g_{t_{(i+j-1)K+l}})^2-E^{(1)}_n-E^{(2)}_n
\end{eqnarray*}
besides, $(I)^n_\mathcal{T}=\sum_{i=1}^{r-s+1}(I)^n_i$, $(II)^n_\mathcal{T}=\sum_{i=1}^{r-s+1}(II)^n_i$, $(III)^n_\mathcal{T}=\sum_{i=1}^{r-s+1}(III)^n_i$, and  $(IV)^n_\mathcal{T}=\sum_{i=1}^{r-s+1}(IV)^n_i$ where
\begin{eqnarray*}
	(I)_i^n&=&\frac{2r^{1/2}}{s}\sum_{l=2}^{K}\frac{l-1}{K}\Delta g_{t_{(i-1)K+l}}\times\sum_{h=1}^{l-1}\frac{h-1}{K}\Delta g_{t_{(i-1)K+h}}\\
	&&\hspace{10mm}+\frac{2r^{1/2}}{s}\sum_{l=2}^{K}\frac{K-(l-1)}{K}\Delta g_{t_{(i+s-2)K+l}}\times\sum_{h=1}^{l-1}\frac{K-(h-1)}{K}\Delta g_{t_{(i+s-2)K+h}}\\
	&&\hspace{20mm}-\frac{r^{1/2}}{s}\sum_{l=1}^{K}\frac{K-(l-1)}{K}\Delta g_{t_{(i+s-2)K+l}}\times\sum_{l=1}^{K}\frac{l-1}{K}\Delta g_{t_{(i-1)K+l}}\\
	(II)_i^n&=&\frac{r^{1/2}}{s}\sum_{l=1}^{K}\frac{K-(l-1)}{K}\Delta g_{t_{(i+s-2)K+l}}\times\sum_{j=1}^{s-2}\sum_{l=1}^{K}\left(\frac{2(l-1)}{(s-2)K}+\frac{2(j-1)}{s-2}-1\right)\Delta g_{t_{(i+j-1)K+l}}\\
	&&\hspace{10mm}-\frac{r^{1/2}}{s}\sum_{j=1}^{s-2}\sum_{l=1}^{K}\left(\frac{2(l-1)}{(s-2)K}+\frac{2(j-1)}{s-2}-1\right)\Delta g_{t_{(i+j-1)K+l}}\times\sum_{l=1}^{K}\frac{l-1}{K}\Delta g_{t_{(i-1)K+l}}\\
	(III)_i^n&=&\frac{2r^{1/2}}{s}\sum_{j=1}^{s-2}\sum_{l=2}^{K}\left(\frac{2(l-1)}{(s-2)K}+\frac{2(j-1)}{s-2}-1\right)\Delta g_{t_{(i+j-1)K+l}}\\
	&&\hspace{50mm}\times\sum_{h=1}^{l-1}\left(\frac{2(h-1)}{(s-2)K}+\frac{2(j-1)}{s-2}-1\right)\Delta g_{t_{(i+j-1)K+h}}\\
	(IV)_i^n&=&\frac{2r^{1/2}}{s}\sum_{j=2}^{s-2}\sum_{l=1}^{K}\left(\frac{2(l-1)}{(s-2)K}+\frac{2(j-1)}{s-2}-1\right)\Delta g_{t_{(i+j-1)K+l}}\\
	&&\hspace{50mm}\times\sum_{k=1}^{j-1}\sum_{l=1}^{K}\left(\frac{2(l-1)}{(s-2)K}+\frac{2(k-1)}{s-2}-1\right)\Delta g_{t_{(i+k-1)K+l}}
\end{eqnarray*}

The next task is to show the joint asymptotics of $A_\mathcal{T}^n$, $B_\mathcal{T}^n$, $(I)_\mathcal{T}^n$, $(II)_\mathcal{T}^n$, $(III)_\mathcal{T}^n$ and $(IV)_\mathcal{T}^n$ (which is a similar task of subsection \ref{prfthm5.3} and \ref{prfthm5.4}). By \textbf{Theorem \ref{thm3}}, $A_\mathcal{T}^n=O_p\left(\frac{1}{s}\cdot\frac{n}{K^2}\right)$. Some calculation (omitted here) yields $B_\mathcal{T}^n=O_p(s/r)$, $(I)_\mathcal{T}^n=O_p(1/s)$, $(II)_\mathcal{T}^n=O_p(1/s^{1/2})$, $(III)_\mathcal{T}^n=O_p(1/s^{1/2})$, and 
$$(IV)_\mathcal{T}^n\overset{\mathcal{L}-s}{\longrightarrow}\mathcal{MN}\left(0,\frac{2\mathcal{T}}{9}\int_{0}^{\mathcal{T}}(\sigma_t^{(g)})^4\,\mathrm{d}t\right)$$
hence \textbf{Theorem \ref{thm5.0}} follows.
\end{proof}

\subsection{Proof of Lemma \ref{lemmareg}}\label{prflemreg}
\begin{proof}
	Followed from (\ref{lin1}), $\widehat{g}_t+(g_t-\widehat{g}_t)=\widehat{\beta}_m\widehat{\sigma}_t^2+\widehat{\alpha}_m+(\beta\sigma_t^2-\widehat{\beta}_m\widehat{\sigma}_t^2)+(\alpha-\widehat{\alpha}_m)+\zeta_t$, so we have
	\begin{eqnarray}\label{regeta}
	\eta^{(m)}_t&=&\widehat{g}_t-\widehat{\beta}_m\widehat{\sigma}^2_t-\widehat{\alpha}_m\nonumber\\
	&=&(\widehat{g}_t-g_t)+\beta(\sigma_t^2-\widehat{\sigma}_t^2)+\beta\widehat{\sigma}_t^2-\widehat{\beta}_m\widehat{\sigma}_t^2+(\alpha-\widehat{\alpha}_m)+\zeta_t
	\end{eqnarray}	
	Note that $\widehat{\sigma}^2_t\overset{\mathbb{P}}{\to}\sigma^2_t$ and $\widehat{g}_t\overset{\mathbb{P}}{\to}g_t$, thus plugging (\ref{regeta}) into (\ref{lin2}), we get
	\begin{equation*}
	\widehat{g}_t=\beta\widehat{\sigma}_t^2+\alpha+\zeta_t+o_p(1)
	\end{equation*}
	so the estimates obtained from linear regression on the pairs $(\widehat{\sigma}_t^2,\widehat{g}_t)$'s are consistent, i.e., $\widehat{\beta}_n$ converges to $\beta$ and $\widehat{\alpha}_n$ converges to $\alpha$ in the in-fill asymptotic setting, provided (\ref{lin1}) holds.
\end{proof}


\newpage
\bibliographystyle{apa}
\bibliography{Reference}
\end{document}